\newcommand{\OL}{\mathsf{OL}}
\newcommand{\linf}[1]{\|#1\|_{\infty}}
\newcommand{\lp}[1]{\|#1\|_{p}}
\renewcommand{\epsilon}{\varepsilon}
\newcommand{\Proj}{\mathbb{P}}
\title{High-Dimensional Geometric Streaming for Nearly Low Rank Data}
\date{}
\author{Hossein Esfandiari\footnote{Google Research. \texttt{esfandiari@google.com}.} \and  Praneeth Kacham\footnote{Google Research. \texttt{pkacham@google.com}. Work done while the author was a student at Carnegie Mellon University.} \and Vahab Mirrokni\footnote{Google Research. \texttt{mirrokni@google.com}.} \and David P. Woodruff\footnote{Carnegie Mellon University. \texttt{dwoodruf@cs.cmu.edu}.} \and Peilin Zhong\footnote{Google Research. \texttt{peilinz@google.com}.}}
\begin{document}
\maketitle
\begin{abstract}
We study streaming algorithms for the $\ell_p$ subspace approximation problem. Given points $a_1, \ldots, a_n$ as an insertion-only stream and a rank parameter $k$, the $\ell_p$ subspace approximation problem is to find a $k$-dimensional subspace $V$ such that $(\sum_{i=1}^n d(a_i, V)^p)^{1/p}$ is minimized, where $d(a, V)$ denotes the Euclidean distance between $a$ and $V$ defined as $\min_{v \in V}\opnorm{a - v}$. When $p = \infty$, we need to find a subspace $V$ that minimizes $\max_i d(a_i, V)$.  For $\ell_{\infty}$ subspace approximation, we give a deterministic strong coreset construction algorithm and show that it can be used to compute a $\poly(k, \log n)$ approximate solution. We show that the distortion obtained by our coreset is nearly tight for any sublinear space algorithm. For $\ell_p$ subspace approximation, we show that suitably scaling the points and then using our $\ell_{\infty}$ coreset construction, we can compute a $\poly(k, \log n)$ approximation. Our algorithms are easy to implement and run very fast on large datasets. We also use our strong coreset construction to improve the results in a recent work of Woodruff and Yasuda (FOCS 2022) which gives streaming algorithms for high-dimensional geometric problems such as width estimation, convex hull estimation, and volume estimation. 

\end{abstract}

\section{Introduction}
Modern datasets are usually very high-dimensional and have a large number of data points. Storing the entire dataset to analyze them is often impractical and in certain settings impossible. In recent years, streaming algorithms have emerged as a way to process and understand the datasets in both a space and time-efficient manner. In a single-pass streaming setting, an algorithm is allowed to make only a single pass over the entire dataset and is required to output a ``summary'' of the dataset that is useful to solve a certain problem. In this work, we focus on streaming algorithms for high-dimensional geometric problems such as subspace approximation, width estimation, etc. Suppose we are given a set of $d$-dimensional points $a_1,\ldots,a_n$ and an integer parameter $k \le d$. Given a subspace $V$, we define $d(a, V)$ to be distance between the point $a$ and subspace $V$ given by $\min_{v \in V}\opnorm{a-v}$. The $\ell_p$ subspace approximation problem \citep{deshpande2011subspace}, for $p \in [1, \infty]$, asks to find a $k$-dimensional subspace that minimizes $(\sum_{i=1}^n d(a_i, V)^p)^{1/p}.$

Note that for $p = \infty$, we want to find a $k$-dimensional subspace that minimizes the maximum distance from the given set of points. Related to the $\ell_{\infty}$ subspace approximation problem is the widely studied outer $(d-k)$ radius estimation problem \citep{varadarajan2007radii} which instead asks for a $k$-dimensional flat\footnote{A $k$-dimensional flat is defined as a $k$ dimensional subspace that is translated by some $c$.} $F$ that minimizes $\max_{i \in [n]}d(a_i, F)$. The outer $(d-k)$ radius is a measure of how far the point set is from being inside a $k$-dimensional flat. \citet{varadarajan2007radii} give a polynomial time algorithm for approximating the outer $(d-k)$ radius up to an $O(\sqrt{\log n})$ multiplicative factor. Their algorithm is based on rounding of  a semidefinite program (SDP) relaxation. When $n$ is very large, their algorithm is not practical and cannot be implemented in the streaming setting. We give a time and space-efficient single pass streaming algorithm that approximates the outer $(d-k)$ radius up to an $\tilde{O}(\sqrt{k}\log(n\kappa))$ factor, where $\kappa$ is a suitably defined condition number. Typically, the value of $k$ used is much smaller than $n$ and $d$ since in many settings, we have that the $n \times d$ matrix $A$ is a noisy version of an underlying rank $k$ matrix, for a small value of $k$.

Our main contribution is a simple \textbf{\emph{deterministic}} algorithm that constructs a \emph{strong coreset} for approximating $\max_i d(a_i, V)$ for any $k$-dimensional subspace $V$ in a single-pass streaming setting. We note that this notion of strong coreset is different from the strong/weak coreset definitions in some computational geometry works. When run on the stream of points $a_1,\ldots, a_n$, our algorithm selects a subset $S \subseteq [n]$ of points with $|S| = O(k\log^2(n\kappa))$, such that for all $k$-dimensional subspaces $V$,
$
    \max_{i \in S} d(a_i, V) \le \max_{i \in [n]}d(a_i, V) \le O(\sqrt{k}\log(n\kappa))\max_{i \in S}d(a_i, V).
$
We stress that our coreset can be used to approximate the max distance of the point set to \emph{any} $k$-dimensional subspace and hence it is termed a strong coreset.
We prove:
\begin{theorem}[Informal]
   Given a parameter $k$ and $n$  points $a_1,\ldots,a_n \in \R^d$, Algorithm~\ref{alg:efficient} selects a subset $S \subseteq [n]$ of points with $|S| = O(k\log^2 n\kappa)$, such that for all $k$-dimensional subspaces $V$,
   \begin{align*}
       1 \le \frac{\max_{i \in [n]}d(a_i, V)}{\max_{i \in S}d(a_i, V)} \le O(\sqrt{k}\log n\kappa).
   \end{align*}
   The streaming algorithm requires only enough space to store $O(k\log^{2} n\kappa)$ rows of $A$ and can be implemented in time $O(\nnz(A)\log n + d\poly(k, \log n\kappa))$ if one is allowed randomization. 
   \label{thm:intro-theorem}
\end{theorem}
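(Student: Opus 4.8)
The plan is to build the coreset via a recursive/hierarchical "merge-and-reduce"-style construction, but where the reduction step is a deterministic subspace-approximation subroutine rather than random sampling. The core primitive I would isolate is this: given a batch of $m$ points, deterministically select $O(k)$ of them whose span $(c\sqrt{k})$-approximately captures the distance of every point in the batch to every $k$-dimensional subspace $V$. A natural candidate is a greedy / volume-maximizing selection (pick the point of largest norm, project the rest off it, repeat $k+1$ times, or more precisely iterate this to get a "well-conditioned" subset of size $O(k)$), reminiscent of strong rank-revealing QR or the deterministic Löwner–John-type selection used in $\ell_\infty$ regression. The key lemma to establish is that if $T$ is such a selected subset of a batch $B$, then for every $k$-dimensional $V$, $\max_{i\in B} d(a_i,V) \le O(\sqrt k)\,\max_{i\in T} d(a_i,V)$; the $\sqrt k$ loss comes from the fact that a point $a_i\in B$ is a bounded (in an appropriate weighted $\ell_2$ sense) combination of the $O(k)$ selected points, and distance-to-$V$ is $1$-Lipschitz, so the error accumulates like the $\ell_1$-norm of the combination coefficients, which is $\le \sqrt k$ times their $\ell_2$-norm.

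**The streaming implementation and the $\log^2$ factor.**

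To run this in a stream with only $O(k\log^2(n\kappa))$ rows of storage, I would organize the selected points into $O(\log(n\kappa))$ "levels": level-$0$ holds freshly arrived points; whenever a level accumulates enough points (say $\Theta(k\log(n\kappa))$) we run the deterministic selection primitive on it, keep the $O(k)$ survivors, and promote them to the next level. Since distances compose multiplicatively across levels and there are $O(\log(n\kappa))$ levels, the total distortion is $O(\sqrt k)^{?}$—and here is the subtlety I would need to handle carefully: a naive $O(\log(n\kappa))$-fold composition would give distortion $(\sqrt k)^{\Theta(\log(n\kappa))}$, which is far too large. The fix (and the reason the claimed bound is only $O(\sqrt k\log(n\kappa))$, not exponential) must be that the selection primitive only loses a $\sqrt k$ factor \emph{once} in norm and otherwise just prunes near-duplicates: more precisely, I would aim to show that across all levels, each surviving point is a combination of original points with coefficient $\ell_1$-norm bounded by $O(\sqrt k\log(n\kappa))$ total, not $(\sqrt k)^{\#\text{levels}}$. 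Achieving this likely requires that the selection at each level keeps a point only if it is "$\kappa$-far" in the relevant sense from the span of the already-kept points, so that the geometric sequence of scales that can appear is bounded by $\log(n\kappa)$ levels, and within each scale the coefficient blowup is $O(\sqrt k)$ additively across scales. I expect reconciling the recursion depth with the additive (rather than multiplicative) error accumulation to be the main obstacle.

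**Lower side, size bound, and running time.**

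The lower inequality $\max_{i\in S}d(a_i,V)\le \max_{i\in[n]}d(a_i,V)$ is immediate since $S\subseteq[n]$. The size bound $|S| = O(k\log^2(n\kappa))$ follows from the level structure: $O(\log(n\kappa))$ levels, each holding at most $O(k\log(n\kappa))$ points before it flushes. For the running time, the deterministic primitive on a batch of $O(k\log(n\kappa))$ rows costs $O(d\cdot\poly(k,\log(n\kappa)))$ per flush, and there are $O(n/(k\log(n\kappa)))$ flushes, giving the $d\,\poly(k,\log(n\kappa))$ additive term; the $O(\nnz(A)\log n)$ term comes from reading the input and, when randomization is allowed, replacing the exact $O(k)$-selection on each batch by a fast sketch-based approximate selection (e.g.\ a dimension-reducing sparse embedding applied to each batch, costing $O(\nnz(\text{batch})\log n)$), which preserves distances up to constant factors and hence only affects the hidden constants. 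I would also need to verify that applying a single oblivious sketch across all batches (rather than a fresh one per batch) is fine, since the $O(\log n)$ total number of distinct $k$-subspaces implicitly queried by the analysis can be union-bounded; this is where the $\log n$ in the first time term originates.
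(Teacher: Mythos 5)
Your proposal takes a genuinely different route from the paper, but it has a real gap that you yourself flag and do not resolve: the compounding of distortion across merge-and-reduce levels.

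The paper does \emph{not} use a hierarchical merge-and-reduce structure at all. Algorithm~\ref{alg:efficient} is a one-shot online selection rule: it maintains a single set $S$ and adds the new row $a_t$ to $S$ exactly when the \emph{online rank-$k$ ridge leverage score} of $a_t$ with respect to $A_S$ (with ridge parameter $\lambda = \frnorm{A_S - [A_S]_k}^2/k$) is at least $1/(1+1/k)$. The distortion bound then follows from a single global argument, not from composing level-by-level guarantees: the paper shows (Lemma~\ref{lma:large-implies-adds}) that whenever there exists a $k$-dimensional $V$ with $d(a_{t+1},V)^2 \ge \sum_{i\in S_t}d(a_i,V)^2$, the leverage-score test fires and $a_{t+1}$ is added. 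Consequently, for any $V$, if the farthest point $a_{i^*}$ was \emph{not} added, then $d(a_{i^*},V)^2 \le \sum_{i\in S}d(a_i,V)^2 \le |S|\max_{i\in S}d(a_i,V)^2$, giving distortion $\sqrt{|S|}$ directly. The size bound $|S|=O(k\log^2(n\kappa))$ comes from Lemma~\ref{lma:sum-of-ridge-leverage-scores}, an upper bound on the sum of online rank-$k$ ridge leverage scores in terms of $k$ and the online condition number, combined with the fact that every row kept in $A_S$ has leverage score $\ge 1/(1+1/k)$ with respect to $A_S$. There are no levels, no batch reductions, and no composition of approximation factors.

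The gap in your approach is precisely what you call ``the main obstacle.'' With a generic $O(\sqrt{k})$-distortion batch reducer, merge-and-reduce over $O(\log(n\kappa))$ levels composes \emph{multiplicatively}: a $\beta_1$-coreset for a $\beta_2$-coreset is a $\beta_1\beta_2$-coreset, so you would end up with distortion $(\sqrt{k})^{\Theta(\log(n\kappa))}$. Your proposed fix --- arguing that the total coefficient $\ell_1$-mass of each survivor as a combination of original points stays $O(\sqrt{k}\log(n\kappa))$ rather than growing geometrically --- is not established by any concrete selection rule or lemma, and it is not at all clear that a volume-maximizing / rank-revealing selection has this property; in fact distance-to-subspace is a pointwise $\ell_\infty$ quantity, not a linear functional, so ``each survivor is a bounded combination of originals'' does not by itself control $\max_i d(a_i,V)$ for points that were discarded at intermediate levels. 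To repair this you would need an invariant comparable in strength to what the online ridge leverage-score test provides, namely that a discarded point's distance to \emph{every} $k$-dimensional $V$ is controlled by the Frobenius distance of the \emph{final} surviving set to $V$ --- at which point you have essentially rediscovered the paper's single-pass argument and the merge-and-reduce scaffolding becomes unnecessary. As written, the proposal's distortion bound does not follow.
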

In this result and its applications throughout the paper, the condition number $\kappa$ can be replaced with $n^{O(k)}$ assuming that all the entries in the input points are integers bounded in absolute value by $\poly(n)$. We note that some assumption on bit complexity is necessary in order to establish memory bounds in a streaming setting. Under suitable assumptions about the ``noise'' in the process generating the data, $\kappa$ can be much smaller than $n^{O(k)}$. 

We then show using a simple reduction that the above theorem can be used to approximate the outer $(d-k)$ radius by running the streaming algorithm on the point set $a_2 - a_1, \ldots, a_n - a_1$.

We also prove the following lower bound showing that our coreset obtains near-optimal distortion up to logarithmic factors in $n$ and $\kappa$.
\begin{theorem}[Informal]
Given parameters $n$, $d$ and $k$ with $k = \Omega(\log n)$, any streaming algorithm that computes a strong coreset with distortion at most $O(\sqrt{k/\log n})$ with probability $\ge 9/10$ must use $\Omega(n)$ bits of space.
\end{theorem}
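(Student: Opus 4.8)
The plan is to prove the lower bound by reduction from the Index problem: Alice holds $x\in\{0,1\}^n$, Bob holds an index $j$, Bob must output $x_j$, and the one-way randomized communication complexity of this problem is $\Omega(n)$ even for constant error probability. The geometric content is a \emph{gadget}: a family of $n$ nearly orthogonal unit vectors $v_1,\dots,v_n\in\R^{k+1}$ together with $n$ query subspaces $V_j:=v_j^{\perp}$, each of which is exactly $k$-dimensional. Alice will turn $x$ into a stream so that the coreset's estimate of $\max_i d(a_i,V_j)$ is a factor $\gg 1/\eta$ smaller when $x_j=0$ than when $x_j=1$, where $\eta$ is the coherence of the gadget; a coreset with distortion below $1/\eta$ therefore lets Bob recover $x_j$, forcing $\Omega(n)$ memory.

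For the gadget I would take $v_1,\dots,v_n$ i.i.d.\ uniform on the sphere $S^k\subseteq\R^{k+1}$ and use the standard tail bound $\Pr[\,|\langle v_i,v_{i'}\rangle|\ge t\,]\le e^{-\Omega(kt^2)}$ with a union bound over the $\binom n2$ pairs. Since $k=\Omega(\log n)$ (with a large enough hidden constant), this succeeds with positive probability, so there exist fixed vectors with coherence $\eta:=\max_{i\ne i'}|\langle v_i,v_{i'}\rangle|\le c_0\sqrt{(\log n)/k}$ for an absolute constant $c_0$. These vectors are public (known to both players). If the ambient dimension $d$ in the theorem exceeds $k+1$, we simply place the gadget in the first $k+1$ coordinates of $\R^d$; the query subspaces stay $k$-dimensional.

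The reduction: given $x$, Alice sets $a_i=v_i$ if $x_i=1$ and $a_i=0$ (or $v_i/n$, see below) if $x_i=0$, runs the purported streaming coreset algorithm on $a_1,\dots,a_n$, and sends its memory state (from which the coreset $S$ and hence the estimate of any $\max_i d(a_i,V)$ can be read off) to Bob. Two elementary identities drive everything: $d(v_i,V_j)=|\langle v_i,v_j\rangle|\le\eta$ for $i\ne j$ and $d(0,V_j)=0$, while $d(v_j,V_j)=\|v_j\|=1$. Hence if $x_j=0$ then $\max_{i\in[n]}d(a_i,V_j)\le\eta$, so the coreset estimate $M:=\max_{i\in S}d(a_i,V_j)$ satisfies $M\le\eta$; and if $x_j=1$ then $\max_{i\in[n]}d(a_i,V_j)=1$, so the distortion-$D$ guarantee forces $M\ge 1/D$. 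Whenever $D<1/\eta$, i.e.\ $D=O(\sqrt{k/\log n})$ with a sufficiently small constant, Bob thresholds $M$ at $(\eta+1/D)/2$ and outputs $x_j$ correctly whenever the coreset is valid, which happens with probability $\ge 9/10>2/3$. This gives a one-way protocol for Index with error $<1/3$ whose message length equals the algorithm's space, so the space is $\Omega(n)$.

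The one place where parameters are actually forced — the real obstacle rather than routine bookkeeping — is the ambient dimension. The query must be a single $k$-dimensional subspace that simultaneously ``explains'' all $n-1$ vectors $v_i$ ($i\ne j$) to within the allowed distortion while staying far from $v_j$; taking $V_j=v_j^{\perp}$ makes this automatic only because $d=k+1$, so that $v_j^{\perp}$ is itself $k$-dimensional, and this is exactly why we need $n$ near-orthogonal unit vectors in $\R^{k+1}$, which exist only once $k=\Omega(\log n)$. Everything else (the spherical tail bound, the two distance identities, pushing the coreset guarantee through the threshold test, and invoking the $\Omega(n)$ lower bound for Index) is routine. A minor point for the write-up: if one wants all stream points nonzero or in general position, replace the ``$x_i=0$'' point $0$ by $v_i/n$, which perturbs every distance above by at most an additive $1/n$ and hence changes none of the conclusions.
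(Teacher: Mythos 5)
Your proposal is correct and takes essentially the same approach as the paper: a random gadget of $n$ near-orthogonal vectors with pairwise coherence $\approx\sqrt{(\log n)/k}$, query subspaces orthogonal to a single gadget vector so that one point sits at distance $\|v_j\|$ while all others sit within the coherence, and an information-theoretic lower bound from recovering which gadget vectors were streamed. The paper uses random $\pm1$ vectors supported on the first $k$ coordinates, queries $V_i=\operatorname{span}(e_1,\dots,e_k)\cap a_i^{\perp}$, and a Fano/mutual-information argument encoding a size-$n$ subset of $[2n]$, which is interchangeable with your spherical vectors in $\R^{k+1}$ and one-way \textsf{Index} reduction.
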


We then turn to the $\ell_p$ subspace approximation problem for general $p \in [1,\infty)$. We observe that an instance of the $\ell_p$ subspace approximation problem can be turned into an $\ell_{\infty}$ subspace approximation problem by using the so-called min-stability property of exponential random variables. We scale each input point with appropriately chosen independent random variables and feed the scaled points to Algorithm~\ref{alg:efficient}. We obtain the following result: 
\begin{theorem}[Informal]
    Given $p \ge 1$, a dimension parameter $k$, and $n$ points $a_1,\ldots,a_n \in \R^d$, there is a randomized streaming algorithm that selects a subset $S \subseteq [n]$, $|S| = O(k \log^2 n\kappa)$ and assigns a weight $w_i \ge 0$ for $i \in S$ such that if
    \[
        \tilde{V} = \argmin_{k\text{-dim V}}\max_{i \in S}w_i \cdot d(a_i, V),
    \]
    then 
    {\[\frac{(\sum_{i=1}^n d(a_i, \tilde{V})^p)^{1/p}}{\min_{k\text{-dim V}}(\sum_{i=1}^n d(a_i, V)^p)^{1/p}} \le k^{1/2+2/p}\poly(\log^{1+3/p} n\kappa).\]}
    The algorithm only uses $O(d \cdot k \log^2 n\kappa)$ bits of space and runs in $O(\nnz(A)\log n + d\poly(k, \log n))$ time.
\end{theorem}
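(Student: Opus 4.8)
The plan is to reduce the $\ell_p$ subspace approximation problem to an $\ell_\infty$ instance via the min-stability property of exponential random variables, and then invoke the deterministic $\ell_\infty$ coreset guarantee (Theorem~\ref{thm:intro-theorem}) as a black box. Concretely, I would draw i.i.d.\ standard exponential random variables $e_1, \ldots, e_n$ and set $w_i = e_i^{-1/p}$, feeding the scaled points $w_i a_i$ into Algorithm~\ref{alg:efficient}. The key probabilistic fact is that for any fixed $k$-dimensional subspace $V$, the random variables $d(a_i, V)^p / e_i$ have the property that $\max_i d(a_i, V)^p/e_i$ is distributed as $(\sum_i d(a_i, V)^p)/E$ where $E$ is a single standard exponential; hence $\max_i w_i\, d(a_i, V)$ is, up to a factor of $\Theta(E^{-1/p})$, equal to $(\sum_i d(a_i,V)^p)^{1/p}$. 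I would then argue that $E^{-1/p} \in [\poly(1/(n\kappa)), \poly(n\kappa)]$ with high probability (a union bound over the relevant events plus a net argument over subspaces, or more carefully a direct argument avoiding nets), so that for \emph{every} $k$-dimensional $V$ simultaneously,
\begin{align*}
\frac{1}{C}\Bigl(\sum_i d(a_i,V)^p\Bigr)^{1/p} \le \max_i w_i\, d(a_i,V) \le C\Bigl(\sum_i d(a_i,V)^p\Bigr)^{1/p}
\end{align*}
for $C = \poly(\log n\kappa)$, with the exponents in $C$ depending on $1/p$.

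Given this equivalence, the argument would proceed as follows. Let $V^*$ be the optimal $\ell_p$ subspace and $\tilde V$ the minimizer of $\max_{i\in S} w_i\, d(a_i, V)$ over the coreset $S$ produced by Algorithm~\ref{alg:efficient} run on the scaled points $\{w_i a_i\}$. By the $\ell_\infty$ strong coreset guarantee applied to the scaled point set, $\max_{i\in S} w_i\, d(a_i,V) \le \max_{i\in[n]} w_i\, d(a_i,V) \le O(\sqrt{k}\log n\kappa)\max_{i\in S} w_i\, d(a_i,V)$ for all $k$-dimensional $V$. Chaining these inequalities: $(\sum_i d(a_i,\tilde V)^p)^{1/p} \le C \max_i w_i\, d(a_i,\tilde V) \le C\cdot O(\sqrt k \log n\kappa)\max_{i\in S} w_i\, d(a_i,\tilde V) \le C\cdot O(\sqrt k\log n\kappa)\max_{i\in S} w_i\, d(a_i,V^*) \le C\cdot O(\sqrt k \log n\kappa)\max_i w_i\, d(a_i,V^*) \le C^2\cdot O(\sqrt k\log n\kappa)(\sum_i d(a_i,V^*)^p)^{1/p}$. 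Collecting the factors, the distortion is $O(\sqrt k) \cdot C^2 \cdot \poly(\log n\kappa)$, and tracking that $C = \poly(\log^{1/p} n\kappa)$ with an extra $k^{1/p}$-type loss from the min-stability tails should give the claimed $k^{1/2 + 2/p}\poly(\log^{1+3/p} n\kappa)$ bound. The space and running time bounds follow immediately from Theorem~\ref{thm:intro-theorem} since we only add $O(1)$ work per point to sample and rescale.

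The main obstacle is making the two-sided equivalence between $\max_i w_i\, d(a_i,V)$ and $(\sum_i d(a_i,V)^p)^{1/p}$ hold \emph{uniformly over all $k$-dimensional subspaces $V$} rather than for a single fixed $V$. For a fixed $V$ this is just the classical min-stability / max-stability computation, but a naive net over the Grassmannian has size exponential in $kd$, which would blow up the failure probability. I expect the resolution to exploit that $\tilde V$ need only be compared against $V^*$ and that the upper-tail control ($\max_i w_i\, d(a_i,V)$ not too large) can be obtained for all $V$ at once because $\max_i e_i^{-1/p} \le \poly(n\kappa)$ w.h.p.\ (scaling all distances by the single worst weight), while the lower-tail control (the max is not too small, equivalently $\min_i e_i$ among the ``heavy'' coordinates is not too large) only needs to hold at the specific subspaces $\tilde V$ and $V^*$ arising in the chain, or can be handled by a bound on the number of ``effective'' heavy points. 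A secondary subtlety is that the weights $w_i$ are unbounded, so one must truncate or condition on the event $\{e_i \ge \poly(1/(n\kappa))\ \forall i\}$ to keep everything polynomially bounded and to ensure the scaled points still have $\poly(n\kappa)$-bounded condition number, so that Theorem~\ref{thm:intro-theorem} applies with the stated $\kappa$; I would absorb the probability of violating these events into the overall failure probability via a union bound over the $n$ points.
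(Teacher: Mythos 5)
Your high-level plan is the paper's plan -- rescale by $e_i^{-1/p}$ using min-stability, run Algorithm~\ref{alg:efficient} on $\bD A$, and chain the coreset guarantee between the $\ell_\infty$ minimizer $\tilde V$ and $V^*$ -- and your chain of inequalities is exactly the one the paper uses. You also correctly diagnose the obstacle: uniformity over $k$-dimensional subspaces. But you do not pin down the mechanism that resolves it, and your stated intermediate goal is not achievable.

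The two-sided bound you write down, $\frac{1}{C}\lp{d_V}\le \max_i w_i\,d(a_i,V)\le C\lp{d_V}$ simultaneously for \emph{all} $k$-dimensional $V$ with $C=\poly(\log n\kappa)$, does not hold: the upper tail of $e_i^{-1/p}$ is heavy, and pushing the upper bound to hold uniformly over all $V$ (as you suggest, via $\max_i e_i^{-1/p}\le \poly(n\kappa)$) costs $\poly(n\kappa)$, not $\poly(\log n\kappa)$. The paper instead needs the upper bound $\linf{\bD\,d_{V^*}}\le C_1\lp{d_{V^*}}$ only at the single fixed subspace $V^*$, which is one application of Lemma~\ref{lma:exponential}. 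The harder direction is the lower bound, and it must hold at $\tilde V$ -- but $\tilde V$ is \emph{data-dependent} (it is determined by the coreset $S$, which itself depends on the exponentials), so you cannot simply declare it a ``specific subspace'' and apply a single min-stability bound; that would be circular. The paper's resolution, which you do not identify, is that $\tilde V$ necessarily lies in the row span of $A_S$ and $|S|\le m = O(k\log^2 n\kappa)$, so it suffices to establish the lower bound simultaneously for \emph{every} $k$-dimensional subspace contained in the span of \emph{any} $m$ rows of $A$. This is a union bound over the $\binom{n}{m}$ choices of row subsets, each equipped with a $\gamma$-net of size $\exp(O(m\log 1/\gamma))$, giving $\exp(O(m\log n + km\log(n\kappa)))$ events (Lemma~\ref{lma:net-argument}) and hence a loss of $(km\log n\kappa)^{1/p}=k^{2/p}\log^{3/p} n\kappa$ in the lower-tail bound. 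That is the precise source of the $k^{2/p}$ and the $3/p$ in the exponent, not ``min-stability tails'' in isolation as you speculate, and without this row-subset union bound (or an equivalent data-dependent net) the argument has a genuine gap.
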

While exponential random variables have been previously used in the context of $\ell_p$ subspace embeddings and $\ell_p$ moment estimation in streams, as far as we are aware, ours is the first work to use them in the context of subspace approximation.

We then show that recent algorithms of \cite{wy22} can be improved using our coreset construction algorithm when the data points $a_1,\ldots,a_n$ are ``approximately'' spanned by a low rank subspace. They give streaming algorithms for a host of geometric problems such as width estimation, volume estimation, L\"owner-John ellipsoid computation, etc. The main ingredient of their algorithms is a deterministic $\ell_{\infty}$ subspace embedding: their algorithm streams through the rows of an $n \times d$ matrix $A$ and selects a subset $S \subseteq [n]$ of rows, $|S| = O(d\log n)$ with the property that for all $x$,
\begin{align*}
    \linf{A_S x} \le \linf{Ax} \le \sqrt{d\log n}\linf{A_S x}.
\end{align*}
Here $\linf{x} \coloneqq \max_i |x_i|$ and $A_S$ is the matrix $A$ restricted to only those rows in $S$. When the matrix $A$ has rank $d$, their algorithm necessarily needs $\Omega(d^2)$ bits of space which is prohibitive when $d$ is very large. In practice, many matrices $A$ are very well-approximated by a matrix with far lower rank than $d$ even when the rank of the matrix $A$ is $d$. Suppose $A$ is well-approximated by a rank $k$ matrix in the sense that there is a $k$-dimensional subspace $V$ such that all the rows of $A$ are not very far from $V$. We show that if $S$ is the coreset constructed by Algorithm~\ref{alg:efficient}, then for all \emph{unit vectors} $x$,
$\linf{A_Sx} \le \linf{Ax} \le (C\sqrt{k}\log n\kappa)\linf{A_S x} + C\Delta\log n\kappa$, where $\Delta$ denotes the optimal rank-$k$ $\ell_{\infty}$ subspace approximation cost of the matrix $A$. Thus, $\linf{A_Sx}$ can be used to approximate $\linf{Ax}$ well when $\Delta$ is small.

\subsection{Previous Work}
The rank-$k$ $\ell_{\infty}$ subspace approximation problem and more generally the rank-$k$ $\ell_{\infty}$ flat approximation problem have been previously studied for different values of $k$. As discussed earlier, \citet{varadarajan2007radii} give an SDP-based algorithm that can compute an $O(\sqrt{\log n})$ factor approximation for all values of $k$. Being SDP-based, the algorithm is impractical in the streaming setting and when the number of points $n$ is very large. We shall mostly discuss previous works relevant in the streaming setting.

For specific values of $k=0$ and $k=d-1$, \citet{agarwal2015streaming} study upper and lower bounds on streaming algorithms. For $k=0$, also known as the minimum enclosing ball (MEB) problem, they give a streaming algorithm that is a $(1 + \sqrt{3})/2$ approximation and show that there is a small enough constant $\alpha$ such that any $\alpha$ approximation algorithm must use $\min(n, \exp(d^{1/3}))$ space, thereby showing that there are no small-space streaming algorithms with a better than $\alpha$ approximation. For $k=d-1$, the so-called width estimation problem, they showed that any algorithm that approximates the cost up to a multiplicative $\Theta(d^{1/3})$ factor must use $\Omega(n, \exp(d^{1/3}))$ bits of space, again ruling out small-space algorithms with better than $d^{1/3}$ approximation factor.

Later, \citet{chan2014streaming} improved the approximation ratio of the algorithm of \citet{agarwal2015streaming} to $(1 + \sqrt{2})/2$ for the MEB problem.

Recently, \citet{tukan2022new} give an algorithm to construct a coreset for the $\ell_{\infty}$ subspace approximation problem with a size of $\tilde{O}(k^{3k})$. While an offline coreset construction can be converted into a streaming coreset construction using the merge-and-reduce procedure, the exponential dependence in $k$ makes their algorithm impractical compared to our algorithm which needs to store only $O(k \log(n\kappa)^2)$ input points.

For the $\ell_1$ subspace approximation problem, \citet{feldman2010coresets} give a streaming algorithm to construct a  coreset with $\tilde{O}\left(d\left(\frac{k \cdot 2^{O(\sqrt{\log n})}}{\varepsilon^2}\right)^{\poly(k)}\right)$ points that can be used to compute a $1+\varepsilon$ approximation. When $n$ and $d$ are large, the space requirement of the coreset is infeasible. In comparison, although our algorithms do not give $1+\varepsilon$ approximation, we can compute $\poly(k,\log n\kappa)$ approximations using only space necessary to store $\poly(k, \log n\kappa)$ points, which is much smaller than the coreset constructed by their algorithm.

For all values of $p$, \citet{kerber2014approximation} give a dimensionality reduction procedure by showing that projecting the points to a random $O(k^2(\log k/\varepsilon \cdot \log n)/\varepsilon^3)$-dimensional space preserves the $\ell_p$ subspace approximation\footnote{They prove their result for the more general problem of subspace clustering.} cost. For $p=\infty$, their algorithm combined with the coreset construction algorithm of \citet{wy22} can be used to approximate the $\ell_{\infty}$ subspace approximation up to $\poly(k, \log n)$ factors. But since the $d$-dimensional ``information'' is destroyed by the projection, we cannot recover a solution in the $d$-dimensional space. In comparison, for $p=\infty$, we give a practical algorithm to construct a strong coreset that lets us approximate the maximum distance to any $k$ dimensional subspace and for general $p$, we give a polynomial time algorithm that can output a ``$d$-dimensional'' approximate solution.

For $p \notin \set{1, 2, \infty}$, much less is known in the streaming setting. In the offline setting, \citet{deshpande2007sampling} gave a sampling based algorithm for all $p \ge 1$ that outputs a bicriteria solution for the $\ell_p$ subspace approximation problem. Later \citet{deshpande2011algorithms} gave a polynomial time $O(\sqrt{p})$ factor approximation algorithm for the $\ell_p$ subspace approximation problem for all $p \ge 2$. Assuming the Unique Games Conjecture, they show that it is hard to approximate the cost to a smaller than $O(\sqrt{p})$ factor. For $1 \le p \le 2$, \citet{robust-subspace-approximation} gave an input sparsity time algorithm that computes a $1+\varepsilon$ approximation but they have an $\exp(\poly(k/\varepsilon))$ term in their running time. The $O(\sqrt{p})$ factor approximation algorithm of \cite{deshpande2011algorithms} is based on convex relaxations and is not applicable in the streaming setting of this paper. In a recent work, \citet{deshpande2023one} observed the lack of streaming algorithms for $\ell_p$ subspace approximation that also have the subset selection property that our coresets have. They give a subset selection algorithm for the $\ell_p$ subspace approximation problem but their results have a weaker additive error guarantee. They leave open the subset selection algorithms that give a multiplicative approximation to the $\ell_p$ subspace approximation problem. In a recent work, \citet{woodruff2023new} answered the question of \citet{deshpande2023one} in the affirmative by giving a subset selection algorithm the computes a strong coreset with $O((k/\varepsilon)^{O(p)}\polylog(n))$ rows that can approximate the cost of any $k$-dimensional space up to a $1 \pm \varepsilon$ factor. Selecting $k^{O(p)}$ rows could be problematic when $p$ is large. Our work makes progress on this question by removing the exponential dependence in $p$,  although at the cost of only being able to compute a $\poly(k, \log n\kappa)$ approximation to the problem.

\paragraph{Relevance to Machine Learning.} Our work continues the long line of work in the area of subspace approximation and low rank approximation with different error metrics that has been of interest in the machine learning community. Previous works study problems such as $\ell_1$ subspace approximation \citep{hardt2013algorithms}, entrywise $\ell_p$ low rank approximation \citep{chierichetti2017algorithms, dan2019optimal}, column subset selection for the entrywise $\ell_p$ norm, and other error metrics \citep{song2019towards}. Our algorithms for geometric streaming problems such as convex hull estimation have applications to robust classification \citep{provost2001robust, fawcett2007pav}.

\section{Preliminaries}\label{sec:prelims}
For integer $n \ge 1$, we use $[n]$ to denote the set $\set{1,\ldots,n}$. For an $n \times d$ matrix $A$, we use $a_i \in \R^d$ to denote the $i$-th row. If $S \subseteq [n]$, then $A_S$ denotes the submatrix formed by the rows in the set $S$. Given indices $i < j$, we use $A_{i:j}$ to denote the matrix formed by the rows $a_i, \ldots, a_j$. For $x \in \R^d$ and $p \ge 1$, $\lp{x}$ denotes the $\ell_p$ norm of $x$ defined as $(\sum_{i=1}^d |x_i|^p)^{1/p}$ and $\linf{x} \coloneqq \max_i |x_i|$. Given a matrix $A$, we use $\frnorm{A}$ to denote the Frobenius norm and $\|A\|_{p,2}$ to denote the $\ell_{p}$ norm of the $n$-dimensional vector $(\opnorm{a_1},\ldots, \opnorm{a_n})$. Given a matrix $A$, we use $[A]_k$ to denote the best rank-$k$ approximation of $A$ in Frobenius norm. This can be obtained by truncating the singular value decomposition of $A$ to the top $k$ singular values.

For an arbitrary $k$-dimensional subspace $V \in \R^d$, we use $\Proj_V$ to denote the orthogonal projection matrix onto the subspace $V$, i.e., for any $x \in \R^d$, $\Proj_V \cdot x$ is the closest (in Euclidean norm) vector to $x$ in $V$. So, $d(x, V) = \opnorm{(I-\Proj_V)x}$ and $\|A(I-\Proj_V)\|_{\infty, 2} = \max_i \opnorm{(I-\Proj_V)a_i} = \max_i d(a_i, V)$. 
\section{\texorpdfstring{$\ell_\infty$}{l-infty} low rank approximation and Outer Radius}\label{sec:linf-LRA}
As discussed in the introduction, given a matrix $A$ with rows $a_1, \ldots, a_n$ that arrive in a stream, we want to compute a \emph{strong coreset}, i.e., a subset $S \subseteq [n]$ such that for all $k$-dimensional subspaces $V$,
\begin{align*}
    1 \le \frac{\max_{i \in [n]} d(a_i, V)}{\max_{i \in S}d(a_i, V)} \le f 
\end{align*}
for a small \emph{distortion} $f$. Consider the following simple algorithm: we initialize $S \gets \emptyset$ and stream through the rows $a_1,\ldots,a_n$. When processing the row $a_i$, if there exists a $k$-dimensional subspace $V$ such that $d(a_i, V)^2 > \sum_{i \in S}d(a_i, V)^2$, we update $S \gets S \cup \set{i}$. Otherwise, we proceed to the next row without updating $S$. Consider the set $S$ at the end of the stream and let $V$ be an arbitrary $k$ dimensional subspace. We shall now argue that $A_S$ is a strong coreset with a distortion at most $\sqrt{|S|}$.

 Let $V$ be an arbitrary $k$-dimensional subspace of $\R^d$. Let $i^* = \argmax_i d(a_i, V)$ be the index of the row \emph{farthest} from $V$. Consider the following cases: if $i^* \in S$, then we have $\max_{i \in [n]}d(a_{i}, V) = d(a_{i^*}, V) = \max_{i \in S} d(a_{i}, V)$ and therefore $A_S$ has \emph{no distortion} for $V$. In case the index $i^* \notin S$, then $d(a_{i^*}, V)^2 \le \sum_{i \in S, i < i^*}d(a_{i}, V)^2$ since otherwise we would have added $i^*$ to $S$. Thus, \begin{align}
     \max_i d(a_i, V) = d(a_{i^*}, V) &\le \sqrt{\sum_{i \in S}d(a_i, V)^2}\label{eqn:inf-by-fro}\\
     &\le \sqrt{|S|}\max_{i \in S}d(a_i, V)\nonumber
 \end{align} and therefore $A_S$ is a strong coreset with a distortion at most $\sqrt{|S|}$.
 Now, if we can show that $S$ can not be too large, we obtain that $A_S$ is a strong coreset with a small distortion. 

To show that $S$ is not too large, we appeal to rank-$k$ \emph{online} ridge leverage scores, a generalization of the so-called \emph{ridge leverage scores}. In the offline setting, ridge leverage scores have been employed by \citet{ridge-leverage-score} as a suitable modification of the usual $\ell_2$-leverage scores to obtain fast algorithms for $\ell_2$ low rank approximation. Later, \citet{braverman2020near} defined online ridge leverage scores and showed that they can be used to compute low rank approximations in the \emph{online} model. They also showed that for well-conditioned instances, the sum of the online ridge leverage scores is small. Our main observation is that for the set $S$ constructed as described, the online rank-$k$ ridge leverage score of \emph{every} row in $A_S$ is large. As the sum of online rank-$k$ ridge leverage scores is not large, we obtain that there cannot be too many rows in $A_S$.

One issue we have to solve to implement this algorithm is given $a_i$ and the set $S$ after processing $a_1, \ldots, a_{i-1}$, how can we efficiently know if there exists a rank-$k$ subspace $V$ such that $d(a_i, V)^2 > \sum_{i \in S} d(a_i, V)^2$? Online ridge leverage scores again come to rescue. We show that if we modify the above described algorithm to instead add $i$ to $S$ when its ``online rank-$k$ ridge leverage score'' is large with respect to $A_S$, then the set $S$ computed at the end of the process is again a strong coreset with a distortion of at most $\sqrt{|S|}$.
\subsection{Online Rank-\texorpdfstring{$k$}{k} Ridge Leverage Scores}
Let $A$ be an arbitrary matrix with rows $a_1, \ldots, a_n \in \R^d$ and let $k \le d$ be a rank parameter. Let $\lambda_i = \frac{\frnorm{A_{1:i} - [A_{1:i}]_k}^2}{k}$ be the $i$-th ridge parameter. Note that $\lambda_i = 0$ if and only if $\text{rank}(A_{1:i}) \le k$. We define the ``rank-$k$ online ridge leverage score'' of the row $a_{i+1}$ to be
\begin{align*}
    \tau_{i+1}^{\OL, k}(A) = \begin{cases}
    1\ \text{if}\ \lambda_{i} = 0\ \text{and}\ a_{i+1} \notin \text{rowspace}(A_{1:i})\\
    \min(1, \T{a_{i+1}}(\T{A_{1:i}}A_{1:i} + \lambda_{i} \cdot I)^{+}a_{i+1})\, \text{o.w.}
    \end{cases}
\end{align*}
The online rank-$k$ ridge leverage scores help us capture the ``rank-$k$ information'' of the matrix $A$ as the rows are revealed.
\subsection{An Efficient Algorithm}
\begin{algorithm}
\KwIn{A matrix $A$ as a stream of rows $a_1,\ldots,a_n \in \R^d$, a rank parameter $k$}
\KwOut{A subset $S \subseteq [n]$}	
\DontPrintSemicolon
$S \gets \emptyset$, $\lambda \gets 0$ \tcp*{Algorithm stores $A_S$}
\For{$t=1,\ldots,n$}{
\uIf{$\lambda = 0$ and $a_t \notin \text{rowspace}(A_S)$}{$S \gets S \cup \set{t}$\;}
\uElseIf{$\T{a_t}(\T{A_S}A_S + \lambda \cdot I)^+a_t \ge 1/(1 + 1/k)$}{
        $S \gets S \cup \set{t}$\;
    }
$\lambda \gets \frnorm{A_S - [A_S]_k}^2/k$\; \tcp{$\lambda$ changes only when $S$ changes}
}
\Return{S}\;
\caption{Minimize Distance to a Subspace}
\label{alg:efficient}
\end{algorithm}
Our full coreset construction algorithm is described in Algorithm~\ref{alg:efficient}. In the algorithm, we select a subset of rows $S$ online in the following way: a new row $a_t$ is added to the set $S$ if the rank-$k$ online ridge leverage score of the row $a_t$ with respect to the matrix $A_{S \cup t}$ is at least $1/(1 + 1/k)$. 

We will first show that the set $S$ computed by the algorithm defines a matrix $A_S$ that is a strong coreset with a distortion at most $\sqrt{|S|}$. Let $S_t \coloneqq S \cap [t]$ be the  subset of rows that have been selected by the algorithm after processing $a_1, \ldots, a_t$ and let $a_{t+1}$ is the row being processed. We prove the following lemma:
\begin{lemma}
    Let $t$ be arbitrary and let $S_t \coloneqq S \cap [t]$ be the subset of rows selected by Algorithm~\ref{alg:efficient} after processing the rows $a_1,\ldots,a_t$. If there exists a rank $k$ subspace $V$ such that
    \[
        d(a_{t+1}, V)^2 \ge \sum_{i \in S_t}d(a_i, V)^2,
    \]
    then the algorithm adds the row $t+1$ to the set $S$ that it maintains.
    \label{lma:large-implies-adds}
\end{lemma}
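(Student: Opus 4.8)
The plan is to show that whenever the hypothesis of the lemma holds, one of the two ``add'' conditions in Algorithm~\ref{alg:efficient} is triggered: either the first \textbf{if} (which requires $\lambda = 0$ and $a_{t+1} \notin \text{rowspace}(A_{S_t})$), or the \textbf{elseif} test $\T{a_{t+1}}(\T{A_{S_t}}A_{S_t} + \lambda I)^{+} a_{t+1} \ge 1/(1+1/k)$. The engine of the argument is a single test direction: let $\Proj_V$ be the orthogonal projection onto the witnessing subspace $V$, put $P \coloneqq I - \Proj_V$, and set $u \coloneqq P a_{t+1}$. Since $a_{t+1} - u = \Proj_V a_{t+1} \in V$ is orthogonal to $u \in V^{\perp}$, we obtain the two identities $\opnorm{u}^2 = d(a_{t+1}, V)^2$ and $\langle u, a_{t+1} \rangle = \opnorm{u}^2$. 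I will assume $\opnorm{u} > 0$; the degenerate case $d(a_{t+1}, V) = 0$ forces $\sum_{i \in S_t} d(a_i, V)^2 = 0$ as well, and it is harmless for the coreset bound that will be derived from the lemma (there it is applied with $a_{t+1}$ the row farthest from $V$).

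Then I would extract two estimates from $u$. First, since $Pu = u$, applying Cauchy--Schwarz to each term yields $|\langle u, a_i \rangle| = |\langle u, P a_i \rangle| \le \opnorm{u}\, d(a_i, V)$ for every $i \in S_t$, hence $\T{u}(\T{A_{S_t}} A_{S_t}) u = \sum_{i \in S_t} \langle u, a_i \rangle^2 \le \opnorm{u}^2 \sum_{i \in S_t} d(a_i, V)^2$; the hypothesis $\sum_{i \in S_t} d(a_i, V)^2 \le d(a_{t+1}, V)^2 = \opnorm{u}^2$ then gives $\T{u}(\T{A_{S_t}} A_{S_t}) u \le \opnorm{u}^4$. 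Second, $\sum_{i \in S_t} d(a_i, V)^2 = \frnorm{A_{S_t}(I - \Proj_V)}^2 \ge \frnorm{A_{S_t} - [A_{S_t}]_k}^2 = k\lambda$, because $A_{S_t}\Proj_V$ has rank at most $k$ while $[A_{S_t}]_k$ is the best rank-$k$ Frobenius approximation of $A_{S_t}$; combined with the hypothesis this also shows $\opnorm{u}^2 = d(a_{t+1}, V)^2 \ge k\lambda$.

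When $\lambda > 0$, the matrix $\T{A_{S_t}} A_{S_t} + \lambda I$ is invertible, and Cauchy--Schwarz in its inner product gives, for every vector $x$, $\T{a_{t+1}}(\T{A_{S_t}} A_{S_t} + \lambda I)^{-1} a_{t+1} \ge \langle x, a_{t+1} \rangle^2 / \bigl(\T{x}(\T{A_{S_t}} A_{S_t} + \lambda I) x\bigr)$. Taking $x = u$ and using the two estimates, the right-hand side is at least $\opnorm{u}^4 / (\opnorm{u}^4 + \lambda \opnorm{u}^2) = \opnorm{u}^2 / (\opnorm{u}^2 + \lambda)$, which, since $s \mapsto s/(s + \lambda)$ is increasing on $s \ge 0$ and $\opnorm{u}^2 \ge k\lambda$, is at least $k\lambda / ((k+1)\lambda) = 1/(1 + 1/k)$; thus the \textbf{elseif} test fires and row $t+1$ is added. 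When $\lambda = 0$ we have $\rank{A_{S_t}} \le k$: if $a_{t+1} \notin \text{rowspace}(A_{S_t})$ the first \textbf{if} adds it; otherwise $a_{t+1} \in \text{rowspace}(A_{S_t}) = \text{colspace}(\T{A_{S_t}} A_{S_t})$, and the same test vector gives $\T{a_{t+1}}(\T{A_{S_t}} A_{S_t})^{+} a_{t+1} \ge \opnorm{u}^4 / \bigl(\T{u}(\T{A_{S_t}} A_{S_t}) u\bigr) \ge 1 \ge 1/(1 + 1/k)$, where $\T{u}(\T{A_{S_t}} A_{S_t}) u > 0$ since otherwise $u$ would be orthogonal to $\text{rowspace}(A_{S_t}) \ni a_{t+1}$, contradicting $\langle u, a_{t+1} \rangle = \opnorm{u}^2 > 0$.

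The substantive ideas are the choice of the test vector $u$ and the identity $\opnorm{u}^2 \ge k\lambda$, which links the lemma's hypothesis to the ridge parameter; once these are in hand the case analysis is short. What remains is bookkeeping: the degenerate case $d(a_{t+1}, V) = 0$ (dealt with above, and immaterial to the application), and, in the $\lambda = 0$ branch, confirming that $a_{t+1}$ lies in the column space of $\T{A_{S_t}} A_{S_t}$ so that the pseudoinverse inequality is valid. I do not expect a genuine obstacle beyond these routine points.
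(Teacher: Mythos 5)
Your proof is correct and follows essentially the same route as the paper's: the same test direction $(I-\Proj_V)a_{t+1}$, the same bound $\lambda \le \frac{1}{k}\frnorm{A_{S_t}(I-\Proj_V)}^2$ from optimality of the rank-$k$ truncation, the same variational (Cauchy--Schwarz) lower bound on the ridge leverage score, and the same case split on $\lambda = 0$ versus $\lambda > 0$. The only differences are cosmetic (unnormalized $u$ versus the paper's unit $x^*$, and making the monotonicity of $s \mapsto s/(s+\lambda)$ explicit), and your handling of the degenerate $d(a_{t+1},V)=0$ case is in fact slightly more careful than the paper's, which quietly assumes strict inequality throughout its proof.
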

The above lemma now directly implies the following from our earlier discussion:
\begin{lemma}
    Let $S$ be the set returned by Algorithm~\ref{alg:efficient} after processing the rows $a_1, \ldots, a_n$. For any $k$-dimensional subspace $V$,
    \[
    \max_{i\in S}d(a_i, V) \le \max_{i \in [n]}d(a_i, V) \le \sqrt{|S|} \cdot \max_{i \in S}d(a_i, V).
    \]
\end{lemma}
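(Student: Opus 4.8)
The plan is to derive the statement as a short consequence of Lemma~\ref{lma:large-implies-adds} together with the Frobenius-norm argument already sketched around \eqref{eqn:inf-by-fro}. The left-hand inequality $\max_{i\in S}d(a_i,V) \le \max_{i\in[n]}d(a_i,V)$ is immediate since $S \subseteq [n]$ (using the convention $\max_{\emptyset} = 0$, which is only triggered in the degenerate case where every $a_i = 0$, and there the whole statement is trivial). So I would focus on the upper bound.

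Fix an arbitrary $k$-dimensional subspace $V$ and let $i^\ast \in \argmax_{i\in[n]} d(a_i,V)$ be (an index of) a row farthest from $V$, and split into two cases. If $i^\ast \in S$, then $\max_{i\in[n]} d(a_i,V) = d(a_{i^\ast},V) = \max_{i\in S} d(a_i,V)$ and the bound holds with no distortion. If $i^\ast \notin S$, I apply the contrapositive of Lemma~\ref{lma:large-implies-adds} with $t = i^\ast - 1$: since the algorithm did \emph{not} add row $i^\ast$, there is no rank-$k$ subspace $V'$ with $d(a_{i^\ast}, V')^2 \ge \sum_{i \in S_{i^\ast-1}} d(a_i,V')^2$, where $S_{i^\ast - 1} = S \cap [i^\ast-1]$. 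In particular this fails for our fixed $V$, so $d(a_{i^\ast},V)^2 < \sum_{i\in S_{i^\ast-1}} d(a_i,V)^2$. Since $i^\ast \notin S$ we have $S_{i^\ast-1} = S \cap [i^\ast] \subseteq S$, hence $\sum_{i\in S_{i^\ast-1}} d(a_i,V)^2 \le \sum_{i\in S} d(a_i,V)^2 \le |S|\cdot \max_{i\in S} d(a_i,V)^2$. Taking square roots yields $\max_{i\in[n]}d(a_i,V) = d(a_{i^\ast},V) \le \sqrt{|S|}\,\max_{i\in S}d(a_i,V)$, which is exactly the chain in \eqref{eqn:inf-by-fro} with the algorithm's leverage-score threshold test standing in for the idealized test.

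I expect essentially no obstacle at this level: all the substance is hidden inside Lemma~\ref{lma:large-implies-adds}, which I am allowed to assume and which relates the algorithm's online ridge-leverage-score test $\T{a_t}(\T{A_S}A_S + \lambda I)^+ a_t \ge 1/(1+1/k)$ to the geometric condition ``$\exists$ rank-$k$ $V$ with $d(a_t,V)^2 \ge \sum_{i\in S} d(a_i,V)^2$.'' The only points needing care are bookkeeping: that ``$S_t$'' in the lemma is the set \emph{after} processing $a_1,\dots,a_t$ and \emph{before} processing $a_{t+1}$; that when $i^\ast \notin S$ one has $S_{i^\ast-1} = S \cap [i^\ast]$, so nothing from the full set $S$ is lost in the comparison; and that the strict inequality coming from the contrapositive is harmlessly relaxed to ``$\le$'' before taking square roots. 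With those in place the argument is a two-line case analysis.
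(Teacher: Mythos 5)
Your argument is correct and is essentially the paper's own: the paper states this lemma as a direct consequence of Lemma~\ref{lma:large-implies-adds} together with the case analysis around \eqref{eqn:inf-by-fro} (split on whether $i^\ast \in S$, use the contrapositive of Lemma~\ref{lma:large-implies-adds} in the other case, then bound the sum of squares by $|S|$ times the max). The bookkeeping points you flag are exactly the right ones and are handled correctly.
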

Thus the set $S$ returned by the algorithm is a \emph{strong} coreset with a distortion bounded by $\sqrt{|S|}$. Hence, if we show that $|S|$ is small, then we obtain the two desired properties of a coreset: (i) the distortion of $A_S$ is small and (ii) the number of rows in $A_S$ is small. 

To bound the size of the set $S$, we use the fact that the online rank-$k$ ridge leverage scores of \emph{all} the rows in the matrix $A_S$ \emph{with respect to} $A_S$ are at least $1/(1 + 1/k)$. Thus, the number of rows in $A_S$ is at most $1 + 1/k$ times the sum of online rank-$k$ ridge leverage scores of the matrix $A_S$. We shall now prove a bound on the sum of online rank-$k$ ridge leverage scores of an arbitrary matrix $B$. The proof of this lemma is similar to that of proof of Lemma~2.11 of \cite{braverman2020near}. First, we define an ``online rank-$k$ condition number'' that we use to bound the sum of online rank-$k$ ridge leverage scores.
\begin{definition}[Online Rank-$k$ Condition Number] Given a matrix $B$ with rows $b_1, \ldots, b_n$, let $i^*$ be the largest index $i$ such that $\text{rank}(B_{1:i}) = k$. The online rank-$k$ condition number of $B$ is defined as
\[
    \kappa \coloneqq \frac{\opnorm{B}}{\min_{i \le i^*+1}\sigma_{\min}(B_{1:i})}
\]
where $\sigma_{\min}(\cdot)$ denotes the smallest \emph{non-zero} singular value.

\end{definition}
\begin{lemma}[Sum of online rank-$k$ ridge leverage scores]
    Let $B \in \R^{n \times d}$ be an arbitrary matrix with with an online rank-$k$ condition number $\kappa$, then
    \[\sum_{i=1}^n \tau_i^{\OL, k}(B) = O(k \log(k \cdot \kappa)^2).\]
    \label{lma:sum-of-ridge-leverage-scores}
\end{lemma}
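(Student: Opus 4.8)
The plan is to run the standard ``log-determinant potential'' argument for online leverage scores, adapted to the fact that the ridge parameter $\lambda_i$ changes over the stream. Since scaling $B$ by a constant $c$ multiplies every Gram matrix and every $\lambda_i$ by $c^2$ and hence leaves the scores $\tau_i^{\OL,k}(B)$ unchanged, I may assume $\opnorm{B}=1$; the condition number then gives $\sigma_{\min}(B_{1:i})\ge 1/\kappa$ for all $i\le i^*+1$. The case $\rank{B}\le k$ is degenerate ($\lambda_i\equiv 0$, so every score is an ordinary online leverage score of a rank-$\le k$ matrix and the sum is $O(k\log\kappa)$), so assume $\rank{B}>k$ and write $m:=i^*$.

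I split the indices at $m+1$. For the early rows $i\le m+1$ one has $\lambda_{i-1}=0$, so $\tau_i=\min(1,\T{a_i}(\T{B_{1:i-1}}B_{1:i-1})^{+}a_i)$ (or $1$ when $a_i$ leaves the current rowspace), and all of $a_1,\dots,a_{m+1}$ lie in the at-most-$(k{+}1)$-dimensional rowspace of $B_{1:m+1}$. I bound their sum by the usual pseudo-determinant telescoping: the $\le k{+}1$ rows that raise the rank cost $\le 1$ each; for every other row $1+\T{a_i}(\T{B_{1:i-1}}B_{1:i-1})^{+}a_i$ is the ratio of consecutive pseudo-determinants and $\min(1,x)\le 2\ln(1+x)$, so these telescope, the multiplicative jumps at the rank-increase steps being absorbed by $\sigma_{\min}(B_{1:i})^2\ge 1/\kappa^2$. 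This gives $\sum_{i\le m+1}\tau_i=O(k\log\kappa)$.

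For the late rows $i>m+1$ one has $\lambda_{i-1}>0$, and $\lambda_i$ is nondecreasing in $i$ (a new row only increases the Gram matrix in the PSD order, hence all singular values, hence the rank-$k$ tail Frobenius mass). I group the late indices into dyadic phases: phase $t$ is the contiguous block of $i$ with $\lambda_{i-1}\in[2^t\lambda_{m+1},2^{t+1}\lambda_{m+1})$, for $t=0,\dots,L-1$ with $L=O(\log(\lambda_n/\lambda_{m+1}))$. On phase $t$ set $\mu:=2^t\lambda_{m+1}$; since $\lambda_{i-1}\ge\mu$ there, $(\T{B_{1:i-1}}B_{1:i-1}+\lambda_{i-1}I)^{-1}\preceq(\T{B_{1:i-1}}B_{1:i-1}+\mu I)^{-1}$, so the matrix determinant lemma together with $\min(1,x)\le 2\ln(1+x)$ give
\[
\tau_i\ \le\ 2\ln\frac{\det(\T{B_{1:i}}B_{1:i}+\mu I)}{\det(\T{B_{1:i-1}}B_{1:i-1}+\mu I)},
\]
and summing over the phase telescopes to $2\ln\!\big(\det(\T{B_{1:\beta}}B_{1:\beta}+\mu I)/\det(\T{B_{1:\alpha-1}}B_{1:\alpha-1}+\mu I)\big)$, where $[\alpha,\beta]$ is the phase. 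Bounding the top $k$ eigenvalues of the numerator by $\opnorm{B}^2+\mu$ each, the remaining ones via $\prod_{j>k}(1+\sigma_j^2/\mu)\le\exp(\sum_{j>k}\sigma_j^2/\mu)=\exp(k\lambda/\mu)=e^{O(k)}$ --- using that $\sum_{j>k}\sigma_j(B_{1:i})^2=k\lambda_i=\Theta(k\mu)$ throughout the phase by the definition of $\lambda$ (the last row, at which $\lambda$ may already exceed $2\mu$, is peeled off at the cost of one extra factor $1+\opnorm{B}^2/\mu$) --- and the denominator from below by $\mu^d$, each phase contributes $O(k\log(1+1/\mu))$. Since $\mu\ge\lambda_{m+1}=\sigma_{\min}(B_{1:m+1})^2/k\ge 1/(k\kappa^2)$ (as $\rank{B_{1:m+1}}=k{+}1$, its rank-$k$ tail is exactly $\sigma_{\min}^2$), this is $O(k\log(k\kappa))$ per phase. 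Finally $\lambda_n/\lambda_{m+1}\le \rank{B}\cdot\opnorm{B}^2/\sigma_{\min}(B_{1:m+1})^2=O(d\kappa^2)$, so $L=O(\log(d\kappa))$, the late rows contribute $O(k\log(d\kappa)\log(k\kappa))$, and adding the early part yields the stated bound (with $\log d$ absorbed into $\log(k\kappa)$ under the paper's polynomial bit-complexity convention).

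The one genuinely delicate point is the time-varying $\lambda_i$: a single fixed regularizer fails both ways --- the final $\lambda_n$ reverses the direction of the bound $\tau_i\le\T{a_i}(\T{B_{1:i-1}}B_{1:i-1}+\lambda_n I)^{-1}a_i$, while the first positive value $\lambda_{m+1}$ makes the telescoped determinant ratio blow up like $(\lambda_n/\lambda_{m+1})^{d-k}$. The dyadic phasing is what reconciles these, and the quantitative reason each phase stays cheap ($O(k\log(k\kappa))$ rather than $O(d\log(k\kappa))$) is precisely that the tail mass $\sum_{j>k}\sigma_j(B_{1:i})^2$ equals $k\lambda_i$ by definition, hence stays $\Theta(k\mu)$ on the phase, so the $d-k$ ``ridge'' directions together inflate the determinant ratio by only $e^{O(k)}$.
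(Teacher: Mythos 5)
Your proof is correct and follows essentially the same route as the paper: split at $i^*+1$, handle the early rows with a log-(pseudo-)determinant argument, then dyadically partition the late indices according to the (nondecreasing) regularizer $\lambda_i$ and bound each phase by $O(k\log(k\kappa))$ using the identity $\sum_{j>k}\sigma_j(B_{1:i})^2 = k\lambda_i = \Theta(k\mu)$ on the phase. The only difference is presentational — you replay the matrix-determinant-lemma telescoping in each phase rather than invoking Theorem~2.2 of \cite{cohen2016online} and Lemma~2.11 of \cite{braverman2020near} as the paper does — so your argument is a self-contained expansion of the same proof.
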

Applying the above lemma to the matrix $A_S$, we obtain that $|S| = O(k \cdot \log(k \cdot \kappa(A_S))^2)$. Using the strong coreset property of the matrix $A_S$, we can show that $\kappa(A_S) \le \sqrt{n} \cdot \kappa(A)$, thereby showing that the coreset has a size at most $|S| = O(k \log(n \cdot \kappa(A))^2)$ and has a distortion at most $O(\sqrt{k}\log(n \cdot \kappa(A)))$. This gives the following theorem:
\begin{theorem}
    Given rows of an arbitrary $n \times d$ matrix $A$ with an online rank-$k$ condition number $\kappa$, Algorithm~\ref{alg:efficient} selects a subset $S$ of size $|S| \le O(k (\log n\kappa)^2)$ such that for any $k$ dimensional subspace $V$, we have
    \begin{align*}
         1 \le \frac{\max_{i \in [n]}d(a_i, V)}{\max_{i \in S} d(a_i, V)}
        \le C\sqrt{k} \cdot \log(n \kappa)
    \end{align*}
    for a large enough constant $C$. Additionally, the space required of the algorithm is bounded by the amount of space required to store $O(|S|)$ rows of $A$.
    \label{thm:first-version}
\end{theorem}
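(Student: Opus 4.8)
The plan is to assemble Theorem~\ref{thm:first-version} by stitching together the three ingredients already prepared in the excerpt: the strong-coreset property with distortion $\sqrt{|S|}$ (Lemma~\ref{lma:large-implies-adds} and the lemma following it), the bound on the sum of online rank-$k$ ridge leverage scores (Lemma~\ref{lma:sum-of-ridge-leverage-scores}), and a comparison between the online rank-$k$ condition number of $A_S$ and that of $A$. First I would observe that every row $a_i$ with $i \in S$ was added precisely because its online rank-$k$ ridge leverage score with respect to $A_{S_{i-1} \cup \{i\}}$ (equivalently, with respect to the matrix consisting of the rows of $S$ up to and including $i$) was at least $1/(1+1/k)$; since online ridge leverage scores only decrease as more rows are appended, the online rank-$k$ ridge leverage score of each row of $A_S$ \emph{with respect to the full matrix $A_S$} is also at least $1/(1+1/k)$. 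Hence $|S| \le (1+1/k)\sum_{i=1}^{|S|}\tau_i^{\OL,k}(A_S)$. Here I should be slightly careful that the ``$\lambda=0$'' branch rows — the initial rows added before the rank reaches $k$ — are also accounted for: those rows have $\tau^{\OL,k} = 1$ by definition, so they are fine, and there are at most $k$ of them. Applying Lemma~\ref{lma:sum-of-ridge-leverage-scores} to $B = A_S$ then gives $|S| = O(k\log(k\,\kappa(A_S))^2)$.

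The next and genuinely delicate step is to control $\kappa(A_S)$ in terms of $\kappa(A)$, since the bound on $|S|$ from the previous paragraph is circular — it depends on $A_S$, which depends on $|S|$. The idea is that $\kappa(A_S) = \opnorm{A_S}/\min_{i \le i^*+1}\sigma_{\min}((A_S)_{1:i})$, and I want to show $\kappa(A_S) \le \sqrt{n}\cdot\kappa(A)$. For the numerator, $\opnorm{A_S} \le \opnorm{A}$ trivially since $A_S$ is a row-submatrix of $A$. For the denominator I need a lower bound on the smallest nonzero singular value of every prefix $(A_S)_{1:i}$ for $i$ up to one past the point where the prefix first attains rank $k$. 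The key leverage is the strong-coreset property: for any unit vector $x$ in the rowspace of such a prefix, there is a $k$-dimensional subspace $V = \operatorname{span}(x)^{\perp}$ restricted appropriately so that $d(\cdot, V)$ reads off the component of each row along $x$ — more precisely, I would pick $V$ to be a $k$-dimensional subspace orthogonal to $x$ (possible since the prefix has rank at most $k$ at this stage, so we can choose $V$ containing the rest of the relevant directions), so that $d(a_j, V) = |\langle a_j, x\rangle|$ for the rows $a_j$ in question. Then $\max_{j \in [n]}|\langle a_j, x\rangle| \le \sqrt{|S|}\cdot\max_{j \in S}|\langle a_j, x\rangle|$, and since $\sigma_{\min}(A_{1:m}) \le \max_j |\langle a_j, x\rangle|$ for an appropriate $m$ while $\sigma_{\min}((A_S)_{1:i}) \gtrsim \max_{j \in S, j \le i}|\langle a_j, x\rangle|/\sqrt{|S|}$ when $x$ is chosen as the bottom right-singular direction, we get a loss of at most $\sqrt{|S|} \le \sqrt{n}$ between the two condition numbers. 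I would need to do this carefully: the quantifier order (the subspace $V$ must be allowed to depend on the prefix index and on $x$) and the fact that the strong-coreset guarantee holds for \emph{all} $k$-dimensional $V$ is exactly what makes this work, and this is the step I expect to be the main obstacle.

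Once $\kappa(A_S) \le \sqrt{n}\,\kappa(A) =: \sqrt{n}\,\kappa$ is established, I substitute back: $|S| = O(k\log(k\sqrt{n}\,\kappa)^2) = O(k(\log n\kappa)^2)$, absorbing the $\sqrt{n}$ and the $k$ inside the logarithm into the $(\log n\kappa)^2$ factor (using $k \le d \le \operatorname{poly}(n)$ or, more simply, noting $\log(k\sqrt n \kappa) = O(\log n\kappa)$ under the mild assumption that $k \le n$). This gives the claimed size bound $|S| \le O(k(\log n\kappa)^2)$.

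Finally, the distortion bound follows immediately by combining the two earlier lemmas: the lemma after Lemma~\ref{lma:large-implies-adds} states that for every $k$-dimensional $V$,
\[
    1 \le \frac{\max_{i \in [n]}d(a_i, V)}{\max_{i \in S}d(a_i, V)} \le \sqrt{|S|},
\]
and plugging in $|S| = O(k(\log n\kappa)^2)$ yields $\sqrt{|S|} = O(\sqrt{k}\log n\kappa) \le C\sqrt{k}\log(n\kappa)$ for a large enough constant $C$. The space bound is immediate from the algorithm description: the algorithm only ever stores $A_S$ (plus the scalar $\lambda$, which is a function of $A_S$), and the update test $\T{a_t}(\T{A_S}A_S + \lambda I)^+ a_t \ge 1/(1+1/k)$ and the rowspace-membership test both require only $A_S$ and the incoming row $a_t$, so the total space is dominated by storing $O(|S|)$ rows of $A$. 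This completes the proof modulo the condition-number comparison, which is the heart of the argument.
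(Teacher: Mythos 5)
Your proposal follows the paper's proof of Theorem~\ref{thm:first-version} essentially verbatim: use the distortion-$\sqrt{|S|}$ strong-coreset lemma, observe every row of $A_S$ has online rank-$k$ ridge leverage score at least $1/(1+1/k)$ with respect to $A_S$, invoke Lemma~\ref{lma:sum-of-ridge-leverage-scores} to get $|S| = O(k\log(k\cdot\kappa(A_S))^2)$, and pass to $\kappa(A)$ via $\kappa(A_S)\le\sqrt{n}\,\kappa(A)$, which the paper likewise asserts from the strong-coreset property without a written proof. One small caution: the inequalities in your condition-number sketch look misdirected as stated — for a unit $x$ in the rowspace one has $\sigma_{\min}((A_S)_{1:i})\ge\max_{j\le i}|\langle (a_S)_j,x^*\rangle|$ with \emph{no} $\sqrt{|S|}$ loss, while $\sigma_{\min}(A_{1:m})\le\sqrt{m}\,\max_j|\langle a_j,x\rangle|$ picks up a $\sqrt m$, so you would need to rearrange which side absorbs the $\sqrt{n}$ — but you correctly flag this as the step requiring care.
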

If we assume that all the rows of $A$ lie in a Euclidean ball of radius $R$ and that we are given some $\delta < \Delta \coloneqq \min_{k\text{-dim }V}\max_i d(a_i, V)$, then we can obtain bounds on $|S|$ that are independent of $n$ and only depend on the ``aspect ratio'' $R/\delta$. A similar aspect ratio has been used in an earlier work of \citet{makarychev2022streaming}. Let $t$ be a parameter we fix later. We simply feed the vectors $(\delta/t)e_1, \ldots, (\delta/t)e_{k+1}$ to Algorithm~\ref{alg:efficient} before processing the vectors $a_1, \ldots, a_n$. We note that the algorithm is guaranteed to select the vectors $(\delta/t)e_1, \ldots, (\delta/t)e_{k+1}$ since each of these vectors do not lie in the rowspan of the previous vectors. Let $S$ denote the subset of rows of $A$ selected by this algorithm. Using \eqref{eqn:inf-by-fro}, we note that for any $k$-dimensional subspace $V$,
\begin{align*}
    &\max_{i \in [n]} d(a_i, V) + \max_{i \in [k+1]}d((\delta/t)e_i, V)\\
    &\le \sqrt{\sum_{i=1}^{k+1}d((\delta/t)e_i, V)^2 + \sum_{i \in S}d(a_i, V)^2}
\end{align*}
which implies that
\[
\max_{i \in [n]} d(a_i, V) \le \sqrt{k+1}\frac{\delta}{t} + \sqrt{\sum_{i \in S}d(a_i, V)^2}.
\]
We now note that the online rank-$k$ condition number of the coreset computed by the algorithm must be bounded by $Rt/\delta$ since the first $k+1$ rows of the coreset are guaranteed to be $(\delta/t)e_1, \ldots, (\delta/t)e_{k+1}$. Thus, using Lemma~\ref{lma:sum-of-ridge-leverage-scores} we obtain $|S| = O(k \log (t|S|R/\delta)^2)$, which implies $|S| \le O(k \log (kt \cdot R/\delta)^3)$. If we pick $t = 2\sqrt{k+1}$, we obtain the following theorem.
\begin{theorem}
    Given that $\delta < \max_{k\text{-dim }V}\max_i d(a_i, V)$ and $\opnorm{a_i} < R$, we can compute a subset of rows $A_S$ of $A$ such that for any $k$-dimensional subspace $V$,
    \begin{align*}
        \max_i d(a_i, V) \le C\sqrt{k}(\log kR/\delta)^{3/2}\max_{i \in S}d(a_i, V)
    \end{align*}
    and $|S| = O(k \cdot (\log kR/\delta)^3)$. The space required of the algorithm is bounded by the amount of space required to store $O(|S|)$ rows of the matrix $A$.
    \label{thm:second-version}
\end{theorem}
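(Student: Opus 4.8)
The plan is to follow the template of the proof of Theorem~\ref{thm:first-version}, but to prepend $k+1$ suitably scaled coordinate vectors to the stream so that the online rank-$k$ condition number of the coreset produced by Algorithm~\ref{alg:efficient} is controlled by the aspect ratio $R/\delta$ instead of by any global condition number of $A$. Concretely, fix $t=2\sqrt{k+1}$ and run Algorithm~\ref{alg:efficient} on the augmented stream $(\delta/t)e_1,\ldots,(\delta/t)e_{k+1},a_1,\ldots,a_n$; let $S'$ be its output and let $S\subseteq[n]$ be the indices among the original $n$ rows that it selects. I would first check that all $k+1$ prepended vectors are selected: while any of the first $k$ of them is processed we have $\lambda=0$ and the vector lies outside the rowspan of its predecessors, so the first branch of the algorithm fires; and when $(\delta/t)e_{k+1}$ is processed we still have $\lambda=0$, since the $k$ rows stored so far have rank $k$, so it is accepted for the same reason. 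Thus $A_{S'}$ begins with the rows $(\delta/t)e_1,\ldots,(\delta/t)e_{k+1}$.

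For the distortion bound I would apply the strong-coreset inequality \eqref{eqn:inf-by-fro} to the augmented stream: for any $k$-dimensional $V$, the row of the augmented stream farthest from $V$ has squared distance at most $\sum_{j=1}^{k+1}d((\delta/t)e_j,V)^2+\sum_{i\in S}d(a_i,V)^2\le (k+1)(\delta/t)^2+|S|\max_{i\in S}d(a_i,V)^2$, where I used $d((\delta/t)e_j,V)\le\delta/t$. A short case analysis on which row is farthest --- an $a_i\in S$ (zero distortion), an $a_i\notin S$ (to which \eqref{eqn:inf-by-fro} applies directly), or some $(\delta/t)e_j$, which is impossible since it would force $\max_i d(a_i,V)\le\delta/t<\delta$, contradicting that $\delta$ is below the optimal cost $\Delta=\min_{k\text{-dim }W}\max_i d(a_i,W)$ --- gives in every case
\[
\max_{i\in[n]}d(a_i,V)\ \le\ \sqrt{k+1}\,\frac{\delta}{t}+\sqrt{|S|}\,\max_{i\in S}d(a_i,V).
\]
With $t=2\sqrt{k+1}$ the first term equals $\delta/2<\Delta/2\le\frac{1}{2}\max_i d(a_i,V)$, and rearranging gives $\max_i d(a_i,V)\le 2\sqrt{|S|}\,\max_{i\in S}d(a_i,V)$.

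It remains to bound $|S|$, and this is where the prepended vectors pay off. Exactly as in the analysis preceding Theorem~\ref{thm:first-version}, every row of $A_{S'}$ has online rank-$k$ ridge leverage score with respect to $A_{S'}$ at least $1/(1+1/k)$ (score $1$ for the first $k+1$ rows by definition, and $\ge 1/(1+1/k)$ for every later accepted row by the algorithm's test), so $|S'|\le(1+1/k)\sum_i\tau_i^{\OL,k}(A_{S'})$. Next I would bound the online rank-$k$ condition number of $A_{S'}$: since it begins with $(\delta/t)e_1,\ldots,(\delta/t)e_{k+1}$, the index $i^*$ in the definition is exactly $k$, every prefix of length $\le k+1$ has smallest nonzero singular value $\delta/t$, and $\opnorm{A_{S'}}\le\frnorm{A_{S'}}\le\sqrt{|S'|}\,R$ (using $\delta/t<R$, which holds because $\delta<\Delta\le\max_i\opnorm{a_i}<R$), so $\kappa(A_{S'})\le\sqrt{|S'|}\,Rt/\delta$. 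Plugging this into Lemma~\ref{lma:sum-of-ridge-leverage-scores} gives the self-referential bound $|S'|=O\big(k\,(\log(k^{3/2}\sqrt{|S'|}\,R/\delta))^2\big)$, and a routine bootstrap (any $N$ with $N\le c\,k(\log(kR/\delta)+\log N)^2$ satisfies $N=O(k(\log kR/\delta)^3)$) yields $|S|\le|S'|=O(k(\log kR/\delta)^3)$. Combining with the distortion bound gives $\max_i d(a_i,V)\le 2\sqrt{|S|}\max_{i\in S}d(a_i,V)=O(\sqrt{k}\,(\log kR/\delta)^{3/2})\max_{i\in S}d(a_i,V)$, i.e.\ the claimed bound with an appropriate constant $C$; the space usage is that needed to store $|S'|=O(|S|)$ rows, the prepended rows being trivial to represent. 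The step I expect to require the most care is this bootstrap --- making the self-referential size bound rigorous without appealing to any $n$-dependent a priori bound --- together with correctly identifying $i^*$ and the minimizing prefix in the condition-number estimate.
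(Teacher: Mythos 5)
Your proposal is correct and follows essentially the same route as the paper: prepend the $k+1$ scaled coordinate vectors $(\delta/t)e_1,\dots,(\delta/t)e_{k+1}$ with $t=2\sqrt{k+1}$, run Algorithm~\ref{alg:efficient} on the augmented stream, apply inequality~\eqref{eqn:inf-by-fro} together with $\delta<\Delta$ to peel off the additive $\sqrt{k+1}\,\delta/t$ term, bound the online rank-$k$ condition number of the coreset via the aspect ratio $R/\delta$ (using that the first $k+1$ coreset rows are the prepended vectors), and then invoke Lemma~\ref{lma:sum-of-ridge-leverage-scores} and bootstrap the resulting self-referential bound on $|S|$. You are, if anything, more explicit than the paper in the case analysis for the distortion bound and in spelling out the bootstrap, but the substance is the same.
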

A coreset $S$ of size $|S|$ and a distortion $\beta$ can also be used to quickly compute an approximate solution to the $\ell_{\infty}$ subspace approximation problem as follows. Let $V^*$ be the optimal solution for the $\ell_{\infty}$ subspace approximation problem on $A$ and $\tilde{V}$ denote the top-$k$ singular subspace of the coreset $A_S$, which can be computed using the singular value decomposition. Then,
\begin{align*}
    \max_i d(a_i, \tilde{V}) \le \beta \cdot \max_{i \in S} d(a_i, \tilde{V}) \le \beta \sqrt{\sum_{i \in S}d(a_i, \tilde{V})^2}.
\end{align*}
Since, $\tilde{V}$ is the top-$k$ singular subspace of the coreset $A_S$, we have 
$
    \sqrt{\sum_{i \in S}d(a_i, \tilde{V})^2} \le \sqrt{\sum_{i \in S}d(a_i, V^*)^2} 
$
which overall implies
\begin{align*}
    \max_i d(a_i, \tilde{V}) &\le \beta \sqrt{\sum_{i \in S}d(a_i, V^*)^2} 
    \le \beta\sqrt{|S|}\max_i d(a_i, V^*).
\end{align*}    
Hence, a $\beta\sqrt{|S|}$ approximation to the $\ell_{\infty}$ subspace approximation\footnote{In our case, the approximation factor is $O(k (\log n\kappa)^2)$.} problem can be obtained without using any SDP based algorithms from previous works. We can additionally initialize an alternating minimization algorithm on the coreset for $\ell_{\infty}$ subspace approximation using the SVD subspace of the coreset and use convex optimization solvers to further improve the quality of the solution. We do note that there are no known bounds on the solution quality attained by the alternating minimization algorithm. 

By a simple (lossy) reduction of the outer $(d-k)$ radius estimation problem to computing optimal $\ell_{\infty}$ subspace approximation of the matrix $B = A - a_1$,  i.e., the matrix obtained by subtracting $a_1$ from each row of $A$, we obtain the following theorem using the coreset bounds in Theorem~\ref{thm:second-version}.
\begin{theorem}[Outer $(d-k)$ radius estimation]
    Given $0 = a_1 - a_1, \ldots, a_n - a_1$, if a streaming algorithm computes a coreset $S$ with distortion $\beta$, then the outer $(d-k)$ radius of the point set $S$ is an $O(\beta)$ approximation to the outer $(d-k)$ radius of the entire point set.

Given that the online rank-$k$ condition number of the matrix $A-a_1$ is $\kappa'$, the outer $(d-k)$ radius of the point set can be approximated up to a $\sqrt{k} \cdot \log n\kappa'$ factor by computing the outer $(d-k)$ radius of the coreset points.
    \label{thm:outer-radius}
\end{theorem}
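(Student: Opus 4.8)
The plan is to reduce the outer $(d-k)$ radius problem to $\ell_\infty$ subspace approximation on a translated matrix, and then invoke the strong coreset guarantee (Theorem~\ref{thm:second-version}, or Theorem~\ref{thm:first-version}) applied to that matrix. The key observation is that a $k$-dimensional flat $F$ through the point $a_1$ is exactly $a_1 + V$ for a $k$-dimensional linear subspace $V$, and then $d(a_i, F) = d(a_i - a_1, V)$. So if we define $B = A - a_1$ (the matrix whose $i$-th row is $b_i := a_i - a_1$, with $b_1 = 0$), then $\min_{k\text{-dim }V}\max_i d(b_i, V)$ equals the best radius over flats of the form $a_1 + V$. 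This need not equal the true outer $(d-k)$ radius, which optimizes over \emph{all} $k$-flats, not just those passing through $a_1$; but it is at most a factor of $2$ away, since if $F^\star$ is the optimal flat with radius $r^\star$, then $a_1$ is within $r^\star$ of $F^\star$, so translating $F^\star$ to pass through $a_1$ increases every distance $d(a_i, F^\star)$ by at most $r^\star$, giving a flat through $a_1$ of radius $\le 2r^\star$. Conversely any flat through $a_1$ is a valid competitor for the true outer radius. Hence the ``restricted-to-$a_1$'' radius, which equals the $\ell_\infty$ subspace-approximation cost $\Delta(B)$ of $B$, is a $2$-approximation to the true outer $(d-k)$ radius; this is the ``simple (lossy) reduction'' referenced before the statement.

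Next I would run Algorithm~\ref{alg:efficient} on the stream $b_1, \ldots, b_n$ (equivalently, feed $a_2 - a_1, \ldots, a_n - a_1$ into the algorithm, noting $b_1 = 0$ contributes nothing and can be dropped). Let $S$ be the selected subset. By the strong coreset guarantee, for every $k$-dimensional subspace $V$,
\[
\max_{i \in S} d(b_i, V) \le \max_{i \in [n]} d(b_i, V) \le \beta \cdot \max_{i \in S} d(b_i, V),
\]
where $\beta$ is the distortion. Taking $\min$ over $k$-dimensional $V$ on both sides, the coreset's restricted outer radius $\min_V \max_{i\in S} d(b_i,V)$ and the full restricted outer radius $\Delta(B) = \min_V \max_{i \in [n]} d(b_i, V)$ satisfy
\[
\min_V \max_{i \in S} d(b_i, V) \;\le\; \Delta(B) \;\le\; \beta \cdot \min_V \max_{i \in S} d(b_i, V);
\]
the left inequality is immediate (fewer points), and the right follows by fixing $V$ to be the optimizer for the coreset and applying the distortion bound. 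Combining with the factor-$2$ reduction above, computing the outer $(d-k)$ radius of the coreset point set $\{b_i : i \in S\}$ (equivalently, of $\{a_i - a_1 : i \in S\}$) gives an $O(\beta)$ approximation to the true outer $(d-k)$ radius of the full point set. This proves the first part. For the second part, instantiate $\beta = O(\sqrt{k}\log(n\kappa'))$ from Theorem~\ref{thm:first-version} applied to $B = A - a_1$, whose online rank-$k$ condition number is $\kappa'$ by hypothesis, giving the stated $\sqrt{k}\log(n\kappa')$ factor.

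The main subtlety to get right — and the only real obstacle — is the interchange of $\min_V$ with the two-sided coreset inequality: the coreset guarantee is \emph{uniform} over all $k$-subspaces $V$, which is exactly what licenses taking the minimum on both sides (a weak/per-instance coreset would not suffice here), so I would state this step carefully. A secondary point is verifying that the outer radius of the coreset is computed over flats, which brings back the same factor-$2$ slack when relating flats-through-$a_1$ to general flats on the coreset side; since we only claim an $O(\beta)$ bound, these constant factors are harmless. Everything else — the reduction inequality, the distortion substitution, and plugging in the quantitative $\beta$ — is routine.
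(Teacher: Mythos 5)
Your proposal is correct and follows essentially the same route as the paper: reduce the outer $(d-k)$ radius problem to $\ell_\infty$ subspace approximation on $B = A - a_1$ (losing a factor of $2$ because the optimal flat is within distance $r^\star$ of $a_1$ and can be translated through $a_1$), run Algorithm~\ref{alg:efficient} on $B$, and use the uniform-over-$V$ coreset guarantee to pass the $\min_V$ through the two-sided inequality. The paper phrases the factor-$2$ step algebraically via $\|(A - c + c - a_1)(I-\Proj_V)\|_{\infty,2} \le \|(A-c)(I-\Proj_V)\|_{\infty,2} + \opnorm{(I-\Proj_V)(a_1-c)}$, which is the same content as your geometric ``translate the flat'' argument.
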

\subsection{Fast Implementation of Algorithm~\ref{alg:efficient}}
Note that the set $S$ and hence the value $\lambda$ are updated only at most $O(k\log(n \cdot \kappa)^2)$ times in the stream. Hence, if we compute the singular value decomposition of $A_S$ each time $S$ is updated, we only spend at most $O(d \poly(k, \log n\kappa))$ time in total. Let $U\Sigma\T{V} = A_S$ be the ``thin'' singular value decomposition of $A_S$. Then given any vector $a$, we can compute $\T{a}(\T{A_S}A_S + \lambda I)^{+}a$ as $\opnorm{\Sigma^{-1}\T{V}a}^2 + (1/\lambda)\opnorm{(I-V\T{V})a}^2 = \opnorm{Ma}^2$ where $M$ is defined as the matrix obtained by concatenating $\Sigma^{-1}\T{V}$ and $(1/\sqrt{\lambda})(I-V\T{V})$. 

Now, if $\bG$ is a Gaussian matrix with $O(\log n)$ rows, we can approximate $\opnorm{Ma_i}^2$ with $\opnorm{\bG M a_i}^2$ up to constant factors for all the \emph{future} rows $a_i$. Suppose each time $S$ is updated, we compute the matrix $M$ and sample a Gaussian matrix $\bG$ and then compute $\bG M$ which has $O(\log n)$ rows. Then the online rank-k ridge leverage score of any row $a_i$ that appears in the stream can be approximated as $\opnorm{(\bG M)a_i}^2$ in time $O(\text{nnz}(a_i) \log n)$, since the matrix $\bG M$ has only $O(\log n)$ rows. Thus the overall algorithm can be implemented in time $O(\nnz(A)\log n + d \cdot \poly(k, \log n\kappa))$. We implement this algorithm and find that it runs very fast on large datasets.
\section{Lower Bounds}
The algorithm in the previous section uses $O(dk (\log n\kappa)^2)$ bits of space to process a stream of $n$ rows in $\R^d$ and outputs a strong coreset with a distortion at most $O(C\sqrt{k}\log n\kappa)$, where $\kappa$ is the condition number. We show that any algorithm that constructs a strong coreset with distortion $O(\sqrt{k/\log n})$ must use $\Omega(n)$ bits of space. This shows that our algorithm obtains the best possible distortion bounds up to $\poly(\log n\kappa)$ factors. Our argument is similar to that of \cite{wy22}. We state the lower bound in the following theorem.
\begin{theorem}
Given parameters $n$, $d$ and $k$ with $k = \Omega(\log n)$, any streaming algorithm that computes a strong coreset with distortion at most $O(\sqrt{k/\log n})$ with probability $\ge 9/10$ must use $\Omega(n)$ bits of space.
\end{theorem}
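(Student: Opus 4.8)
The plan is to reduce from the one-way randomized communication complexity of the Augmented Indexing problem (or, following \cite{wy22}, a direct encoding argument), exploiting the fact that a strong coreset with distortion $o(\sqrt{k/\log n})$ must ``remember'' which of $\Omega(n)$ rows are large relative to a carefully chosen subspace. Concretely, I would fix $d = k+1$ (or a small multiple of $k$) and construct a hard instance as follows. Let $r = \Theta(n)$ and partition the stream into $r$ blocks. In each block we will place a vector that is either ``heavy'' or ``light'' in a direction that is only revealed at the end of the stream. The idea is that, for an appropriate choice of the final query subspace $V$, a heavy vector has distance $\approx 1$ to $V$ while a light vector has distance $0$ (or distance $\ll 1/\sqrt{k/\log n}$), so that $\max_{i\in[n]} d(a_i,V) \approx 1$ iff at least one heavy vector was planted. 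A coreset with distortion $\beta = o(\sqrt{k/\log n})$ would then have to contain a heavy row whenever one exists, and by varying which block is ``queried'' we can extract a single planted bit, forcing the space to grow linearly in $r = \Theta(n)$.

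The key steps, in order, would be: (1) Set up the communication game: Alice holds a uniformly random string $z \in \{0,1\}^r$ and Bob holds an index $j^* \in [r]$ together with the suffix $z_{j^*+1},\ldots,z_r$ (Augmented Indexing), whose one-way complexity is $\Omega(r)$. (2) Have Alice feed into the streaming algorithm, for each block $j$, a group of $\Theta(\log n)$ carefully scaled copies of coordinate vectors whose scaling encodes $z_j$; using $\Theta(\log n)$ copies per block is where the $k = \Omega(\log n)$ hypothesis is used, since we need $k$ large enough to accommodate a $\Theta(\log n)$-dimensional ``gadget'' per relevant block while keeping everything within an ($k{+}1$)-dimensional ambient space by reusing coordinates across blocks after appropriate rotation. (3) Alice sends the memory state of the streaming algorithm to Bob. (4) Bob, knowing $j^*$ and the suffix bits, appends a short continuation of the stream (a few more rows depending on $z_{j^*+1},\ldots,z_r$) so that the final instance has the property that $\max_i d(a_i, V_{j^*}) = 1$ if $z_{j^*}=1$ and $\le 1/\beta$ if $z_{j^*}=0$, for an explicit query subspace $V_{j^*}$. (5) Bob reads the returned coreset $S$; since a valid coreset must satisfy $\max_{i\in S} d(a_i,V_{j^*}) \ge (1/\beta)\max_{i\in[n]} d(a_i,V_{j^*})$, the two cases are separated and Bob recovers $z_{j^*}$ with probability $\ge 9/10$. (6) Conclude that the streaming algorithm uses $\Omega(r) = \Omega(n)$ bits.

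The main obstacle will be step (2)/(4): designing the gadget so that a single query subspace $V_{j^*}$ cleanly distinguishes $z_{j^*}=0$ from $z_{j^*}=1$ while the contributions of all \emph{other} blocks (both heavy and light) stay below the $1/\beta$ threshold for that same subspace, and simultaneously ensuring the ambient dimension stays $O(k)$ rather than $O(k \cdot r)$. The standard trick is to scale block $j$'s vectors by a geometrically decreasing factor $\gamma^{j}$ for a suitable $\gamma < 1$, and to have Bob ``cancel'' the later blocks using the suffix information, so that after Bob's continuation only block $j^*$ is at scale $1$ and all earlier blocks are negligible because their scale is at most $\gamma$; the distortion gap one can afford is exactly $\Theta(\sqrt{k/\log n})$ because the $\Theta(\log n)$-fold repetition inflates the Frobenius-type bound that the coreset inequality allows, which is what pins down the $\sqrt{k/\log n}$ threshold. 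One must check the arithmetic of $\gamma$, the number of repetitions, and the condition number $\kappa$ (which should be kept $\mathrm{poly}(n)$ so the bound is meaningful), but these are routine once the gadget geometry is fixed. I would model the details closely on the lower bound of \cite{wy22}, adapting their $\ell_\infty$ subspace-embedding hard instance to the subspace-approximation (distance-to-subspace) setting.
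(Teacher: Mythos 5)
Your proposal is structured around a reduction from Augmented Indexing with geometric scaling and a ``cancellation'' step, and this is where it breaks: in an insertion-only stream, Bob cannot cancel the contribution of later blocks by appending rows. The coreset distortion is measured through $\max_{i\in[n]} d(a_i, V)$, and appending more points can only increase this maximum, never decrease it. So if the scaling puts blocks after $j^*$ at a larger scale, Bob cannot suppress them; if it puts them at a smaller scale, then there is nothing to cancel and the suffix information is useless; and if earlier blocks are at a larger scale, they mask block $j^*$ and Bob has no control over them at all. The gadget you describe as the ``main obstacle'' is, in fact, not constructible in this framework without a turnstile model, so the reduction does not go through as sketched. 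Your explanation of where $\sqrt{k/\log n}$ comes from (``$\Theta(\log n)$-fold repetition inflates the Frobenius-type bound'') is also not the mechanism at play; in the actual argument the $\sqrt{\log n}$ arises as the typical magnitude of pairwise inner products of random $\pm 1$ vectors in $\R^k$ after a union bound over $\binom{n}{2}$ pairs.

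The paper's proof avoids all of this by being a direct entropy-counting / one-way encoding argument (morally a reduction from INDEX, not Augmented Indexing). Alice streams a uniformly random subset $\bS\subseteq[2n]$ of size $n$ of fixed random $\pm 1$ vectors supported on the first $k$ coordinates. Each such $a_i$ has $\|a_i\| = \sqrt k$, while Hoeffding plus a union bound gives $|\langle a_i, a_j\rangle| \le O(\sqrt{k\log n})$ for all $i\ne j$ with probability $9/10$. The query subspace $V_i = \mathrm{span}(e_1,\ldots,e_k)\cap a_i^\perp$ then satisfies $d(a_i,V_i)=\sqrt k$ and $d(a_j,V_i) = |\langle a_j,a_i\rangle|/\|a_i\| \le O(\sqrt{\log n})$ for $j\ne i$; the gap between these two quantities is exactly $\Theta(\sqrt{k/\log n})$, so any strong coreset with distortion below that threshold lets Bob recover membership of every $i$ in $\bS$ by evaluating the coreset at $V_i$. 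Since $H(\bS) = \log_2\binom{2n}{n} = \Omega(n)$ and the coreset determines $\bS$ with probability $9/10$, a mutual-information / Fano-style calculation forces the coreset to occupy $\Omega(n)$ bits. No geometric decay, no cancellation, and no suffix knowledge are needed. If you want to salvage your plan, replace Augmented Indexing with the set-disjointness/INDEX-flavored encoding and replace the block gadget with random nearly-orthogonal vectors and per-point orthogonal-complement query subspaces; that is essentially the paper's construction.
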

\begin{proof}
    Let $n$, $d$ and $k$ be arbitrary. Let $a_1, \ldots, a_{2n} \in \R^{d}$ be random vectors sampled as follows: each of the first $k$ entries of each $a_i$ is set to $+1/-1$ with equal probability. The remaining $d-k$ coordinates of each $a_i$ are set to $0$.

    Note that $\opnorm{a_i}^2 = k$ for all $i$. For arbitrary $i \ne j$, consider $|\la a_i, a_j\ra|$. By Hoeffding's inequality, with probability $\ge 1 - \delta$, $|\la a_i, a_j\ra| \le O(\sqrt{k\log 1/\delta})$. Setting $\delta = 1/10n^2$ and using a union bound, we obtain that with probability $\ge 9/10$, for all $i \ne j$, $|\la a_i, a_j\ra| \le O(\sqrt{k\log n})$. Condition on this event. Let $\bS \subseteq [2n]$, $|\bS| = n$ be a uniformly random subset of $[2n]$ of size $n$.

    Consider the stream of vectors $(a_i)_{i \in \bS}$. Let $\calC$ be a randomized algorithm that computes a strong coreset with distortion $\alpha \le O(\sqrt{k/\log n})$ with probability $\ge 9/10$. Let $\calC((a_i)_{i \in \bS})$ be the output of the  algorithm $\calC$ on the stream $(a_i)_{i \in \bS}$. Condition on the event that $\calC((a_i)_{i \in \bS})$ is a strong coreset. We now argue that if $\alpha$ is not too large, we can compute the set $\bS$ from the coreset $\calC((a_i)_{i \in \bS})$. 

    Given a strong coreset $M$ with distortion $\alpha$ for the stream $(a_i)_{i \in \bS}$, and a rank-$k$ subspace $V$, let $M(V)$ be the value computed using the coreset such that
    \begin{align*}
        M(V) \le \max_{i \in \bS}d(a_i, V) \le \alpha \cdot M(V).
    \end{align*}
    For each $i \in [2n]$, consider the subspace $V_i = \text{span}(e_1, \ldots, e_k)\, \cap\, a_i^{\perp}$,
    where $a_i^{\perp}$ denotes the subspace orthogonal to the vector $a_i$. We now note the following:
    \vspace{-1em}
    \begin{itemize}\itemsep 0em
        \item $d(a_i, V_i) = \opnorm{a_i} = \sqrt{k}$
        \item For all $j \ne i$, $d(a_j, V_i) = |\la a_j, a_i\ra|/\opnorm{a_i} \le O(\sqrt{\log n}).$
    \end{itemize}
    \vspace{-1em}
    Therefore if $i \in \bS$, then $\calC((a_j)_{j \in \bS})(V_i) \ge \sqrt{k}/\alpha$ and if $i \notin \bS$, then $\calC((a_j)_{j \in \bS})(V_i) \le O(\sqrt{\log n})$. If the distortion $\alpha \le \sqrt{k/\log n}$, then by enumerating over all $V_i$ for $i \in [2n]$ and computing $\calC((a_j)_{j \in \bS})(V_i)$, we can determine the set $\bS$.

    Let $\bS'$ be the set computed by the enumeration algorithm. If $|\bS'| \ne n$, set $\bS'$ to $\set{1, 2, \ldots, n}$. By the above discussion, we have $\Pr[\bS' = \bS] \ge 9/10$. Note that the entropy of the set $\bS$ is $t = \Omega(n)$ where $2^t = \binom{2n}{n}$ is the number of subsets of $[2n]$ of size $[n]$.
    
    We now upper bound the conditional entropy  $H(\bS' \mid \bS)$. Let $\bI$ denote the indicator random variable denoting if the coreset construction algorithm succeeds. Note that given $\bI = 1$, we have $\bS = \bS'$. We have $H((\bS, \bS')) = H(\bS) + I(\bS\, ;\, \bS')$ and 
    \begin{align*}
        H((\bS, \bS')) &\le H((\bS, \bS', \bI)) = H(\bS) + H(\bI \mid \bS) + H(\bS' \mid \bI, \bS)
    \end{align*}
    and therefore, $I(\bS\, ;\, \bS') \le H(\bI \mid \bS) + H(\bS' \mid \bI, \bS)$. Since we assumed that the coreset construction algorithm succeeds with probability $\ge 9/10$ given any instance, we have $H(\bI \mid \bS) \le (9/10)\log_2(10/9) + (1/10)\log_2(10) \le 1/2$. Now,
    \begin{align*}
        &H(\bS' \mid \bI, \bS)\\
        &= \sum_{S} \Pr[\bS = S] \cdot \left[ H(\bS' \mid \bS = S, \bI = 0) \cdot \Pr[\bI = 0 \mid \bS = S]\right. \\
        &\qquad \left.+ H(\bS' \mid \bS = S, \bI = 1) \cdot \Pr[\bI = 1 \mid \bS = S]\right]\\
        &\le \sum_{S} \Pr[\bS = S] \cdot H(\bS'\ \mid \bS = S, \bI = 0) \cdot (1/10)
    \end{align*}
    
    where we used the fact that if $\bI = 1$, then $\bS' = \bS$ and therefore $H(\bS' \mid \bS = S, \bI = 1) = 0$. Since the output $\bS'$ is always a subset of $[2n]$ of size $n$, we have $H(\bS' \mid \bS = S, \bI = 0) \le \log_2\binom{2n}{n} = t$ which then implies
$
        H(\bS' \mid I, \bS) \le t/10.
 $
 Hence the mutual information $I(\bS\, ;\, \bS') \ge 9t/10 - 1/2$ and by the data processing inequality, we have
    \begin{align}
        I(\calC((a_i)_{i \in \bS})\, ;\, \bS) \ge 9t/10 - 1/2
    \end{align}
    which implies that the space necessary to store the coreset is $\Omega(n)$ bits since $t = \log_2 \binom{2n}{n} = \Omega(n)$.
\end{proof}

\section{\texorpdfstring{$\ell_p$}{lp} Subspace Approximation}\label{sec:lp-lra}
\vspace{-0.5em}
We now show that our coreset construction algorithm for the $\ell_{\infty}$ subspace approximation problem, extends to the $\ell_p$ subspace approximation problem. Fix a matrix $A$. For any $k$-dimensional subspace $V$, let $d_V$ denote the non-negative vector satisfying
$
    (d_V)_i = \dist(a_i, V) = \opnorm{\T{a_i}(I-\Proj_V)}.
$
Hence, the $\ell_p$ subspace approximation problem is to find the rank-$k$ subspace $V$ that minimizes $\lp{d_V}$. We use exponential random variables to embed an $\ell_p$ low rank approximation problem into an $\ell_{\infty}$ low rank approximation problem. We then use the coreset construction algorithm for $\ell_{\infty}$ LRA to obtain a coreset for the $\ell_p$ LRA. First, we have the following lemma about exponential random variables that has been used in various previous works to embed $\ell_p$ problems into an $\ell_{\infty}$ problem.
\begin{lemma}
    Let $\be_1,\ldots,\be_n$ be independent exponential random variables. Then with probability $\ge 1 - \delta$,
    $
        \max_i \be_i^{-1/p}|x_i| \ge {\lp{x}}/{(\log 1/\delta)^{1/p}}.
    $
We also have that with probability $\ge 1 - \delta$,
$
    \max_i \be_i^{-1/p}|x_i| \le \delta^{-1/p} \cdot \lp{x}.
$
\label{lma:exponential}
\end{lemma}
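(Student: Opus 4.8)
The plan is to use the \emph{min-stability} property of exponential random variables. Rewrite the quantity of interest as
\[
    \max_i \be_i^{-1/p}|x_i| = \Bigl(\max_i \tfrac{|x_i|^p}{\be_i}\Bigr)^{1/p},
\]
so it suffices to control the distribution of $Y \coloneqq \max_i |x_i|^p/\be_i$ and then take $p$-th roots. Throughout we may assume $x \ne 0$ (otherwise both bounds are trivial), and we may restrict every maximum and sum below to the support $\{i : x_i \ne 0\}$, since coordinates with $x_i = 0$ do not affect the maximum.

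First I would record the elementary scaling fact: since $\be_i$ is a standard exponential, i.e.\ $\Pr[\be_i > t] = e^{-t}$ for $t \ge 0$, the scaled variable $\be_i/|x_i|^p$ is exponential with rate $|x_i|^p$. The key step is then the min-stability computation: by independence of the $\be_i$,
\[
    \Pr\Bigl[\min_i \tfrac{\be_i}{|x_i|^p} > t\Bigr] = \prod_i \Pr\Bigl[\tfrac{\be_i}{|x_i|^p} > t\Bigr] = \prod_i e^{-|x_i|^p t} = e^{-\lp{x}^p\, t},
\]
so that $Z \coloneqq \min_i \be_i/|x_i|^p$ is exponential with rate $\lp{x}^p$, and $Y = 1/Z$.

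The two stated bounds now follow by evaluating the CDF of $Z$. For the lower bound, $\Pr[\,Y \ge \lp{x}^p/\log(1/\delta)\,] = \Pr[\,Z \le \log(1/\delta)/\lp{x}^p\,] = 1 - e^{-\log(1/\delta)} = 1-\delta$; taking $p$-th roots yields $\max_i \be_i^{-1/p}|x_i| \ge \lp{x}/(\log 1/\delta)^{1/p}$ with probability at least $1-\delta$. For the upper bound, $\Pr[\,Y \le \lp{x}^p/\delta\,] = \Pr[\,Z \ge \delta/\lp{x}^p\,] = e^{-\delta} \ge 1-\delta$; taking $p$-th roots yields $\max_i \be_i^{-1/p}|x_i| \le \delta^{-1/p}\lp{x}$ with probability at least $1-\delta$.

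I do not anticipate a substantive obstacle: the statement is a direct consequence of min-stability together with the exponential CDF and monotonicity of $t \mapsto t^{1/p}$. The only points needing a word of care are the degenerate case $x = 0$ and the handling of zero coordinates of $x$, both dispatched as noted above; and, if one wants to be pedantic, the observation that $\be_i > 0$ almost surely so that the ratios $|x_i|^p/\be_i$ are well defined.
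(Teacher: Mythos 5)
Your proof is correct and follows essentially the same route as the paper: both invoke min-stability of exponential random variables to identify $\max_i \be_i^{-1}|x_i|^p$ in distribution with $\be^{-1}\lp{x}^p$ for a standard exponential $\be$, then read off the two tail bounds from the exponential CDF. You spell out the product computation behind min-stability explicitly, but the substance is the same as the paper's argument.
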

\begin{proof}
By min-stability of exponential random variables, we have that the distribution of $\max_i \be^{-1}|x_i|^p$ is the same as the distribution of $\be^{-1}\lp{x}^p$ where $\be$ is also a standard exponential random variable. With probability $\ge 1 - \delta$, we have $\be \le \log 1/\delta$. And hence we have that with probability $\ge 1 - \delta$,
\begin{align*}
    \max_i \be_i^{-1/p}|x_i| = (\max_i \be_i^{-1}|x_i|^p)^{1/p} \ge \frac{\lp{x}}{(\log 1/\delta)^{1/p}}.
\end{align*}

With probability $\ge 1 - \delta$, we also have that $\be \ge \delta$ which implies that with probability $\ge 1 - \delta$, $\max_i \be_i^{-1/p}|x_i| = (\max_i \be_i^{-1}|x_i|^p)^{1/p} \le \lp{x}\delta^{-1/p}$.
\end{proof}

Given $n$, define $\bD$ to be a random matrix with diagonal entries given by independent copies of the random variable $\be^{-1/p}$. For any fixed rank $k$ projection matrix $P$, the above lemma implies that $\|\bD A(I-P)\|_{\infty, 2} \ge \|A(I-P)\|_{p,2}/(\log 1/\delta)^{1/p}$. However, we cannot union bound over the net of all $k$-dimensional subspaces of $\R^d$ since the net can have as many as $\exp(dk)$ subspaces which leads to a distortion of $d^{1/p}$, which is prohibitive. Here we crucially use the fact that Algorithm~\ref{alg:efficient} only selects a coreset with $m = O(k \cdot (\log n\kappa)^2)$ rows. Thus, only those $k$-dimensional subspaces spanned by at most $m$ rows of $A$ are of interest to us. Now, we can union bound over a net of $\exp(\poly(k, \log n\kappa))$ subspaces and show the following lemma:
\begin{lemma}
    Let $\bD$ be an $n \times n$ diagonal matrix with each diagonal entry being an independent copy of the random variable $\ceil{\be^{-1/p}}$. Fix an $n \times d$ matrix $A$. With probability $\ge 98/100$, for all $k$-dimensional subspaces that are in the span of at most $m = O(k \log^2 n\kappa)$ rows of $A$, we have,
\begin{align*}
    \linf{\bD \cdot d_V} \ge \frac{\lp{d_V}}{2(\log 100 + m \log n + km \log n\kappa)^{1/p}}. 
\end{align*}
\label{lma:net-argument}
\end{lemma}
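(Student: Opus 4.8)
The plan is to prove the displayed inequality for a single fixed relevant subspace using the first part of Lemma~\ref{lma:exponential}, and then take a union bound over an $\epsilon$-net of subspaces of size $\exp(\poly(k,\log n\kappa))$, with a final discretization step to pass from net points to arbitrary relevant subspaces. One free observation simplifies things: since $\lceil x\rceil \ge x$, we have $\linf{\bD x}=\max_i \lceil \be_i^{-1/p}\rceil\,|x_i|\ \ge\ \max_i \be_i^{-1/p}|x_i|$ for every $x$, so the ceiling can only help the lower bound and Lemma~\ref{lma:exponential} applies verbatim to $d_V$. Before the net argument, I would condition on one easy high-probability fact about $\bD$ alone: with probability $\ge 99/100$, $\linf{\bD}=\max_i \lceil \be_i^{-1/p}\rceil = O(n^{1/p})$, using $\Pr[\be_i^{-1/p}>t]=1-e^{-t^{-p}}\le t^{-p}$ and a union bound over $i$.

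Next, the net. Any relevant $V$ is a $k$-dimensional subspace of $W_T:=\operatorname{span}(\{a_i:i\in T\})$ for some $T\subseteq[n]$ with $|T|\le m$; there are at most $n^m$ such $T$. For each $T$, take a standard $\epsilon$-net $\mathcal N_T$ on the Grassmannian of $k$-dimensional subspaces of $W_T$ in the metric $\|\Proj_V-\Proj_{V'}\|_2$, which can be done with $|\mathcal N_T|\le (C/\epsilon)^{km}$. Let $\mathcal N=\bigcup_T \mathcal N_T$, so $\log|\mathcal N|\le m\log n+O(km\log(1/\epsilon))$. For a fixed $V'\in\mathcal N$ the vector $d_{V'}$ is independent of $\bD$, so Lemma~\ref{lma:exponential} with failure probability $\delta=(1/100)/|\mathcal N|$ gives $\linf{\bD d_{V'}}\ge \lp{d_{V'}}/(\log(100|\mathcal N|))^{1/p}$ with probability $\ge 1-\delta$; a union bound over $\mathcal N$ makes this hold simultaneously for all $V'\in\mathcal N$ with probability $\ge 99/100$. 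Together with the conditioning above, all the good events hold with probability $\ge 98/100$.

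Now the discretization. Given an arbitrary relevant $V\subseteq W_T$, pick $V'\in\mathcal N_T$ with $\|\Proj_V-\Proj_{V'}\|_2\le\epsilon$. Then $|(d_V)_i-(d_{V'})_i|\le \opnorm{(\Proj_V-\Proj_{V'})a_i}\le\epsilon\opnorm{a_i}\le\epsilon\opnorm{A}$, which yields $\lp{d_V-d_{V'}}\le \epsilon\,n^{1/p}\opnorm{A}$ and $\linf{\bD(d_V-d_{V'})}\le\epsilon\linf{\bD}\opnorm{A}=O(\epsilon\,n^{1/p}\opnorm{A})$. Chaining with the net bound,
\begin{align*}
 \linf{\bD d_V}\ \ge\ \linf{\bD d_{V'}}-O(\epsilon\,n^{1/p}\opnorm{A})\ &\ge\ \frac{\lp{d_{V'}}}{(\log(100|\mathcal N|))^{1/p}}-O(\epsilon\,n^{1/p}\opnorm{A})\\
 &\ge\ \frac{\lp{d_V}}{(\log(100|\mathcal N|))^{1/p}}-O(\epsilon\,n^{1/p}\opnorm{A}).
\end{align*}
Hence, writing $L:=\log 100+m\log n+km\log n\kappa$ and choosing $\epsilon$ so that $\log(100|\mathcal N|)\le L$, it suffices (to get $\linf{\bD d_V}\ge \lp{d_V}/(2L^{1/p})$) that $\lp{d_V}\ge C'\epsilon\,n^{1/p}L^{1/p}\opnorm{A}$, or that $\lp{d_V}=0$, in which case the bound is trivial. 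So everything reduces to choosing $\epsilon$ small enough, and to checking that this $\epsilon$ still makes the net size small.

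The main obstacle is exactly this tension: $\lp{d_V}$ can be positive yet extremely small, so a priori the required $\epsilon$ — hence $\log(1/\epsilon)$ and $\log|\mathcal N|$ — might blow up. The resolution I would pursue is to show that over all relevant $V$, $\lp{d_V}\ne 0$ forces $\lp{d_V}\ge \opnorm{A}/\poly(n\kappa)$: indeed $\lp{d_V}\ge \max_i d(a_i,V)$, and for $a_i\notin V$ with $V=\operatorname{span}(\{a_j:j\in T\})$, $d(a_i,V)$ is a ratio of Gram-determinant ``volumes'' of row-subsets of size $\le m+1$, which is bounded away from $0$ by a $\poly(1/(n\kappa))$ factor — this is precisely the role of the bit-complexity / online rank-$k$ condition-number assumption (under integer entries bounded by $\poly(n)$ one may take $\kappa=n^{O(k)}$). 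Taking $\epsilon=1/\poly(n\kappa)$ below this threshold divided by $C' n^{1/p}L^{1/p}$, we get $\log(1/\epsilon)=O(\log n\kappa)$, hence $\log(100|\mathcal N|)=O(\log 100+m\log n+km\log n\kappa)$, which (with the net construction tuned so the implied constant is absorbed into the leading $2$) matches the quantity $L$ in the statement. A secondary technical point worth flagging is that the two perturbation error terms must both be absorbed: the $\lp{\cdot}$ perturbation, which is damped by $(\log(100|\mathcal N|))^{-1/p}$, and the $\linf{\bD\,\cdot}$ perturbation, which carries the factor $\linf{\bD}=O(n^{1/p})$; the latter is harmless only because the $n^{1/p}$ costs merely an additive $O(\log n)$ inside $\log(1/\epsilon)$, which is already subsumed by the $m\log n$ term in $L$.
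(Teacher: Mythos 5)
Your overall architecture — lower bound from Lemma~\ref{lma:exponential} for a single net subspace, union bound over $\exp(m\log n + km\log(1/\epsilon))$ subspaces indexed by a choice of $\le m$ rows and a Grassmannian net, conditioning on $\linf{\bD}=O(n^{1/p})$, and then chaining back from net points via a Lipschitz bound on $V\mapsto d_V$ — matches the paper's proof. (The paper nets unit vectors inside each candidate rowspace and assembles approximate bases $\tilde v_1,\ldots,\tilde v_k$ rather than netting the Grassmannian directly, but that difference is cosmetic.)

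There is, however, a genuine gap in the step you yourself flag as the ``main obstacle.'' You propose to show that every relevant $V$ with $d_V\ne 0$ satisfies $\lp{d_V}\ge \opnorm{A}/\poly(n\kappa)$ by writing $d(a_i,V)$ as a Gram-determinant ratio of row subsets ``for $a_i\notin V$ with $V=\operatorname{span}(\{a_j:j\in T\})$.'' This fails for two reasons. First, the lemma is stated for \emph{all} $k$-dimensional subspaces $V$ contained in $\operatorname{span}(\{a_j:j\in T\})$, not just the ones spanned by rows of $A$; a generic $k$-plane inside $W_T$ has no Gram-determinant representation in terms of rows. Second, even fixing such a $V$, the pointwise quantity $d(a_i,V)$ for a single $i$ can be arbitrarily small when $V$ is tilted slightly away from a row it nearly contains, so $\max_i d(a_i,V)$ has no determinant-type floor that is uniform over the continuum of $V$. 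What \emph{is} uniformly bounded below is the whole vector, and the paper exploits exactly this: for any $k$-dimensional $V$, Eckart--Young gives $\opnorm{d_V}=\frnorm{A(I-\Proj_V)}\ge \frnorm{A-[A]_k}$, hence $\lp{d_V}\ge \opnorm{d_V}/\sqrt{n}\ge \frnorm{A-[A]_k}/\sqrt{n}$, a single $V$-independent threshold that it then relates to $\poly(1/(n\kappa))$. Replacing your determinant claim with this Eckart--Young bound closes the gap and recovers the paper's choice $\gamma=\poly(1/(n\kappa))$; the rest of your argument (including your observation about the two error terms and the harmless $n^{1/p}$ factor from $\linf{\bD}$) then goes through as you describe.
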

\begin{proof}
Let $S$ be an arbitrary set of $m \le K$ rows of $A$ and let $V_S \coloneqq \text{rowspace}(A_S)$. Let $N_S$ be a $\gamma$ net for the set $V_S \cap \mathbb{S}^{d-1}$ i.e., the set of vectors in the subspace $V_S$ with euclidean norm $1$. As the subspace $V_S$ has dimension at most $m$, we have that there is a set $N_S$ with size at most $\exp(O(m\log 1/\gamma))$. Let $V$ be an arbitrary $k$ dimensional subspace of $V_S$ and let $\set{v_1,\ldots,v_k}$ be an orthonormal basis for $V$. 

Let $\tilde{V}$ be the subspace spanned by $\set{\tilde{v}_1,\ldots,\tilde{v}_k}$, where $\tilde{v}_i \in N_S$ and $\opnorm{v_i - \tilde{v}_i} < \gamma$ for all $i \in [n]$. Let $a$ be an arbitrary vector. By abusing the notation let $V$ (resp. $\tilde{V}$) also denote the matrix with $v_1,\ldots, v_k$ (resp. $\tilde{v}_1, \ldots, \tilde{v}_k$) as columns. We have
\begin{align*}
    d(a, V) = \opnorm{a - V\T{V}a}\quad \text{and}\quad d(a, \tilde{V}) = \opnorm{a - \tilde{V}\tilde{V}^+ a}
\end{align*}
and therefore
$
    |d(a, V) - d(a, \tilde{V})| \le \opnorm{\tilde{V}\tilde{V}^+ - V\T{V}}\opnorm{a}.
$
If $\gamma \le 1/4\sqrt{k}$, we can show that $\opnorm{V\T{V} - \tilde{V}\tilde{V}^+} \le 4\sqrt{k}\gamma$ and therefore have that for any $a$, $|d(a, V) - d(a, \tilde{V})| \le \sqrt{k}\gamma\opnorm{a}$. Hence,
\begin{align*}
    \linf{d_V - d_{\tilde{V}}} \le \max_i |d(a_i, V) - d(a_i, \tilde{V})| \le 4\sqrt{k}\gamma\max_i \opnorm{a_i} = 4\sqrt{k}\gamma\|A\|_{\infty, 2}.
\end{align*}
Overall, this implies that for any arbitrary $k$ dimensional subspace $V$ in the span of rows of $A_S$, there is a $k$ dimensional subspace $\tilde{V}$ spanned by some $k$ vectors in the net $N_S$ satisfying
\begin{align*}
	\linf{d_V - d_{\tilde{V}}} \le 4\sqrt{k}\gamma\|A\|_{\infty,2}.
\end{align*}
As $d_V \in \R^n$, we have $\lp{d_V - d_{\tilde{V}}} \le n^{1/p}\linf{d_V - d_{\tilde{V}}} \le 4\sqrt{k}\gamma n^{1/p}\|A\|_{\infty, 2}$. Now, let
\begin{equation*}
	\calV_S := \setbuilder{\tilde{V} = \text{span}(\tilde{v}_1,\ldots,\tilde{v}_k)}{\tilde{v}_i \in N_S}.
\end{equation*}
We have $|\calV_S| \le |N_S|^k \le \exp(O(km\log 1/\gamma))$ since $|N_s| \le \exp(O(m\log 1/\gamma))$. As there are $\binom{n}{m}$ choices for $S$, the total number of subspaces in the set $\cup_{S \in \binom{[n]}{m}}\calV_S$ is upper bounded by $\exp(m\log n + km\log 1/\gamma)$. Using Lemma~\ref{lma:exponential}, using a union bound over all $\exp(m \log n + km\log 1/\gamma)$ choices of $\tilde{V}$, we have that with probability $\ge 99/100$, for all $\tilde{V} \in \cup_{\binom{[n]}{m}} V_S$,
\begin{align*}
	\linf{\bD \cdot d_{\tilde{V}}} \ge \frac{\lp{d_{\tilde{V}}}}{(\log 100 + m\log n + km\log 1/\gamma)^{1/p}}.
\end{align*}
Using Lemma~\ref{lma:exponential} again, we also have that $\max_i |\bD_i| \le C_3n^{1/p}$ for a large enough constant $C_3$ with probability $\ge 99/100$. Condition on both these events. We have that for any $k$ dimensional subspace $V$ in the span of any set of $m$ rows of $A$,
\begin{align*}
	\linf{\bD \cdot d_V} &\ge \linf{\bD \cdot d_{\tilde{V}}} - \linf{\bD \cdot (d_V - d_{\tilde{V}})}\\
		&\ge \frac{\lp{d_{\tilde{V}}}}{(\log 100 + m\log n + km\log 1/\gamma)^{1/p}} - C_1n^{1/p}\linf{d_V - d_{\tilde{V}}}\\
		&\ge \frac{\lp{d_{{V}}}}{(\log 100 + m\log n + km\log 1/\gamma)^{1/p}} - \frac{4\sqrt{k}n^{1/p}\gamma \|A\|_{\infty,2}}{(\log 100 + m\log n + km\log 1/\gamma)^{1/p}}\\
		&\quad - 4C_1n^{1/p}\sqrt{k}\gamma\|A\|_{\infty, 2}.
\end{align*}
For any $V$, we have that $\lp{d_V} \ge \opnorm{d_V}/\sqrt{n} \ge \frnorm{A - [A]_k}/\sqrt{n}$ using the fact that $V$ is a $k$ dimensional subspace. Hence, if $\gamma \le \poly(\frnorm{A-[A]_k}/\|A\|_{\infty,2}, 1/n)$, then
\begin{align*}
	\linf{\bD \cdot d_V} \ge \frac{\lp{d_V}}{2(\log 100 + m\log n + km\log 1/\gamma)^{1/p}}.
\end{align*}
Now, $\gamma$ can be taken as $\poly(1/(n\kappa))$ so that
\begin{align*}
    \linf{\bD \cdot d_{V}} \ge \frac{\lp{d_V}}{C(\log 100 + m \log n + km\log(n\kappa))^{1/p}}
\end{align*}
for all subspaces $V$ that are in the span of any subset of $m$ rows of $A$.
\end{proof}

If $V^*$ is the optimal solution for the $\ell_p$ subspace approximation problem, we can also condition on the event that $\linf{\bD \cdot d_{V^*}} \le C\lp{d_{V^*}}$ for a large enough constant $C$.

We can now argue that if $S$ is the subset of rows selected by Algorithm~\ref{alg:efficient} when run on the matrix $\bD A$, if $\hat{V}$ is an approximate solution for the $\ell_{\infty}$ subspace approximation problem on the points $(\bD A)_S$, then $\hat{V}$ is also a good solution for the $\ell_p$ subspace approximation problem of $A$.
\begin{theorem}
    Let $\bD$ be an $n \times n$ random matrix with each diagonal entry being an independent copy of $\ceil{\be^{-1/p}}$ where $\be$ is a standard exponential random variable. If $S$ is the subset selected by Algorithm~\ref{alg:efficient} when run on the rows of the matrix $\bD \cdot A$ and if $\hat{V}$ is a $\beta$ approximate solution to the problem $\min_{k\text{-dim }V}\|(\bD A)_S (I-\Proj_V)\|_{\infty, 2}$, then with probability $\ge 9/10$,
    \begin{align*}
        \frac{\|A(I-\Proj_{\hat{V}})\|_{p,2}}{\min_{k\text{-dim }V}\|A(I-\Proj_{V})\|_{p,2}} \le \beta \cdot O(k^{1/2 + 2/p}\log^{1 + 3/p}n\kappa).
    \end{align*}
\end{theorem}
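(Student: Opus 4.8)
The plan is to sandwich the quantity $\linf{\bD d_{\hat V}}$ between an upper bound obtained from the $\ell_\infty$ coreset guarantee of Algorithm~\ref{alg:efficient} run on $\bD A$, and a lower bound obtained from the net argument of Lemma~\ref{lma:net-argument}. Write $\Delta_p \coloneqq \min_{k\text{-dim }V}\|A(I-\Proj_V)\|_{p,2} = \lp{d_{V^*}}$ for an optimal $\ell_p$ subspace $V^*$, and note the basic identity $d((\bD A)_i, V) = \bD_i\, d(a_i, V) = (\bD d_V)_i$, valid because $\bD$ is a nonnegative diagonal matrix; consequently $\|(\bD A)_S(I-\Proj_V)\|_{\infty,2} = \max_{i\in S}(\bD d_V)_i$ and $\|\bD A(I-\Proj_V)\|_{\infty,2} = \linf{\bD d_V}$ for every subspace $V$.

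First I would reduce to the case $\hat V \subseteq \text{rowspace}(A_S) = \text{rowspace}((\bD A)_S)$ (the latter equality using $\bD_i \ge 1$): projecting each basis vector of $\hat V$ onto $\text{rowspace}(A_S)$ cannot increase any $d((\bD A)_i, \cdot)$ with $i \in S$ (since those rows already lie in that subspace), and padding back up to dimension $k$ with further vectors of that row space only decreases distances, so the resulting subspace is still a $\beta$-approximate solution. Since Theorem~\ref{thm:first-version} applied to $\bD A$ gives $|S| \le m \coloneqq O(k\log^2 n\kappa)$ — here I also use that row-scaling changes the online rank-$k$ condition number by at most a $\poly(n)$ factor under the conditioning $1 \le \bD_i \le O(n^{1/p})$ below, so $\log(n\kappa(\bD A)) = O(\log n\kappa)$ — the subspace $\hat V$ lies in the span of at most $m$ rows of $A$. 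I would then condition on two events: (i) the good event of Lemma~\ref{lma:net-argument}, which holds with probability $\ge 98/100$ and forces $1 \le \bD_i \le O(n^{1/p})$ for all $i$; and (ii) $\linf{\bD d_{V^*}} \le C\lp{d_{V^*}}$ for an absolute constant $C$, which holds with probability $\ge 99/100$ by Lemma~\ref{lma:exponential} together with $\bD_i = \ceil{\be_i^{-1/p}} \le \be_i^{-1/p}+1$. A union bound leaves probability $\ge 9/10$.

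Next I would carry out the two bounds. Since $\hat V$ is a $\beta$-approximate minimizer of $\|(\bD A)_S(I-\Proj_V)\|_{\infty,2}$, using $V^*$ as a feasible competitor gives $\max_{i\in S}(\bD d_{\hat V})_i \le \beta\max_{i\in S}(\bD d_{V^*})_i \le \beta\linf{\bD d_{V^*}} \le \beta C\,\Delta_p$; then the coreset distortion bound of Theorem~\ref{thm:first-version} for $\bD A$ upgrades this to $\linf{\bD d_{\hat V}} = \max_{i\in[n]}(\bD d_{\hat V})_i \le O(\sqrt k\log n\kappa)\cdot\max_{i\in S}(\bD d_{\hat V})_i \le \beta\cdot O(\sqrt k\log n\kappa)\,\Delta_p$. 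In the other direction, because $\hat V$ is in the span of at most $m$ rows of $A$, Lemma~\ref{lma:net-argument} gives $\linf{\bD d_{\hat V}} \ge \lp{d_{\hat V}}/O\big((\log 100 + m\log n + km\log n\kappa)^{1/p}\big)$, and substituting $m = O(k\log^2 n\kappa)$ makes the bracket $O(k^2\log^3 n\kappa)$, so the denominator is $O(k^{2/p}\log^{3/p}n\kappa)$. Combining the two displays, $\|A(I-\Proj_{\hat V})\|_{p,2} = \lp{d_{\hat V}} \le O(k^{2/p}\log^{3/p}n\kappa)\cdot\linf{\bD d_{\hat V}} \le \beta\cdot O(k^{1/2+2/p}\log^{1+3/p}n\kappa)\,\Delta_p$, which is the claimed bound.

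The main obstacle is that $\hat V$ is \emph{not} a fixed subspace: it depends on the random matrix $\bD$ both through the selected coreset $S$ and through the optimization, so one cannot simply apply the single-vector statement of Lemma~\ref{lma:exponential} to relate $\lp{d_{\hat V}}$ to $\linf{\bD d_{\hat V}}$ — the subspace would have to be fixed before $\bD$ is drawn. This is precisely why the uniform, net-based bound of Lemma~\ref{lma:net-argument} over \emph{all} $k$-dimensional subspaces spanned by at most $m$ rows of $A$ is needed, and it is also where the extra $k^{2/p}\log^{3/p}n\kappa$ factor comes from (as opposed to an $O(1)$ factor for a single fixed subspace). Everything else — the identity $d((\bD A)_i, V) = \bD_i d(a_i, V)$, the reduction to $\hat V$ lying in the coreset's span, and the condition-number comparison between $\bD A$ and $A$ — is routine.
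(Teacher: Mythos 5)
Your proof is correct and essentially mirrors the paper's argument: you condition on the same three events (the $O(k\log^2 n\kappa)$ coreset-size bound for $\bD A$, the uniform net lower bound of Lemma~\ref{lma:net-argument} over subspaces spanned by at most $m$ rows of $A$, and the single-subspace upper bound $\linf{\bD\, d_{V^*}} \lesssim \lp{d_{V^*}}$), restrict $\hat V$ to $\text{rowspace}(A_S)$, and run the identical chain of inequalities through $\linf{\bD\, d_{\hat V}}$. You are in fact slightly more explicit than the paper about two bookkeeping points it elides, namely that $\text{rowspace}((\bD A)_S) = \text{rowspace}(A_S)$ since $\bD_i \ge 1$, and that the online rank-$k$ condition number of $\bD A$ is within an $O(n^{1/p})$ factor of that of $A$ under the conditioning.
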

\begin{proof}
    Let 
\begin{align*}
	V^* = \argmin_{k\text{-dim subspaces}\ V}\lp{d_V}.
\end{align*}
Condition on the event that $\linf{\bD^{1/p} \cdot d_{V^*}}\le C_1\lp{d_{V^*}}$ for a large enough constant $C_1$. The event holds with probability $\ge 99/100$ by Lemma~\ref{lma:exponential}. Finally, by a union bound, we have all the following events hold simultaneously with probability $\ge 9/10$:
\begin{enumerate}
	\item Algorithm~\ref{alg:efficient}, when run on the rows of the matrix $\bD \cdot A$, selects at most $m = O(k \cdot (\log n\kappa)^2)$ rows.
	\item For any $k$ dimensional subspace $V$ contained in the span of any at most $m$ rows of $A$,
	\begin{align*}
		\linf{\bD \cdot d_V} \ge \frac{\lp{d_V}}{C_2k^{2/p}\log^{3/p}n\kappa}.
	\end{align*}
	\item If $V^*$ is the optimal subspace that minimizes the $\ell_p$ norm of the distance vector to a $k$ dimensional subspace, then
	\begin{align*}
		\linf{\bD \cdot d_{V^*}} \le C_1\lp{d_{V^*}}.
	\end{align*}
\end{enumerate}
Conditioned on the above events, let $S \subseteq [n]$ be the coreset computed for the matrix $\bD \cdot A$ by Algorithm~\ref{alg:efficient}. From Theorem~\ref{thm:first-version}, we have that for any rank $k$ projection matrix $P$, 
\begin{align*}
	\|{(\bD A)_S(I-P)}\|_{\infty,2} \le \|(\bD A)(I-P)\|_{\infty,2} \le C\sqrt{k}(\log n\kappa)\|{(\bD A)_S(I-P)}\|_{\infty,2}.
\end{align*}
Let $\hat{V}$ be a $k$ dimensional subspace such that
\begin{align*}
   \min_{k\text{-dim }V}\|{(\bD A)_S(I-\Proj_{\hat{V}})}\|_{\infty,2} \beta \cdot \min_{k\text{-dim }V}\|{(\bD A)_S(I-\Proj_V)}\|_{\infty,2}
\end{align*}
Without loss of generality, we can assume that $\hat{V}$ is contained in the rowspace of $(\bD \cdot A)_S$ and hence the row space of $A_S$. Therefore,
\begin{align*}
	\|A(I-\Proj_{\hat V})\|_{p,2} &= \lp{d_{\hat{V}}}\\
	&\le C_2k^{2/p}\log^{3/p}(n\kappa)\linf{\bD \cdot  d_{\hat{V}}}\\
	&= C_2k^{2/p}\log^{3/p}(n\kappa)\|(\bD \cdot A)(I-\Proj_{\hat V})\|_{\infty,2}\\
	&\le C_2 \cdot C \cdot  k^{2/p+1/2}\log^{1+3/p}(n\kappa)\|(\bD A)_S(I-\Proj_{\hat V})\|_{\infty,2}\\
	&\le \beta \cdot C_2 \cdot C \cdot  k^{2/p+1/2}\log^{1+3/p}(n\kappa)\|(\bD A)_S(I-\Proj_{V^*})\|_{\infty,2}\\
	&= \beta \cdot C_2 \cdot C \cdot  k^{2/p+1/2}\log^{1+3/p}(n\kappa)\|\bD \cdot d_{V^*}\|_{\infty,2}\\
	&\le \beta \cdot C_1 \cdot C_2 \cdot C \cdot k^{2/p + 1/2}\log^{1+3/p} (n\kappa)\lp{d_{V^*}}.
\end{align*}
Thus, $\hat{V}$ is an $O(\beta \cdot k^{2/p+1/2}\log^{1+3/p}(n\kappa))$ approximate solution for the $\ell_p$ low rank approximation problem over the matrix $A$.
\end{proof}

\section{Applications to Other Geometric Streaming Problems}
Given a matrix $A$, suppose that the rows of $A$ are close to a $k$-dimensional subspace in the following sense: $\Delta \coloneqq \min_{k\text{-dim }V}\max_i d(a_i, V)$ is small.
We now show that if $S$ is the subset of rows selected by Algorithm~\ref{alg:efficient}, then for any vector $x$, $\linf{Ax}$ can be approximated using $\linf{A_S x}$. Fix any unit vector $x$. Let $i$ be the index such that $\linf{Ax} = |\la a_i, x\ra|$. If $i \in S$, we clearly have $\linf{Ax} = \linf{A_S x}$ and we  are done. If $i \notin S$, we obtain that
\begin{align*}
    \max_x \frac{|\la a_i, x\ra|^2}{\opnorm{A_{S < i}x}^2 + \frnorm{A_{S < i} - [A_{S < i}]_k}^2/k} \le \frac{1}{1+1/k}
\end{align*}
which implies
\begin{align*}
    \linf{Ax}^2 = |\la a_i, x\ra|^2 &\le \opnorm{A_{S < i}x}^2 + \frac{\frnorm{A_{S < i} - [A_{S < i}]_k}^2}{k}\\
    &\le \opnorm{A_S x}^2 + \frac{\frnorm{A_S - [A_S]_k}^2}{k}.
\end{align*}
Let $V^*$ be the optimal solution for rank-$k$ $\ell_{\infty}$ subspace approximation of $A$.
We then have,
$
    \linf{Ax}^2 \le \opnorm{A_S x}^2 + {\frnorm{A_S(I-\Proj_{V^*})}^2}/{k} \le \opnorm{A_S x}^2 + {|S|\Delta^2}/{k}.
$
Using $|S| = O(k \log^2 n\kappa)$, we get the following lemma.
\begin{lemma}
If $S$ is the subset of rows selected by Algorithm~\ref{alg:efficient}, for any $k$-dimensional subspace $U$ and any unit vector $x$,
   \begin{small}
    \begin{align*}
    \frac{\opnorm{A_S x}}{C\sqrt{k}\log n\kappa}\le \linf{A_S x} \le \linf{Ax} \le \opnorm{A_S x} + C\Delta\log n\kappa.
\end{align*}
\end{small}
Additionally, as $\opnorm{A_S x} \le \sqrt{|S|}\linf{A_S x}$, we also have
\begin{small}
    \begin{align*}
    \linf{A_S x} \le \linf{Ax} \le (C\sqrt{k}\log n\kappa)\linf{A_S x} + C\Delta\log n\kappa.
\end{align*}
\end{small}
\label{lma:linf-Ax-estimation}
\end{lemma}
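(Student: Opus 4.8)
The plan is to make rigorous the chain of inequalities sketched immediately above the statement. Fix a unit vector $x$ and let $i \in [n]$ attain $\linf{Ax} = |\la a_i, x\ra|$. If $i \in S$ we are done, since then $\linf{Ax} = |\la a_i, x\ra| \le \linf{A_S x} \le \linf{Ax}$; so assume $i \notin S$. The crucial step is to convert the event ``row $i$ was not selected by Algorithm~\ref{alg:efficient}'' into a bound that holds for \emph{every} unit $x$. When $a_i$ is processed, the algorithm holds $\lambda = \frnorm{A_{S_{<i}} - [A_{S_{<i}}]_k}^2/k$ with $S_{<i} = S \cap [i-1]$, and not selecting $i$ means $\T{a_i}(\T{A_{S_{<i}}}A_{S_{<i}} + \lambda I)^+ a_i < 1/(1+1/k)$. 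I would invoke the generalized Rayleigh-quotient identity $\T{a}(\T{B}B + \lambda I)^+ a = \max_{y \ne 0}\tfrac{|\la a, y\ra|^2}{\opnorm{By}^2 + \lambda\opnorm{y}^2}$ with $B = A_{S_{<i}}$ and $y = x$; combined with monotonicity in the row set (both $\opnorm{A_T x}^2 = \sum_{j \in T}\la a_j, x\ra^2$ and $\frnorm{A_T - [A_T]_k}^2$ are nondecreasing as $T$ grows from $S_{<i}$ to $S$ — for the latter, restrict an optimal rank-$k$ approximant of $A_S$ to the rows $S_{<i}$), this gives
\begin{align*}
   \linf{Ax}^2 = |\la a_i, x\ra|^2 \;\le\; \opnorm{A_{S_{<i}}x}^2 + \frnorm{A_{S_{<i}} - [A_{S_{<i}}]_k}^2/k \;\le\; \opnorm{A_S x}^2 + \frnorm{A_S - [A_S]_k}^2/k.
\end{align*}

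Next I would bound the Frobenius tail by $\Delta$. Let $V^*$ be an optimal rank-$k$ $\ell_\infty$ subspace for $A$, so $\max_j d(a_j, V^*) = \Delta$. Since $A_S\Proj_{V^*}$ has rank at most $k$, optimality of $[A_S]_k$ gives $\frnorm{A_S - [A_S]_k}^2 \le \frnorm{A_S(I - \Proj_{V^*})}^2 = \sum_{j \in S} d(a_j, V^*)^2 \le |S|\Delta^2$. Substituting $|S| = O(k\log^2 n\kappa)$ from Theorem~\ref{thm:first-version} yields $\linf{Ax}^2 \le \opnorm{A_S x}^2 + O(\log^2 n\kappa)\,\Delta^2$, and the elementary inequality $\sqrt{a^2+b^2}\le a+b$ for $a,b \ge 0$ gives the rightmost bound $\linf{Ax} \le \opnorm{A_S x} + C\Delta\log n\kappa$. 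The leftmost bound is $\opnorm{A_S x}^2 = \sum_{j\in S}\la a_j,x\ra^2 \le |S|\linf{A_S x}^2$, i.e. $\opnorm{A_S x} \le \sqrt{|S|}\linf{A_S x} = O(\sqrt{k}\log n\kappa)\linf{A_S x}$, and $\linf{A_S x} \le \linf{Ax}$ is trivial. The ``Additionally'' line then follows by inserting this same estimate $\opnorm{A_S x}\le\sqrt{|S|}\linf{A_S x}$ into the rightmost bound.

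I expect the one point needing genuine care to be the Rayleigh-quotient identity when $\lambda = 0$, i.e. when $\operatorname{rank}(A_{S_{<i}}) \le k$: here $i\notin S$ forces $a_i \in \operatorname{rowspace}(A_{S_{<i}})$ via the first branch of the algorithm, and one must check that $\T{a_i}(\T{A_{S_{<i}}}A_{S_{<i}})^+ a_i$ still dominates $|\la a_i, x\ra|^2/\opnorm{A_{S_{<i}}x}^2$ for all unit $x$, not merely those in the row space (the component of $x$ orthogonal to the row space contributes nothing to the numerator since $a_i$ lies in the row space, and nothing to $\opnorm{A_{S_{<i}}x}^2$, so the ratio is unchanged by projecting $x$ onto the row space). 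The remaining steps are routine monotonicity and norm bookkeeping. (The ``$k$-dimensional subspace $U$'' appearing in the statement is unused and can be dropped.)
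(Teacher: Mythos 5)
Your argument coincides with the paper's own (inline) derivation in the passage directly preceding the lemma: case-split on whether the maximizing row index $i$ lies in $S$; when $i \notin S$, use the Rayleigh-quotient form of the online ridge leverage score to get $|\la a_i,x\ra|^2 \le \opnorm{A_{S<i}x}^2 + \frnorm{A_{S<i}-[A_{S<i}]_k}^2/k$; push up to $S$ by monotonicity; replace the Frobenius tail of $A_S$ by $\frnorm{A_S(I-\Proj_{V^*})}^2 \le |S|\Delta^2$; and substitute $|S| = O(k\log^2 n\kappa)$. Your additional attention to the $\lambda=0$ branch (where $i\notin S$ forces $a_i \in \operatorname{rowspace}(A_{S_{<i}})$, and the supremum in the Rayleigh quotient is unchanged by projecting $x$ onto that row space) correctly fills a step the paper leaves implicit; the rest is the same bookkeeping, so the proposal is correct and essentially identical in approach.
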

\vspace{-2em}
\textbf{Width Estimation.} Given a point set $a_1,\ldots,a_n \in \R^d$, the width of the point set in the direction $x \in \R^d$, for a unit vector $x$ is defined as 
$
    w(x) \coloneqq \max_i \la a_i, x\ra - \min_i \la a_i, x\ra.
$
Using a coreset for estimating $\linf{Ax}$, \cite{wy22} gives an $O(\sqrt{d\log n})$ approximation to the width estimation problem. Using Lemma~\ref{lma:linf-Ax-estimation}, we show that we get better approximations when $\Delta$ is small. 

Note that
$
    w(x) = \max_i \la a_i - a_1, x\ra - \min_i \la a_i - a_1, x\ra.
$
Now, $\max_i \la a_i - a_1, x\ra \ge \la 0, x\ra = 0$ and $\min_i \la a_i - a_1, x\ra \le \la 0, x\ra \le 0$ which implies that
$
    \linf{(A - a_1)x} \le w(x) \le 2\linf{(A - a_1)x}. 
$

Let $\kappa'$ be the online rank-$k$ condition number of $A-a_1$. If $S$ is the subset selected by the algorithm when run on the rows $0 = a_1 - a_1, a_2 - a_1, \ldots, a_n - a_1$, then from Lemma~\ref{lma:linf-Ax-estimation}, we have
$
    \linf{(A-a_1)_S x} \le \linf{(A-a_1)x} \le w(x)
$
and also that 
$
    w(x) \le 2\linf{(A-a_1) x} \le 2C\sqrt{k}\log (n\kappa')\linf{(A-a_1)_S x} + 2C\Delta\log (n\kappa'). 
$
Thus, $w'(x) \coloneqq \linf{(A-a_1)_Sx}$ satisfies
\begin{align*}
    {w(x)}/{2C\sqrt{k}\log (n\kappa')} - {\Delta}/{\sqrt{k}} \le w'(x) \le w(x)
\end{align*}
for a large enough constant $C$. When $\Delta$ is very small, for the interesting directions where width is large enough, we obtain a better multiplicative error of $O(\sqrt{k}\log n\kappa')$ as compared to $O(\sqrt{d\log n})$ achieved by the algorithm of \cite{wy22}. Notice that we do not contradict the lower bounds of \citet{agarwal2015streaming} for width estimation because of the additive error that we allow.

\textbf{L\"owner-John Ellipsoid.}
Given a symmetric convex body, the L\"owner-John ellipsoid is defined to be the ellipsoid of minimum volume that encloses the convex body. We consider the case when the convex body is defined as $K = \setbuilder{x}{\linf{Ax} \le 1}$ where the streaming algorithm sees the rows of matrix $A$ one after the other. \citet{wy22} show that their coreset can be used to compute an ellipsoid $E'$ such that $E' \subseteq K \subseteq O(\sqrt{d\log n})E'$.

When $k \ll d$, Algorithm~\ref{alg:efficient} selects $\ll d$ number of rows and does not have the full $d$-dimensional \emph{view} of the point set and hence can not compute an ellipsoid that satisfies the above multiplicative definition if the points span $\R^d$. Thus, we consider the set $K \cap B(0,1)$ and give an algorithm that computes an unbounded ellipsoid $E'$ such that $E' \cap B(0,1) \subseteq K \cap B(0,1)  \subseteq (\alpha E') \cap B(0,1)$.

By Lemma~\ref{lma:linf-Ax-estimation}, we have that if $\linf{Ax} \le 1$ and $\opnorm{x} = 1$, then $\opnorm{A_S x} \le C\sqrt{k}\log n\kappa$ and if $\opnorm{A_S x} \le 1  - C\Delta\log n\kappa$ and $\opnorm{x} \le 1$, then $\linf{Ax} \le 1$. Now assuming $\Delta  < 1/(C\log n\kappa)$, define 
$
    E' = \setbuilder{x}{\opnorm{A_S x} \le 1 - (C\log n\kappa)\Delta}.
$

From the above, we have that if $x \in E' \cap B(0,1)$, then $x \in K \cap B(0,1)$. Additionally if $x \in K \cap B(0,1)$, then $\opnorm{A_S x} \le C\sqrt{k}\log n\kappa$ and therefore $x \in \frac{C\sqrt{k}\log n\kappa}{1 - (C\Delta\log n\kappa)}E' \cap B(0,1)$. Hence,
    \begin{align*}
    E' \cap B(0,1) \subseteq K \cap B(0,1) \subseteq \frac{C\sqrt{k}\log n\kappa}{1 - (C\Delta\log n\kappa)}E' \cap B(0,1).
\end{align*}

\vspace{-2em}
\section{Experiments}
We implement our coreset construction algorithm (Algorithm~\ref{alg:efficient}) and show that the coreset has a low distortion both for the $\ell_{\infty}$ low rank approximation problem and for width estimation. 
\subsection{\texorpdfstring{$\ell_{\infty}$}{l-infty} low rank approximation}
 We run Algorithm~\ref{alg:efficient} on a synthetic data set and a real world dataset. We construct our synthetic dataset as follows: we pick $n = 40{,}000$, $d = 10{,}000$ and $k = 20$.  We sample an $n \times k$ random matrix $L$ and a $k \times d$ random matrix $R$ each with i.i.d. uniform random variables drawn  from $\set{-100, -99, \ldots, 100}$. We create an $n \times d$ matrix $A \doteq L \cdot R + G$ where $G$ is a noise matrix with each entry being an i.i.d. uniform random variable drawn from $\set{-5000,\ldots,5000}$. With parameter $k = 20$, when Algorithm~\ref{alg:efficient} is run on the matrix $A$, the coreset $A_S$ computed by the algorithm has only $28$ rows. To measure the \emph{quality} of the coreset, we consider the following candidate subspaces: we define $V_i$ to be the at most $i$-dimensional subspace formed by the first $i$ rows of $R$. These are indeed the subspaces for which the rows of $A$ have a \emph{low} distance to. We obtain that
\begin{align*}
    1 \le \max_{i \in [20]} \frac{\|A(I - \Proj_{V_i})\|_{\infty, 2}}{\|A_S(I-\Proj_{V_i})\|_{\infty, 2}} \le 1.3433
\end{align*}
which shows that the $\ell_{\infty}$ cost of the interesting subspaces estimated using the coreset is not too small compared to the actual $\ell_{\infty}$ cost of the subspace. Another important requirement is that we do not underestimate the cost of uninteresting subspaces by a lot. To see this, we generate random subspaces of $k = 20$ dimensions and observe that $\|A(I-\Proj_{V})\|_{\infty, 2}/\|A_S(I-\Proj_{V})\|_{\infty, 2} \le 1.05$ with high probability when $V$ is drawn at random. This can be explained by the fact that random subspaces are so bad in that $\|A(I-\Proj_V)\|_{\infty, 2} \approx \|A\|_{\infty, 2}$ since a random subspace does not capture a large part of the row of $A$ with the largest norm. We see that when $V$ is a random matrix, $\|A(I-\Proj_V)\|_{\infty, 2}/\|A_S(I-\Proj_V)\|_{\infty, 2} = \|A\|_{\infty, 2}/\|A_S\|_{\infty}$ and since all the rows of $A$ have similar norms, we get that $\|A(I-\Proj_V)\|_{\infty, 2}/\|A_S(I-\Proj_V)\|_{\infty, 2} \approx 1$.

For the real world dataset, we consider a grayscale image \cite{chessboard} of dimensions $1836 \times 3264$ and treat the image as a matrix $A$ of the same dimensions. We observe that a rank-$150$ approximation of the image computed using the SVD is very close to the original image (with some artifacts) and therefore set $k=150$ to be the parameter for which we want to solve the $\ell_\infty$ low rank approximation problem. We run the coreset construction algorithm on $A$ and obtain a coreset $A_S$ with $312$ rows. Note that the number of rows in the coreset is $\approx 17\%$ of the original matrix. Again, to measure the quality of the coreset, we consider subspace $V_i$ defined to be the top $i$-dimensional right singular subspace of $A$ and measure $\|A(I-\Proj_{V_i})\|_{\infty, 2}/\|A_S(I-\Proj_{V_i})\|$. We obtain $\max_{i \in [k]}\|A(I-\Proj_{V_i})\|_{\infty, 2}/\|A_S(I-\Proj_{V_i})\|_{\infty, 2} \le 1.09$ and hence the coreset gives very accurate cost estimates for these interesting subspaces. We repeat the same experiment on a different grayscale image \cite{galaxy} of dimensions $4690 \times 6000$ and use $k = 200$. We obtain a coreset $A_S$ with $382$ rows and for $V_i$ defined in the same way as before, $\max_i \|A(I-\Proj_{V_i})\|_{\infty, 2}/\|A_S(I-\Proj_{V_i})\|_{\infty, 2} \le 1.12$.


\subsection{Width Estimation}
Towards width estimation, Lemma~\ref{lma:linf-Ax-estimation} shows that if $A_S$ is the coreset computed by Algorithm~\ref{alg:efficient}, then for any unit vector, $\linf{A x}$ can be approximated up to a multiplicative/additive error. We again consider synthetic/real-world datasets and use linear programs to obtain an upper bound on $\linf{Ax}/\linf{A_S x}$ for $x \in \text{rowspace}(A_S)$. We note that when the rows of $A$ are close to a $k$-dimensional subspace, then $A_S$ computed using Algorithm~\ref{alg:efficient} spans a subspace close to this $k$-dimensional subspace by Theorem~\ref{thm:first-version}. Hence, all the \emph{important} directions are already in $\text{rowspace}(A_S)$ and bounding $\linf{Ax}/\linf{A_S x}$ for $x \in \text{rowspace}(A_S)$ verifies that the distortion in the important directions is not large.

We construct a synthetic dataset $A = L \cdot R + G$ in a similar way to the previous section with $n=40{,}000$, $d = 10{,}000$ and $k=20$. To avoid numerical issues when solving linear programs, we now choose the coefficients of the matrices $L$ and $R$ to be i.i.d. uniform random variables drawn from $\set{-10,\ldots,10}$ and the coefficients of $G$ to be i.i.d. uniform random variables drawn from $\set{-50,\ldots,50}$. The coreset $A_S$ constructed by Algorithm~\ref{alg:efficient} for the matrix $A$ has $29$ rows and by solving $n$ linear programs, we find that $\max_{x \in \text{rowspace}(A_S)}\linf{Ax}/\linf{A_S x} \le 4.8$. 

We also perform the same experiment on the images from the previous section and find that $\linf{Ax}/\linf{A_S x} \le 1.005$ for all $x \in \text{rowspace}(A_S)$ for the first image and $
\linf{Ax}/\linf{A_S x} \le 1.03$ for all ${x \in \text{rowspace}(A_S)}$ for the second image. For real-world datasets, the coreset computed is very accurate in approximating $\linf{Ax}$ for all the interesting directions $x$. This can be explained by the fact that the value of $k$ we picked is large and the noise at that value of $k$ is small enough that many directions are \emph{covered} by the coreset and hence the coreset has a small error.
\section*{Acknowledgments}
Part of this work was done while Praneeth Kacham and David P. Woodruff were visiting Google Research. Praneeth and David were also supported in part by a Simons Investigator Award and NSF Grant No. CCF-2335412.


\bibliographystyle{plainnat}
\bibliography{main}
\clearpage
\appendix
\section{Omitted Proofs from Section~\ref{sec:linf-LRA}}
\subsection{Proof of Lemma~\ref{lma:large-implies-adds}}
Assume that there is a $k$-dimensional subspace $V$ such that $d(a_{t+1}, V)^2 > \sum_{i \in S_t}d(a_i, V)^2$ where $S_t = S \cap [t]$ is the set of rows selected by the algorithm after processing the rows $a_1, \ldots, a_t$. 

If $\sum_{i\in S_t}d(a_i, V)^2 = 0$, then $\text{rank}(A_{S_t}) \le k$ and $\text{rowspace}(A_{S_t}) \subseteq V$. Since $d(a_{t+1}, V) > 0$, we have $a_{t+1} \notin V$ which implies $a_{t+1} \notin \text{rowspace}(A_{S_t})$ and therefore the algorithm adds $t+1$ to the set $S$.

Now, suppose $\sum_{i \in S_t}d(a_i, V)^2 > 0$. Let $\Proj_V$ be the orthogonal projection matrix onto the subspace $V$ and define 
\begin{align*}
    x^* \coloneqq \frac{(I-\Proj_V)a_{t+1}}{\opnorm{(I-\Proj_V)a_{t+1}}}.
\end{align*}
Using the fact that $(I-\Proj_V)$ is also a projection matrix, we obtain
\begin{align*}
    |\la a_{t+1}, x^*\ra|^2 = \frac{(\T{a_{t+1}}(I-\Proj_V)a_{t+1})^2}{\opnorm{(I-\Proj_V)a_{t+1}}^2} = \frac{\opnorm{(I-\Proj_V)a_{t+1}}^4}{\opnorm{(I-\Proj_V)a_{t+1}}^2} = \opnorm{(I-\Proj_V)a_{t+1}}^2 = d(a_{t+1}, V)^2.
\end{align*}
We also have
\begin{align*}
    \opnorm{A_{S_t}x^*}^2 = \frac{\opnorm{A_{S_t}(I-\Proj_V)a_{t+1}}^2}{\opnorm{(I-\Proj_V)a_{t+1}}^2} \le \frac{\frnorm{A_{S_t}(I-\Proj_V)}^2\opnorm{(I-\Proj_V)a_{t+1}}^2}{\opnorm{(I-\Proj_V)a_{t+1}}^2} \le \frnorm{A_{S_t}(I-\Proj_V)}^2 = \sum_{i \in S_t}d(a_i, V)^2.
\end{align*}
Additionally, when processing the row $a_{t+1}$, the value of $\lambda$ used by the algorithm is $\frnorm{A_{S_t} - [A_{S_t}]_k}^2/k < \frnorm{A_{S_t}(I-\Proj_V)}^2/k$ since the subspace $V$ has a dimension $k$. Now, we consider two cases:
\begin{itemize}
    \item \textbf{Case 1:} $\lambda = 0$. In this case, we have $\text{rank}(A_{S_t}) \le k$. There are again two cases. If $a_{t+1} \notin \text{rowspace}(A_{S_t})$, then the algorithm adds $t+1$ to the set $S$ and we are done. 
    
    If $a_{t+1} \in \text{rowspace}(A_{S_t})$, then we can write $a_{t+1} = \T{(A_{S_t})}z$ for some $z$. If $A_{S_t}x^* = 0$, then we get $\la x^*, a_{t+1}\ra = \T{(x^*)}\T{(A_{S_t})}z = \la z, A_{S_t}x^*\ra = 0$ which contradicts our assumption that $|\la a_{t+1}, x^*\ra|^2 = d(a_{t+1}, V)^2 > \sum_{i \in S_t}d(a_i, V)^2 > 0$. Thus, $A_{S_t}x^* \ne 0$ and therefore
    \begin{align*}
        \frac{|\la a_{t+1}, x^*\ra|^2}{\opnorm{A_{S_t} x^*}^2} \ge \frac{d(a_{t+1}, V)^2}{\sum_{i \in S_t}d(a_i, V)^2} > 1.
    \end{align*}
    Finally, since $a_{t+1} \in \text{rowspace}(A_{S_t})$, we obtain $\T{a_{t+1}}(\T{A_{S_t}}A_{S_{t}})^+a_{t+1} > 1$ and therefore the algorithm adds $t+1$ to the set $S$ and we are done.
    \item \textbf{Case 2:} $\lambda \ne 0$. In this case, we have $\text{rank}(A_{S_t}) > k$ and therefore $\sum_{i \in S_t}d(a_i, V)^2 > 0$. Now,
    \begin{align*}
        \frac{{|\la a_{t+1}, x^*\ra|}^2}{\opnorm{A_{S_t}x^*}^2 + \lambda \opnorm{x^*}^2} \ge \frac{d(a_{t+1}, V)^2}{\sum_{i \in S_t}d(a_i, V)^2 + \lambda} \ge \frac{\sum_{i \in S_t}d(a_i, V)^2}{\sum_{i \in S_t}d(a_i, V)^2 + \sum_{i \in S_t}d(a_i, V)^2/k} = \frac{1}{1+1/k}.
    \end{align*}
    From the above inequality, we obtain $\T{(a_{t+1})}(\T{(A_{S_t})}A_{S_t} + \lambda I)^+ a_{t+1} > 1/(1 + 1/k)$ and therefore the algorithm adds $t+1$ to the set $S$ and we are done.
\end{itemize}
\subsection{Proof of Lemma~\ref{lma:sum-of-ridge-leverage-scores}}
Let $i^*$ be the largest index such that $\text{rank}(B_{1:i}) = k$. We note $\text{rank}(B_{1:i^*+1}) = k+1$. We now separate the sum of online rank-$k$ ridge leverage scores as
\begin{align*}
    \sum_{i=1}^n\tau_i^{\OL, k}(B) = \sum_{i=1}^{i^* + 1}\tau_i^{\OL, k}(B) + \sum_{i=i^*+2}^{n}\tau_i^{\OL, k}(B)
\end{align*}
and bound both the terms separately. Let $\text{RI}\footnote{for \textbf{R}ank \textbf{I}ncrease} \subseteq [i^*+1]$ be the set of coordinates $i$ such that $\text{rank}(B_{1:i}) > \text{rank}(B_{1:i+1})$. Note that $|\text{RI}| \le k+1$. By definition of the rank-$k$ ridge leverage scores, we have for all $i \in \text{RI}$, $\tau_i^{\OL, k}(B) = 1$. Now consider an $i < i^* + 1$ and $i \notin \text{RI}$. We have
\begin{align*}
    \tau_i^{\OL, k}(B) = \min(1, \T{b_i}(\T{(B_{1:i-1})}B_{1:i-1})^+b_i).
\end{align*}
We define $\sigma_{\min, \text{RI}} \coloneqq \min_{i \in \text{RI}}\sigma_{\min}(B_{1:i})$ where $\sigma_{\min}(\cdot)$ is used to denote the smallest \emph{nonzero} singular value of the matrix $B$. We note that for all $i \in \text{RI}$, $\opnorm{b_i} \ge \sigma_{\min, \text{RI}}$.

Now consider $i < i^*+1$ and $i \notin \text{RI}$. Note that $b_{i} \in \text{rowspace}(B_{1:i-1})$.
\begin{claim}
For $\sigma_{\min, \text{RI}}$ defined as above, the following hold:
\begin{enumerate}
    \item \begin{align*}
    \T{b_i}(\T{(B_{1:i-1})}(B_{1:i-1}))^+b_i \le 2\cdot\T{b_i}(\T{(B_{1:i-1})}B_{1:i-1} + \sigma_{\min, \text{RI}}^2 \cdot I)^+ b_i.
\end{align*}
    \item 
    \begin{align*}
        \tau_i^{\OL, k}(B) = \min(1, \T{b_i}(\T{(B_{1:i-1})}(B_{1:i-1}))^+b_i) \le 2 \cdot \min(1, \T{b_i}(\T{(B_{1:i-1})}B_{1:i-1} + \sigma_{\min, \text{RI}}^2 \cdot I)^+ b_i).
    \end{align*}
\end{enumerate}
\end{claim}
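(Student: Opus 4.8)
The plan is to reduce both parts of the claim to a single spectral inequality, $\sigma_{\min}(B_{1:i-1}) \ge \sigma_{\min, \text{RI}}$, and then conclude by a termwise comparison of quadratic forms. The starting point is that since $i \notin \text{RI}$, adding $b_i$ does not increase the rank, so $b_i \in \text{rowspace}(B_{1:i-1})$; equivalently $b_i$ lies in the column space of $M \coloneqq \T{(B_{1:i-1})}B_{1:i-1}$. (If $b_i = 0$ both quadratic forms in the claim are $0$ and there is nothing to prove, so assume $b_i \ne 0$.) Diagonalize $M = \sum_j \mu_j u_j \T{u_j}$ with the $u_j$ orthonormal and the $\mu_j > 0$ its \emph{nonzero} eigenvalues, and write $b_i = \sum_j c_j u_j$, which is legitimate precisely because $b_i$ is in the column space of $M$. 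Then
\[
  \T{b_i} M^+ b_i = \sum_j \frac{c_j^2}{\mu_j}, \qquad \T{b_i}(M + \sigma_{\min,\text{RI}}^2 I)^+ b_i = \sum_j \frac{c_j^2}{\mu_j + \sigma_{\min,\text{RI}}^2}.
\]
Hence part~1 follows from the termwise bound $1/\mu_j \le 2/(\mu_j + \sigma_{\min,\text{RI}}^2)$, which is equivalent to $\sigma_{\min,\text{RI}}^2 \le \mu_j$; since every nonzero $\mu_j$ satisfies $\mu_j \ge \sigma_{\min}(B_{1:i-1})^2$ by definition of the smallest nonzero singular value, it suffices to prove $\sigma_{\min}(B_{1:i-1}) \ge \sigma_{\min,\text{RI}}$.

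This spectral lower bound is the step I expect to be the main obstacle, and it is handled by a ``rowspace stabilization'' argument. Let $j_0$ be the largest index in $\text{RI}$ with $j_0 \le i-1$; such an index exists because $\text{rowspace}(B_{1:i-1})$ is nonzero (it contains $b_i \ne 0$), so some earlier row increased the rank from $0$. Between indices $j_0+1$ and $i-1$ no rank increase occurs, so $\rank{B_{1:\cdot}}$ is constant on that range, which forces $\text{rowspace}(B_{1:i-1}) = \text{rowspace}(B_{1:j_0}) =: W$. Now
\[
  \T{(B_{1:i-1})}B_{1:i-1} = \T{(B_{1:j_0})}B_{1:j_0} + \sum_{l=j_0+1}^{i-1} b_l \T{b_l} \succeq \T{(B_{1:j_0})}B_{1:j_0},
\]
and both sides vanish on $W^{\perp}$, so restricting the quadratic forms to $W$ gives
\[
  \sigma_{\min}(B_{1:i-1})^2 = \lambda_{\min}\bigl(\T{(B_{1:i-1})}B_{1:i-1}\big|_W\bigr) \ge \lambda_{\min}\bigl(\T{(B_{1:j_0})}B_{1:j_0}\big|_W\bigr) = \sigma_{\min}(B_{1:j_0})^2 \ge \sigma_{\min,\text{RI}}^2,
\]
which is exactly what we needed. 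Combining this with the termwise comparison above establishes part~1.

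Part~2 then follows immediately from part~1 and the elementary observation that if $x \le 2y$ then $\min(1,x) \le \min(1,2y) \le 2\min(1,y)$: the first inequality is monotonicity of $t \mapsto \min(1,t)$, and the second is checked by cases on whether $y \le 1/2$ (if $y \le 1/2$ then $\min(1,2y) = 2y = 2\min(1,y)$; if $y > 1/2$ then $2\min(1,y) > 1 \ge \min(1,2y)$). Applying this with $x = \T{b_i}(\T{(B_{1:i-1})}B_{1:i-1})^+ b_i$ and $y = \T{b_i}(\T{(B_{1:i-1})}B_{1:i-1} + \sigma_{\min,\text{RI}}^2 I)^+ b_i$, and recalling the definition $\tau_i^{\OL,k}(B) = \min(1, x)$ for $i < i^*+1$, $i \notin \text{RI}$, gives the stated bound.
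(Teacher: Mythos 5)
Your proof is correct and follows essentially the same strategy as the paper: expand $b_i$ in the eigenbasis of $\T{(B_{1:i-1})}B_{1:i-1}$ (legitimate because $i \notin \text{RI}$ places $b_i$ in the rowspace), compare the quadratic forms eigenvalue-by-eigenvalue using $\mu_j \ge \sigma_{\min,\text{RI}}^2$, and deduce part~2 from part~1 and the elementary $\min(1,\cdot)$ inequality. The one genuine value-add in your write-up is the explicit ``rowspace stabilization'' proof that $\sigma_{\min}(B_{1:i-1}) \ge \sigma_{\min,\text{RI}}$, which the paper dismisses as ``easy to see''; your argument (restrict both $\T{(B_{1:i-1})}B_{1:i-1}$ and $\T{(B_{1:j_0})}B_{1:j_0}$ to their common support $W$ and use the L\"owner comparison there) is correct and fills in that step properly.
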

\begin{proof}
    Let $U\Sigma\T{V}$ be the ``thin'' singular value decomposition of the matrix $B_{i-1}$. It is easy to see that $\sigma_{\min}(B_{i-1}) \ge \sigma_{\min, \text{RI}}$. Since $i \notin \text{RI}$, we have $b_i \in \text{rowspace}(B_{1:i-1})$ and therefore we can write $b_i = V \cdot z$ for some $z$ which implies
    \begin{align*}
         \T{b_i}(\T{(B_{1:i-1})}(B_{1:i-1}))^+b_i = \T{z}\Sigma^{-2}\T{z}.
    \end{align*}
We can also write 
\begin{align*}
    (\T{(B_{1:i-1})}B_{1:i-1} + \sigma_{\min, \text{RI}}^2 \cdot I)^+ = V(\Sigma^2 + \sigma_{\min, \text{RI}}^2 \cdot I)^{-1}\T{V} + \frac{1}{\sigma_{\min, \text{RI}}^2}(I-V\T{V})
\end{align*}
from which we obtain
\begin{align*}
    \T{b_i}(\T{(B_{1:i-1})}B_{1:i-1} + \sigma_{\min, \text{RI}}^2 \cdot I)^+ b_i = \T{z}(\Sigma^2 + \sigma_{\min, \text{RI}}^2 \cdot I)^{-1}z \ge \frac{1}{2} \cdot \T{z}\Sigma^{-2}z = \frac{1}{2}
    \T{b_i}(\T{(B_{1:i-1})}(B_{1:i-1}))^+b_i,
\end{align*}
 where the last inequality follows from the fact that $0 \prec \Sigma^2 + \sigma_{\min, \text{RI}}^2 \cdot I \preceq 2 \cdot \Sigma^2$.

 Note that the second claim directly follows from the first.
\end{proof}
For $i \in \text{RI}$, we prove the following:
\begin{claim}
    For all $i \in \text{RI}$,
    \begin{align*}
        1 = \tau^{\OL, k}_i(B) \le \T{b_i}(\T{(B_{1:i-1})}B_{1:i-1} + \sigma_{\min, \text{RI}}^2 \cdot I)^+ b_i. 
    \end{align*}
\end{claim}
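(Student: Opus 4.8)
The plan is to lower bound the ridge quadratic form on the right-hand side by $1$; the equality $\tau_i^{\OL,k}(B)=1$ for $i\in\text{RI}$ was already observed, and indeed for $i\in\text{RI}$ with $i\le i^*+1$ we have $\text{rank}(B_{1:i-1})\le k$ (so $\lambda_{i-1}=0$) and $b_i\notin\text{rowspace}(B_{1:i-1})$, which is precisely the first branch of the definition of $\tau_i^{\OL,k}$. Since $i\in\text{RI}$ we have $\sigma_{\min}(B_{1:i})>0$ and $\sigma_{\min,\text{RI}}\le\sigma_{\min}(B_{1:i})$, hence $\sigma_{\min,\text{RI}}>0$, so $\T{(B_{1:i-1})}B_{1:i-1}+\sigma_{\min,\text{RI}}^2 I$ is positive definite and its pseudoinverse is its inverse.

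First I would peel off the component of $b_i$ orthogonal to the earlier rows. Let $P$ denote the orthogonal projection onto $\text{rowspace}(B_{1:i-1})$ and set $w:=(I-P)b_i$; then $w\neq 0$ because $b_i\notin\text{rowspace}(B_{1:i-1})$. Using the thin SVD $U\Sigma\T{V}=B_{1:i-1}$ exactly as in the previous claim, so that $P=V\T{V}$, we have
\[
(\T{(B_{1:i-1})}B_{1:i-1}+\sigma_{\min,\text{RI}}^2 I)^{+}=V(\Sigma^2+\sigma_{\min,\text{RI}}^2 I)^{-1}\T{V}+\frac{1}{\sigma_{\min,\text{RI}}^2}(I-V\T{V}).
\]
Both summands are PSD, so dropping the first and using that $I-V\T{V}=I-P$ is a projection gives
\[
\T{b_i}(\T{(B_{1:i-1})}B_{1:i-1}+\sigma_{\min,\text{RI}}^2 I)^{+}b_i\ \ge\ \frac{1}{\sigma_{\min,\text{RI}}^2}\,\opnorm{(I-P)b_i}^2\ =\ \frac{\opnorm{w}^2}{\sigma_{\min,\text{RI}}^2},
\]
so it remains to show $\opnorm{w}\ge\sigma_{\min,\text{RI}}$.

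The key step — the only one needing an idea — is the bound $\opnorm{w}\ge\sigma_{\min}(B_{1:i})$. Since $i\in\text{RI}$, the unit vector $\hat w:=w/\opnorm{w}$ lies in $\text{rowspace}(B_{1:i})$, so by the variational characterization of the smallest nonzero singular value, $\sigma_{\min}(B_{1:i})^2\le\opnorm{B_{1:i}\hat w}^2=\opnorm{w}^{-2}\sum_{j\le i}\la b_j,w\ra^2$. For $j\le i-1$ we have $b_j\in\text{rowspace}(B_{1:i-1})$ while $w\perp\text{rowspace}(B_{1:i-1})$, hence $\la b_j,w\ra=0$; and $\la b_i,w\ra=\la Pb_i+w,w\ra=\opnorm{w}^2$. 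Therefore $\sigma_{\min}(B_{1:i})^2\le\opnorm{w}^4/\opnorm{w}^2=\opnorm{w}^2$. Combining with $\sigma_{\min,\text{RI}}\le\sigma_{\min}(B_{1:i})$ and the previous display yields $\T{b_i}(\T{(B_{1:i-1})}B_{1:i-1}+\sigma_{\min,\text{RI}}^2 I)^{+}b_i\ge\opnorm{w}^2/\sigma_{\min,\text{RI}}^2\ge 1=\tau_i^{\OL,k}(B)$, as claimed.

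The only remaining care is with degenerate cases. When $i=1$, or more generally when $\text{rowspace}(B_{1:i-1})=\{0\}$, we have $P=0$, $w=b_i$, and the statement collapses to $\opnorm{b_i}^2/\sigma_{\min,\text{RI}}^2\ge 1$, which holds since $\sigma_{\min,\text{RI}}\le\sigma_{\min}(B_{1:1})=\opnorm{b_1}$; and if $B$ is the zero matrix then $\text{RI}=\emptyset$ and there is nothing to prove. I do not expect a substantive obstacle: once the orthogonal decomposition $b_i=Pb_i+w$ is set up, the singular-value bound $\sigma_{\min}(B_{1:i})\le\opnorm{w}$ follows immediately from the orthogonality of $w$ to $b_1,\ldots,b_{i-1}$.
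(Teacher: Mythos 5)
Your proof is correct and takes essentially the same route as the paper's: both work with the component $w=b_i^\perp$ of $b_i$ orthogonal to $\text{rowspace}(B_{1:i-1})$ and reduce to showing $\opnorm{w}\ge\sigma_{\min}(B_{1:i})$, using that $B_{1:i}w$ has a single nonzero coordinate $\la b_i,w\ra=\opnorm{w}^2$ while $w$ lies in the rowspace of $B_{1:i}$. The only cosmetic difference is that you expand the pseudoinverse via the thin SVD and drop the PSD curvature term, whereas the paper plugs $x=b_i^\perp$ directly into the variational characterization $\T{b_i}(\T{B_{1:i-1}}B_{1:i-1}+\lambda I)^{+}b_i=\max_x \la b_i,x\ra^2/(\opnorm{B_{1:i-1}x}^2+\lambda\opnorm{x}^2)$; both steps encode the same lower bound.
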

\begin{proof}
    Let $b_i^{\perp}$ be the projection of $b_i$ away from $\text{rowspace}(B_{1:i-1})$. Note that $b_i^{\perp}$ is in the rowspace of $B_{1:i}$ and therefore
    \begin{align*}
        |\la b_i, b_i^{\perp}\ra| = \opnorm{(B_{1:i}) \cdot b_i^{\perp}} \ge \sigma_{\min, \text{RI}} \cdot \opnorm{b_i^{\perp}}
    \end{align*}
    which implies
    \begin{align*}
        \frac{|\la b_i, b_i^{\perp}\ra|^2}{\opnorm{B_{1:i-1} \cdot b_i^{\perp}}^2 + \sigma_{\min, \text{RI}}^2 \opnorm{b_i^{\perp}}^2} \ge \frac{\sigma_{\min, \text{RI}}^2 \opnorm{b_i^{\perp}}^2}{0 + \sigma_{\min, \text{RI}}^2 \opnorm{b_i^{\perp}}} \ge 1.&\qedhere
    \end{align*}
\end{proof}
Thus, for all $i < i^*+1$, we have 
\begin{align*}
    \tau^{\OL, k}_i(B) \le 2 \cdot \min(1, \T{b_i}(\T{(B_{1:i-1})}B_{1:i-1} + \sigma_{\min, \text{RI}}^2 \cdot I)^+ b_i). 
\end{align*}
Hence, it suffices to bound $\sum_{i=1}^{i^*+1}\min(1, \T{b_i}(\T{(B_{1:i-1})}B_{1:i-1} + \sigma_{\min, \text{RI}}^2 \cdot I)^+ b_i)$. By Theorem~2.2 of \cite{cohen2016online}, we can bound this quantity by $O(k \log {\opnorm{B_{1:i^*+1}}}/{\sigma_{\min, \text{RI}}})$. Hence,
\begin{align*}
    \sum_{i=1}^{i^*+1}\tau_i^{\OL, k}(B) = O\left(k \log \frac{\opnorm{B_{1:i^* + 1}}}{\sigma_{\min, \text{RI}}}\right).
\end{align*}
We now want to bound
    \begin{align}
        \sum_{i=i^*+2}^{n}\tau^{\OL, k}_i(B) &= \sum_{i=i^*+2}^n \min(1, \T{b_i}(\T{B_{1:i-1}}B_{1:i-1} + \frac{\frnorm{B_{1:i-1} - [B_{1:i-1}]_k}^2}{k} \cdot I)^{-1}b_i).\label{eqn:term-to-bound-1}
    \end{align}
\citet{braverman2020near} show a bound on the $\sum_{i=1}^n \min(1, \T{b_i}(\T{B_{1:i-1}}B_{1:i-1} + \lambda I)^{-1}b_i)$ where $\lambda = \frnorm{B - [B]_k}^2/k$. The only difference in the above term we want to bound is that, instead of using a fixed $\lambda$ for all the terms as in \cite{braverman2020near}, we require an upper bound when each term has a different multiple of the identity matrix.

We will now state some useful facts, that let us use the upper bounds from \cite{braverman2020near} to bound the term in \eqref{eqn:term-to-bound-1}. Suppose $\alpha$ is such that
$\alpha/2 \le \frnorm{B_{1:i-1} - [B_{1:i-1}]_k}^2/k \le \alpha$. Then, we have from the standard properties of the L\"owner ordering that,
\begin{align*}
    \frac{1}{2}\T{B_{1:i-1}B_{1:i-1}} + \frac{\alpha}{2} \cdot I \preceq \T{B_{1:i-1}}B_{1:i-1} + \frac{\alpha}{2} \cdot I \preceq \T{B_{1:i-1}}B_{1:i-1} + \frac{\frnorm{B_{1:i-1} - [B_{1:i-1}]_k}^2}{k}\preceq \T{B_{1:i-1}}{B_{1:i-1}} + \alpha \cdot I.
\end{align*}
Since all the above matrices are positive definite, assuming $\alpha > 0$, we obtain that
\begin{align*}
    2\left(\T{B_{1:i-1}}B_{1:i-1} + \alpha \cdot I\right)^{-1} \succeq (\T{B_{1:i-1}}B_{1:i-1} + \frac{\frnorm{B_{1:i-1} - [B_{1:i-1}]_k}^2}{k} \cdot I)^{-1} \succeq (\T{B_{1:i-1}}B_{1:i-1} + \alpha \cdot I)^{-1}
\end{align*}
and therefore, 
\begin{align}
    \min(1, \T{b_i}(\T{B_{1:i-1}}B_{1:i-1} + \frac{\frnorm{B_{1:i-1} - [B_{1:i-1}]_k}^2}{k} \cdot I)^{-1}b_i) \le 2 \cdot \min(1, \T{b_i}(\T{B_{1:i-1}}B_{1:i-1} + \alpha \cdot I)^{-1}b_i).\label{eqn:int-upperbound}
\end{align}
We note that $\frnorm{B_{1:i^*+1} - [B_{1:i^*+1}]_k}^2 = \sigma_{\min}(B_{1:i^*+1})^2 \ge \sigma_{\min, \text{RI}}^2$ where we used the fact that the rank of $B_{1:i^*+1}$ is exactly $k + 1$.
For $j=1, \ldots, $ let $i_j$ be the largest $i$ such that
\begin{align*}
    \frac{\frnorm{B_{1:i-1} - [B_{1:i-1}]_k}^2}{k} \le 2^j \cdot \frac{\sigma_{\min, \text{RI}}^2}{k}
\end{align*}
and consider the intervals of integers, $(k+1 = i_0, i_1], (i_1, i_2], (i_2, i_3], \ldots$. We note that there are at most
\begin{align*}
    O\left(\log \frac{\frnorm{B - [B]_k}^2}{\sigma_{\min, \text{RI}}^2}\right) = O\left(\log \frac{\opnorm{B}}{\sigma_{\min, \text{RI}}}\right).
\end{align*}
such non-empty intervals. Now consider an arbitrary interval $(i_j, i_{j+1}]$ and we will bound
\begin{align*}
    \sum_{i \in (i_j, i_{j+1}]}\min(1, \T{b_i}(\T{B_{1:i-1}}B_{1:i-1} + \frac{\frnorm{B_{1:i-1} - [B_{1:i-1}]_k}^2}{k} \cdot I)^{-1}b_i).
\end{align*}
Setting $\alpha = 2^{j+1} \sigma_{\min, \text{RI}}^2/k$ in \eqref{eqn:int-upperbound}, we get
\begin{align*}
        &\sum_{i \in (i_j, i_{j+1}]}\min(1, \T{b_i}(\T{B_{1:i-1}}B_{1:i-1} + \frac{\frnorm{B_{1:i-1} - [B_{1:i-1}]_k}^2}{k} \cdot I)^{-1}b_i)\\
        &\le     2 \cdot \sum_{i \in (i_j, i_{j+1}]}\min(1, \T{b_i}(\T{B_{1:i-1}}B_{1:i-1} + 2^{j+1} \frac{\sigma_{\min, \text{RI}}^2}{k} \cdot I)^{-1}b_i)
\end{align*}
and since by definition $\frac{\frnorm{B_{1:i_{j+1}-1} -[B_{1:i_{j+1} - 1}]_k}^2}{k} \le 2^{j+1} \frac{\sigma^2_{\min, \text{RI}}}{k}$, we further obtain
\begin{align*}
    &\sum_{i \in (i_j, i_{j+1}]}\min(1, \T{b_i}(\T{B_{1:i-1}}B_{1:i-1} + \frac{\frnorm{B_{1:i-1} - [B_{1:i-1}]_k}^2}{k} \cdot I)^{-1}b_i)\\
    &\le \sum_{i \in (i_j, i_{j+1}]}\min(1, \T{m_i}(\T{B_{1:i-1}}B_{1:i-1} + \frac{\frnorm{B_{1:i_{j+1}-1} - [B_{1:i_{j+1}-1}]_k}^2}{k} \cdot I)^{-1}m_i).
\end{align*}
We can then finally use Lemma~2.11 of \cite{braverman2020near} to bound the above term by 
\begin{align*}
    k\log\left(1 + \frac{k\opnorm{B_{1:i_{j+1}-1}}^2}{\frnorm{B_{1:i_{j+1}-1} - [B_{1:i_{j+1}-1}]_k}^2}\right) + k + 1 \le k\log(1 + k\opnorm{B}^2/\sigma_{\min, \text{RI}}^2) + k + 1
\end{align*}
where we used the facts that $\frnorm{B_{1:i_{j+1}-1} - [B_{1:i_{j+1}-1}]_k}^2 \ge \frnorm{B_{i^*+1} - [B_{i^*+1}]_k}^2 \ge \sigma^2_{\min, \text{RI}}$ and $\opnorm{B_{1:i_{j+1}-1}}^2 \le \opnorm{B}^2$. Overall, we get that
\begin{align*}
 O(k\log(1 + k\opnorm{B}/\sigma_{\min, \text{RI}})^2) = O(k \log (k \cdot \kappa)^2).
  &\qedhere
\end{align*}
\subsection{Proof of Theorem~\ref{thm:outer-radius}}
\begin{proof}
If $V$ is a $k$-dimensional subspace and $c$ is arbitrary, then the set $V + c$ is defined as a $k$-dimensional flat. Recall that the outer $d-k$ radius of a point set $\set{a_1,\ldots,a_n} \subseteq \R^d$ is defined as
\begin{align*}
    \min_{k\text{-dim flat}\, F}\max_i d(a_i, F).
\end{align*}
Using the fact that flats are translations of $k$ dimensional subspaces, we equivalently have that the outer $d-k$ radius is equal to
\begin{align*}
    \min_{k\text{-dim subspace}\, V}\min_{c \in \R^d} \max_i d(a_i - c, V) = \min_{k\text{-dim subspace}\, V}\min_c\|(A-c)(I-\Proj_V)\|_{\infty, 2}.
\end{align*}
Here we abuse the notation and use $A-c$ to denote the matrix with rows given by $a_i - c$ for $i \in [n]$. Now define a matrix $B \doteq A - a_1$ with $n$ rows given by $0 = a_1 - a_1, a_2 - a_1, a_3 - a_2,\ldots, a_n - a_1$. For any $k$-dimensional subspace $V$ and any $c \in \R^d$, we have
\begin{align*}
    \|B(I - \Proj_V)\|_{\infty, 2} &= \|(A - a_1)(I-\Proj_V)\|_{\infty, 2}= \|(A - c + c - a_1)(I - \Proj_V)\|_{\infty, 2}\\
    &\le \|(A - c)(I- \Proj_V)\|_{\infty, 2} + \opnorm{(I-\Proj_V)(a_1 - c)}\\
    &\le 2\|(A - c)(I-\Proj_V)\|_{\infty, 2}.
\end{align*}
Hence, $\|B(I-\Proj_V)\|_{\infty, 2}\le 2\min_{c}\|(A-c)(I-\Proj_V)\|_{\infty, 2}$.
We also have $\|{B(I-\Proj_V)}\|_{\infty, 2} = \|(A - a_1)(I-\Proj_V)\|_{\infty, 2} \ge \min_c \|(A - c)(I-\Proj_V)\|_{\infty, 2}$. Thus, $\min_V \|B(I-\Proj_V)\|_{\infty, 2}$ is a $2$-approximation for $\min_{k\text{-dim flat}\ F}\max_i d(a_i, F)$ and if $S$ is the set of rows selected by Algorithm~\ref{alg:efficient} when run on the rows of the matrix $B = A - a_1$, then
\begin{align*}
    \min_V \|B_S(I-\Proj_V)\|_{\infty, 2}
\end{align*}
 is a $O(\sqrt{k}\log(n\kappa'))$ approximation for outer $(d-k)$-radius estimation of the point set $\set{a_1,\ldots,a_n}$ where $\kappa'$ is the online rank-$k$ condition number of $A - a_1$.
\end{proof}
\section{Omitted Details about Experiments}
\subsection{Measuring Distortion with in the Subspace}
Given a matrix $A$ and a parameter $k$, Algorithm~\ref{alg:efficient} returns a coreset $S$. In our experiments we measure the maximum distortion defined as $\max_{x \in \text{rowspace}(A_S)}\linf{Ax}/\linf{A_S x}$. Since any vector in the rowspace of $A_S$ can be written as $\T{A_S}y$ for some $y$, we want to measure $\max_y \linf{A\T{A_S}y}/\linf{A_S\T{A_S}y}$. Let the distortion be maximized at $y^*$ and that
\begin{align*}
    \frac{\linf{A\T{A_S}y^*}}{\linf{A_S\T{A_S}y^*}} = \phi \ge 1.
\end{align*}
Further let $i$ be the coordinate such that $\linf{A\T{A_S}y^*} = (A\T{A_S}y^*)_i$. Now for each $j \in [n]$, consider the following linear program:
\begin{align*}
    \min_{(y, t)} &\quad t\\
    \text{s.t.}\ \T{a_j}\T{A_S}y &= 1\\
    A_S\T{A_S}y &\le t \cdot 1\\
    -A_S\T{A_S}y &\le t \cdot 1.
\end{align*}
If $(y_j, t_j)$ is the optimum solution for the above problem, we note that $t_j = \linf{A_S\T{A_S}y_j}$. Since we have $\T{a_j}\T{A_S}y_j = 1$, we have that $\linf{A\T{A_S}y_j} \ge 1$ and therefore we have that $t_j = \linf{A_S\T{A_S}y_j} \ge \linf{A\T{A_S}y_j}/\phi \ge 1/\phi$. Thus for each $j \in [n]$, $1/t_j$ gives a lower bound on the maximum distortion $\phi$.

Now consider the linear program corresponding to $i \in [n]$ is defined above. Consider the vector $y = y^*/(A\T{A_S}y^*)_i$. By definition, we have $\T{a_i}\T{A_S}y = \T{a_i}\T{A_S}y^*/(A\T{A_S}y^*)_i= 1$ and $\linf{A_S\T{A_S}y} = \linf{A_S\T{A_S}y^*}/(A\T{A_S}y^*)_i = \linf{A_S\T{A_S}y^*}/\linf{A\T{A_S}y^*} = 1/\phi$. Hence, $(y, 1/\phi)$ is a feasible solution for the linear program corresponding to index $i$. Since we proved above that $t_j \ge 1/\phi$ for all $j$, we get that $t_i = 1/\phi$ and hence $\max_j 1/t_j = \phi = \max_{x \in \text{rowspace}(A_S)}\linf{A\T{A_S}y}/\linf{A_S\T{A_S}y}$. In our experiments, we solve these linear programs and find the max-distortion within the rowspace of $A_S$.
\subsection{Grayscale Images Used}
We use images from \cite{chessboard} and \cite{galaxy} for our experiments. The compressed versions of the images used are in Figure~\ref{fig:two_images}.
\begin{figure}
\centering
\begin{subfigure}[b]{0.4\textwidth}
\centering
\includegraphics[width=\textwidth]{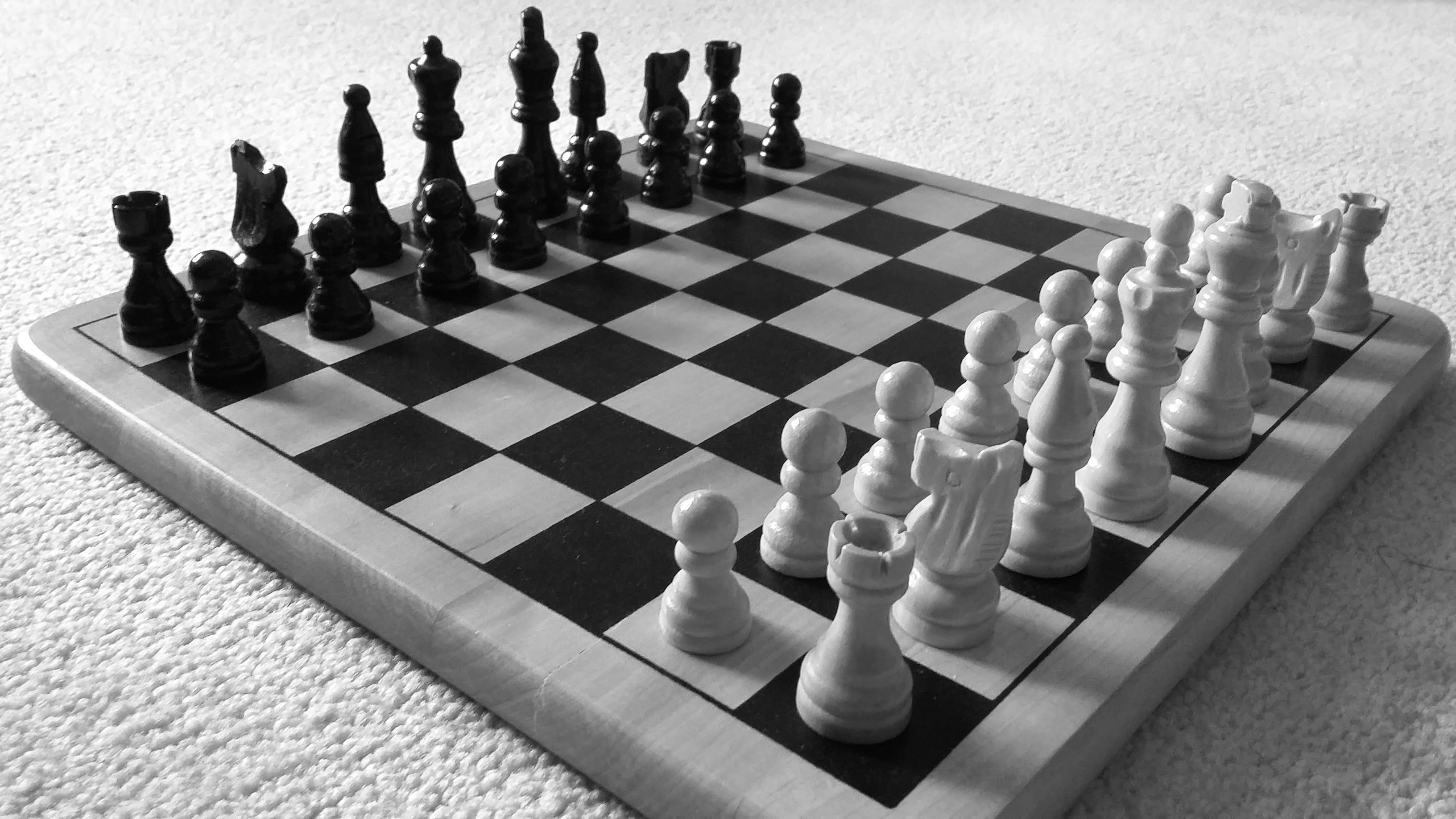}
\caption{Chessboard image from \cite{chessboard}}
\label{fig:image1}
\end{subfigure}
\hfill
\begin{subfigure}[b]{0.4\textwidth}
\centering
\includegraphics[width=\textwidth]{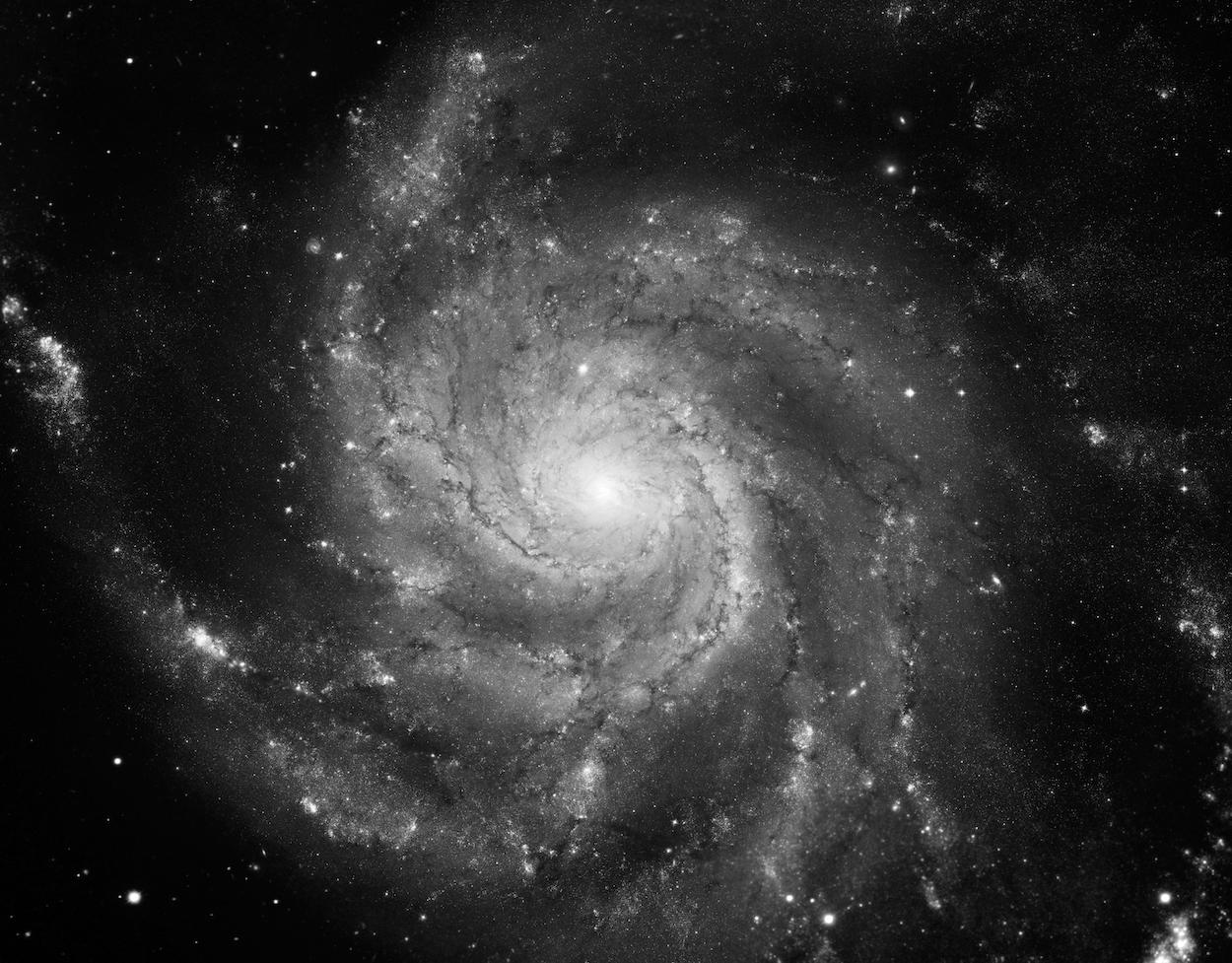}
\caption{Image of Pinwheel galaxy from \cite{galaxy}}
\label{fig:image2}
\end{subfigure}
\caption{Images used for experiments}
\label{fig:two_images}
\end{figure}

\if 0
\section{Simplified Analysis}
Let $\delta > 0$ be arbitrary. Let $A$ be an arbitrary $n \times d$ matrix and define $B$ to be the $(n + k + 1) \times d$ matrix
\begin{align*}
	\begin{bmatrix}
		\delta e_1 \\
		\delta e_2 \\
		\vdots \\
		\delta e_{k+1}\\
		A
	\end{bmatrix}.
\end{align*}
Now, we note that for any subspace $V$, we have
\begin{align*}
	d(B, V) = \|B(I-\Proj_V)\|_{\infty, 2} &= \max\left(\max_{i=1, \ldots, k+1}d(\delta \cdot e_i, V), \max_{i=1,\ldots, n}d(a_i, V)\right)\\
	&= \max\left(\max_{i=1, \ldots, k+1}d(\delta \cdot e_i, V), \|A(I-\Proj_V)\|_{\infty, 2}\right).
\end{align*}
Since $d(\delta \cdot e_i, V) \le \opnorm{\delta \cdot e_i} = \delta$ for any $V$, we get that for any subspace $V$,
\begin{align*}
	d(A, V) \le d(B, V) \le d(A, V) + \delta.
\end{align*}
Let $B_S$ be a sub-matrix of $B$ such that for all $k$-dimensional subspaces $V$, 
\begin{align*}
	d(B_S, V) \le d(B, V) \le \alpha \cdot d(B_S, V).
\end{align*}
Now, let $A_S$ be the matrix formed by the rows of $B_S$ that are also rows of the matrix $A$. Then we have for all $k$-dimensional subspaces $V$ that
\begin{align*}
	d(A_S, V) \le d(A, V) \le \alpha \cdot d(B_S, V) \le \alpha \cdot d(A_S, V) + \alpha \cdot \delta.
\end{align*}
Now consider running Algorithm~\ref{alg:efficient} on the matrix $B$. We note that the algorithm definitely selects the first $k+1$ rows of the matrix $B$ to be in the coreset and selects some of the remaining rows to form the coreset as in the algorithm. If $S$ is the subset of rows selected by the algorithm, we have for all $k$-dimensional subspaces $V$ that
\begin{align*}
	d(B_S, V) \le d(B, V) \le \sqrt{|S|} \cdot d(B_S, V).
\end{align*}
We would now like to bound the number of rows in the set $S$ by using the following observations:
\begin{enumerate}
	\item The first $k+1$ rows of the matrix $B_S$ are guaranteed to be $\delta \cdot e_1, \ldots, \delta \cdot e_{k+1}$.
	\item The online rank-$k$ ridge leverage scores of each of the rows of the matrix $B_S$ is at least $1/(1 + 1/k)$.
\end{enumerate}
We now prove the following lemma.
\begin{lemma}
    Let $k$ be a given rank parameter. Let $M$ be an arbitrary $s \times d$ matrix such that $\|M\|_{\infty, 2} \le R$ and let for $i=1, \ldots, k+1$, the $i$-th row of $M$, denoted with $m_i = \delta \cdot e_i$ where $e_i$ is the $d$-dimensional vector with $1$ in the $i$-th coordinate and 0 elsewhere. Then
    \begin{align*}
        \sum_{i=1}^{s}\tau_i^{\OL, k}(M) \le Ck \cdot (\log(ksR/\delta))^2
    \end{align*}
    for a universal constant $C$.
\end{lemma}
First we shall show that this lemma is sufficient to bound the number of rows in the matrix $B_S$. Since each row in the matrix $B_S$ has an online ridge leverage score of at least $1 / (1 + 1/k)$, we have
\begin{align*}
    \frac{s}{1 + 1/k} \le \sum_{i=1}^s \tau^{\OL, k}_i(B_S) \le Ck \cdot (\log(ksR/\delta))^2
\end{align*}
and therefore the number of rows $s$ of the matrix $B_S$ satisfies
\begin{align*}
    s \le 2Ck \cdot (\log (ksR/\delta))^2.
\end{align*}
We can then verify that, $s = O(k \log(kR/\delta)^2(\log k + \log\log(kR/\delta))^2)$.
\begin{proof}
    Since $\delta \cdot e_i \notin \text{rowspace}(\delta \cdot e_1, \ldots, \delta \cdot e_{i-1})$, by definition of the rank-$k$ online ridge leverage scores, we note that $\tau_1^{\OL, k}(M) = \cdots = \tau_{k+1}^{\OL, k}(M) = 1$. Now, $\frnorm{M_{1:k+1} - [M_{1:k+1}]}^2 / k = \delta^2/k$ and $\frnorm{M_{1:s} - [M_{1:s}]_k}^2/k \le sR^2/k$ and we want to bound
    \begin{align}
        \sum_{i=k+2}^{s}\tau^{\OL, k}_i(M) &= \sum_{i=k+2}^s \min(1, \T{m_i}(\T{M_{1:i-1}}M_{1:i-1} + \frac{\frnorm{M_{1:i-1} - [M_{1:i-1}]_k}^2}{k} \cdot I)^{-1}m_i).\label{eqn:term-to-bound}
    \end{align}
\citet{braverman2020near} show a bound on the $\sum_{i=1}^s \min(1, \T{m_i}(\T{M_{1:i-1}}M_{1:i-1} + \lambda I)^{-1}m_i)$ where $\lambda = \frnorm{M - [M]_k}^2/k$. The only difference in the above term we want to bound is that, instead of using a fixed $\lambda$ for all the terms in \cite{braverman2020near}, we require an upper bound when each term has a different multiple of the identity matrix.

We will now state some useful facts, that let us use the upper bounds from \cite{braverman2020near} to bound the term in \eqref{eqn:term-to-bound}. Suppose $\alpha$ is such that
$\alpha/2 \le \frnorm{M_{1:i-1} - [M_{1:i-1}]_k}^2/k \le \alpha$. Then, we have from the standard properties of the L\"owner ordering that,
\begin{align*}
    \frac{1}{2}\T{M_{1:i-1}M_{1:i-1}} + \frac{\alpha}{2} \cdot I \preceq \T{M_{1:i-1}}M_{1:i-1} + \frac{\alpha}{2} \cdot I \preceq \T{M_{1:i-1}}M_{1:i-1} + \frac{\frnorm{M_{1:i-1} - [M_{1:i-1}]_k}^2}{k}\preceq \T{M_{1:i-1}}{M_{1:i-1}} + \alpha \cdot I.
\end{align*}
Since all the above matrices are positive definite, assuming $\alpha > 0$, we obtain that
\begin{align*}
    2\left(\T{M_{1:i-1}}M_{1:i-1} + \alpha \cdot I\right)^{-1} \succeq (\T{M_{1:i-1}}M_{1:i-1} + \frac{\frnorm{M_{1:i-1} - [M_{1:i-1}]_k}^2}{k} \cdot I)^{-1} \succeq (\T{M_{1:i-1}}M_{1:i-1} + \alpha \cdot I)^{-1}
\end{align*}
and therefore, 
\begin{align}
    \min(1, \T{m_i}(\T{M_{1:i-1}}M_{1:i-1} + \frac{\frnorm{M_{1:i-1} - [M_{1:i-1}]_k}^2}{k} \cdot I)^{-1}m_i) \le 2 \cdot \min(1, \T{m_i}(\T{M_{1:i-1}}M_{1:i-1} + \alpha \cdot I)^{-1}m_i).
\end{align}
For $j=1, \ldots, $ let $i_j$ be the largest $i$ such that
\begin{align*}
    \frac{\frnorm{M_{1:i-1} - [M_{1:i-1}]_k}^2}{k} \le 2^j \cdot \frac{\delta^2}{k}
\end{align*}
and consider the intervals of integers, $(k+1 = i_0, i_1], (i_1, i_2], (i_2, i_3], \ldots$. We note that there are at most
\begin{align*}
    O\left(\log \frac{\frnorm{M_{s-1} - [M_{s-1}]_k}^2}{\delta^2}\right) = O(\log sR^2/\delta^2)
\end{align*}
such non-empty intervals. Now consider an arbitrary interval $(i_j, i_{j+1}]$ and we will now bound
\begin{align*}
    \sum_{i \in (i_j, i_{j+1}]}\min(1, \T{m_i}(\T{M_{1:i-1}}M_{1:i-1} + \frac{\frnorm{M_{1:i-1} - [M_{1:i-1}]_k}^2}{k} \cdot I)^{-1}m_i).
\end{align*}
Setting $\alpha = 2^{j+1} \delta^2/k$ in \eqref{eqn:int-upperbound}, we get
\begin{align*}
        &\sum_{i \in (i_j, i_{j+1}]}\min(1, \T{m_i}(\T{M_{1:i-1}}M_{1:i-1} + \frac{\frnorm{M_{1:i-1} - [M_{1:i-1}]_k}^2}{k} \cdot I)^{-1}m_i)\\
        &\le     \sum_{i \in (i_j, i_{j+1}]}\min(1, \T{m_i}(\T{M_{1:i-1}}M_{1:i-1} + 2^{j+1} \frac{\delta^2}{k} \cdot I)^{-1}m_i)
\end{align*}
and since by definition $\frac{\frnorm{M_{1:i_{j+1}-1} -[M_{1:i_{j+1} - 1}]_k}^2}{k} \le 2^{j} \frac{\delta^2}{k}$, we further obtain
\begin{align*}
    &\sum_{i \in (i_j, i_{j+1}]}\min(1, \T{m_i}(\T{M_{1:i-1}}M_{1:i-1} + \frac{\frnorm{M_{1:i-1} - [M_{1:i-1}]_k}^2}{k} \cdot I)^{-1}m_i)\\
    &\le \sum_{i \in (i_j, i_{j+1}]}\min(1, \T{m_i}(\T{M_{1:i-1}}M_{1:i-1} + \frac{\frnorm{M_{1:i_{j+1}-1} - [M_{1:i_{j+1}-1}]_k}^2}{k} \cdot I)^{-1}m_i).
\end{align*}
We can then finally use Lemma~2.11 of \cite{braverman2020near} to bound the above term by 
\begin{align*}
    k\log\left(1 + \frac{k\opnorm{M_{1:i_{j+1}-1}}^2}{\frnorm{M_{1:i_{j+1}-1} - [M_{1:i_{j+1}-1}]_k}^2}\right) + k + 1 \le k\log(1 + ksR^2/\delta^2) + k + 1
\end{align*}
where we used the facts that $\frnorm{M_{1:i_{j+1}-1} - [M_{1:i_{j+1}-1}]_k}^2 \ge \frnorm{M_{k+1} - [M_{k+1}]_k}^2 \ge \delta^2$ by definition of $M$ and $\opnorm{M_{1:i_{j+1}-1}}^2 \le \frnorm{M_{1:i_{j+1}-1}}^2 \le sR^2$ since each row of $M$ has a euclidean norm at most $R$. Overall, we get that
\begin{align*}
    \sum_{i=1}^s \tau_i^{\OL, k}(M) \le O(k \log(sR^2/\delta^2) + k\log(1+ksR^2/\delta^2)\log(sR^2/\delta^2)) \le O(k \cdot (\log(ksR^2/\delta^2))^2). &\qedhere
\end{align*}
\end{proof}
\fi
\end{document}